%% file: main.tex
\def\fullversionbuild{}
\documentclass[sigconf]{acmart}

\makeatletter
\@ACM@balancefalse
\makeatother
\usepackage{mathtools}
\usepackage[ruled,vlined,linesnumbered]{algorithm2e}
\usepackage{xspace}
\usepackage{booktabs}
\usepackage{multirow}
\usepackage{enumitem}

\usepackage{tikz}
\usetikzlibrary{positioning,arrows.meta,fit,backgrounds}

\acmConference[WWW '27]{Proceedings of the ACM Web Conference 2027}{2027}{ACM}
\acmBooktitle{Proceedings of the ACM Web Conference 2027 (WWW '27)}
\renewcommand\footnotetextcopyrightpermission[1]{}

\AtBeginDocument{%
  \setlength{\abovedisplayskip}{3pt plus 1pt minus 1pt}%
  \setlength{\belowdisplayskip}{3pt plus 1pt minus 1pt}%
  \setlength{\abovedisplayshortskip}{2pt plus 1pt minus 1pt}%
  \setlength{\belowdisplayshortskip}{2pt plus 1pt minus 1pt}%
}

\DontPrintSemicolon
\SetKwInOut{Parameter}{Parameter}

\input{macros}

\AtBeginDocument{%
  \newtheorem{assumption}[theorem]{Assumption}%
  \theoremstyle{remark}%
  \newtheorem{remark}[theorem]{Remark}%
}

\begin{document}
\raggedbottom

\title{Geo-Filtered Approximate Nearest-Neighbor Search on a Single Proximity Graph}

\author{Zhiqiu Zou}
\affiliation{%
  \institution{Beijing Institute of Technology}
  \city{Beijing}
  \country{China}
}
\email{fall@bit.edu.cn}

\author{Rong-Hua Li}
\affiliation{%
  \institution{Beijing Institute of Technology}
  \city{Beijing}
  \country{China}
}
\email{lironghuabit@126.com}

\author{Qiangqiang Dai}
\affiliation{%
  \institution{Beijing Institute of Technology}
  \city{Beijing}
  \country{China}
}
\email{qiangd66@gmail.com}

\author{Guoren Wang}
\affiliation{%
  \institution{Beijing Institute of Technology}
  \city{Beijing}
  \country{China}
}
\email{wanggrbit@126.com}

\renewcommand{\shortauthors}{Zhiqiu Zou, Rong-Hua Li, Qiangqiang Dai, and Guoren Wang}

\input{sections/00_abstract}

\begin{CCSXML}
<ccs2012>
   <concept>
       <concept_id>10002951.10003317.10003338</concept_id>
       <concept_desc>Information systems~Retrieval models and ranking</concept_desc>
       <concept_significance>500</concept_significance>
   </concept>
   <concept>
       <concept_id>10002951.10002952.10003190.10003192</concept_id>
       <concept_desc>Information systems~Spatial-temporal systems</concept_desc>
       <concept_significance>500</concept_significance>
   </concept>
</ccs2012>
\end{CCSXML}

\ccsdesc[500]{Information systems~Retrieval models and ranking}
\ccsdesc[500]{Information systems~Spatial-temporal systems}

\keywords{Approximate nearest neighbor search, geo-filtered vector search, proximity graph index, spatial-vector retrieval}

\maketitle

\input{sections/01_introduction}
\input{sections/02_preliminaries}
\input{sections/03_recovery_graph}
\input{sections/04_reach_index}
\input{sections/05_experiments}
\input{sections/06_related_work}
\input{sections/07_conclusion}

\clearpage
\setlength{\bibsep}{0pt}
\bibliographystyle{ACM-Reference-Format}
\bibliography{references}

\ifwithappendix
  \clearpage
  \appendix
  \input{appendix/theory_proofs}
  \input{appendix/additional_details}
\fi

\end{document}

%% file: macros.tex
\newif\ifwithappendix
\ifdefined\fullversionbuild
  \withappendixtrue
\else
  \withappendixfalse
\fi

\newcommand{\apx}[1]{\ifwithappendix Appendix~\ref{#1}\else the full version\fi}

\newcommand{\fullversionlink}{%
  \ifwithappendix\else
    \footnote{\url{https://anonymous.4open.science/r/GeoSearch-new-1730/full-version.pdf}}%
  \fi}

\newcommand{\method}{REACH\xspace}
\newcommand{\problem}{GFANN\xspace}
\newcommand{\dataset}{\mathcal{D}}
\newcommand{\Aset}{\mathcal{A}}
\newcommand{\RecG}{\mathcal{H}}
\newcommand{\rowu}{\operatorname{row}}

\newcommand{\rk}{\ell}

\newcommand{\dvec}{d}
\newcommand{\dgeo}{\Delta}

%% file: sections/00_abstract.tex
\begin{abstract}
Geo-filtered approximate nearest-neighbor (\problem) search retrieves the objects inside a query-specified geographic disk whose feature vectors are closest to a query vector, a core operation in location-aware Web search. A disk filter on a standard proximity graph invalidates out-of-disk routing vertices; on selective disks, these removals break greedy paths and recall drops sharply. Existing methods scan all in-disk objects, over-fetch unfiltered candidates, or maintain separate sub-indexes across spatial partitions, trading latency for index overhead. We propose \method, a single-graph index based on \emph{row-local active-order invariance}: if every vertex stores its outgoing neighbors in a fixed-width row sorted by vector distance, any disk filter masks out-of-disk entries while surviving in-disk (\emph{active}) neighbors remain vector-sorted. Offline, each vertex merges globally closest vector candidates with a nearest-first geographic stream into one row; online, beam search evaluates vector distances only for the first $\rk$ active entries of each visited row. Because co-located objects share the same distance to every query center, every disk admits a coordinate group in full or not at all, decomposing active connectivity into intra-group and inter-coordinate conditions. Under a spatial marked-site model with repeated coordinates, we prove that polylogarithmic out-degree guarantees strong connectivity across all disks and constant-factor distance reduction per hop on compact disks, while a cut-rank threshold characterizes the minimum scan budget $\rk$. Across four datasets up to $10^7$ objects, one \method index per dataset outperforms SOTA methods, achieving $2.49\times$ higher geometric-mean QPS than the strongest baseline at $0.95$ Recall@10 on selective queries, while building $3.8\times$--$11.9\times$ faster and using $37\%$--$45\%$ less space than a SOTA spatial-graph index.
\end{abstract}

%% file: sections/01_introduction.tex
\section{Introduction}
\label{sec:introduction}

Approximate nearest-neighbor (ANN) search in high-dimensional vector spaces is a core operation in Web search, recommendation, and retrieval-augmented generation~\cite{Pan_2024,https://doi.org/10.48550/arxiv.2005.11401}. Proximity graphs such as HNSW~\cite{hnsw}, NSG~\cite{nsg}, and DiskANN~\cite{diskann} achieve the strongest trade-off between recall and latency~\cite{DBLP:journals/pvldb/WangXY021} by linking each object to proximate vector neighbors and routing queries through best-first beam search. In location-aware Web services, vector similarity is often paired with a spatial range filter: a user may search for geotagged photos visually similar to a query image within a kilometer of a landmark, or retrieve points of interest (POIs) matching a text query within walking distance. We call this problem \emph{geo-filtered approximate nearest-neighbor} (\problem) search: given a query vector and a query-centered geographic disk, return the $k$ in-disk objects closest to the query in vector space.

\begin{figure*}[t]
  \centering
  \includegraphics[width=0.78\textwidth]{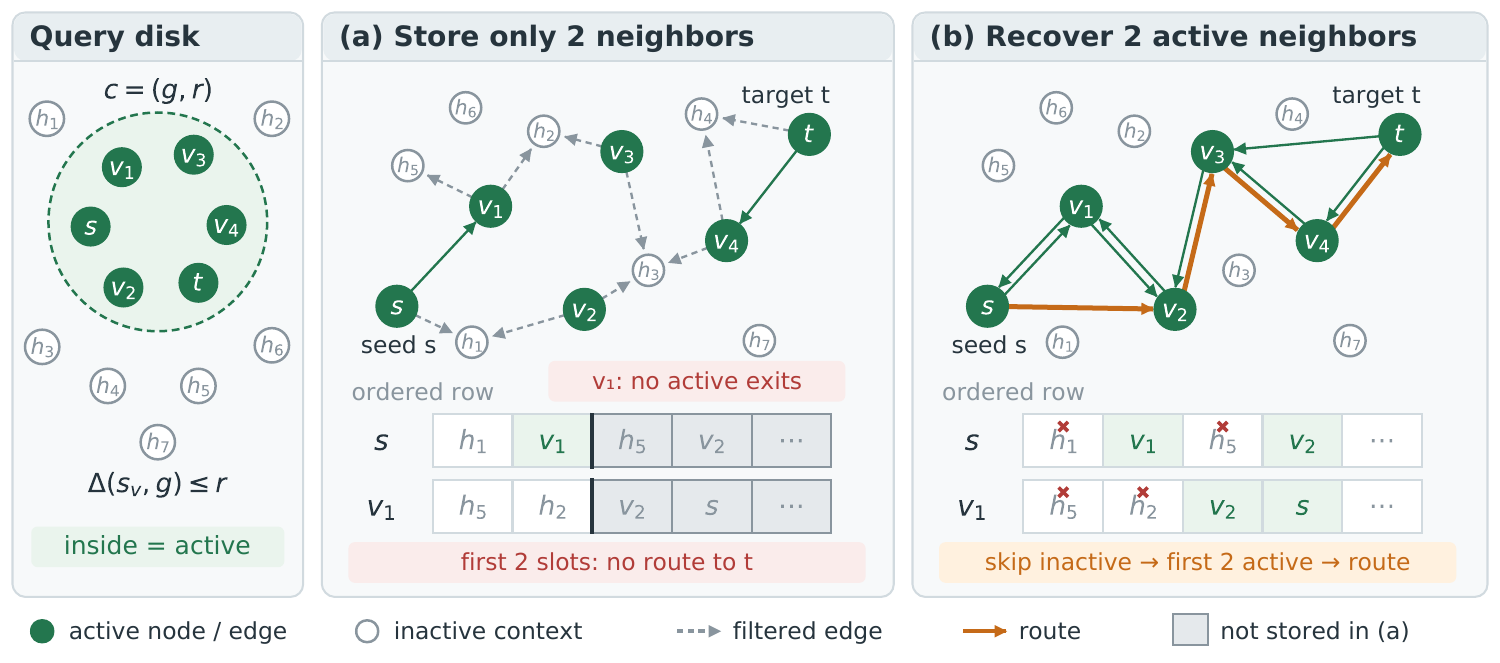}
  \caption{Filter-induced path erosion and active-edge recovery from entry seed $s$ to target $t$ under budget $\rk=2$ (Example~\ref{ex:erosion}). Left: disk $c=(g,r)$ activates $\Aset(c)=\{s,v_1,v_2,v_3,v_4,t\}$ (green) and deactivates $\{h_1,\ldots,h_7\}$ (grey). Cards~(a) and~(b) place these objects in vector space alongside the stored rows of $s$ and $v_1$: \emph{(a)}~storing only two neighbors truncates $\rowu(v_1)$ to $[h_5,h_2]$, leaving $v_1$ with no active exits; \emph{(b)}~scanning wider ordered rows skips inactive entries ($h_1,h_5$ in $\rowu(s)$ and $h_5,h_2$ in $\rowu(v_1)$) to recover the first $\rk=2$ active neighbors ($[v_1,v_2]$ for $s$, $[v_2,s]$ for $v_1$), forming $\RecG_2(c)$ and restoring route $s\to v_2\to v_3\to v_4\to t$ (orange).}
  \Description{A separate geographic disk selects six labeled active vertices. Filtering two-entry rows traps the route at v1; recovering two active entries restores s, v2, v3, v4, t.}
  \label{fig:recovery-graph}
\end{figure*}

Applying a geographic disk filter to a standard proximity graph breaks its navigation structure. Because semantic similarity is independent of physical location, a vertex's closest vector neighbors are scattered globally; restricting traversal to a disk removes out-of-disk routing vertices, which severs multi-hop paths and disconnects the in-disk subgraph (\emph{filter-induced path erosion}, Section~\ref{sec:preliminaries}). Existing filtered ANN methods struggle as disk selectivity varies~\cite{Chronis_2025}. Filter-agnostic strategies fail at one extreme: (1)~\emph{pre-filtering} (\emph{Spatial-First}~\cite{chen2021survey}) scans all in-disk objects, which is fast on tiny radii but slows by orders of magnitude on neighborhood- or city-scale disks ($10^2$--$10^6$ objects); (2)~\emph{post-filtering} (\emph{Vector-First}, VBASE~\cite{vbase}) retrieves global vector candidates before spatial filtering, wasting most distance evaluations on selective disks; and (3)~\emph{in-filtering} (ACORN~\cite{acorn}) explores two-hop unfiltered neighborhoods, which still disconnect under selective disks. Predicate-aware indexes encode filters into the graph for discrete labels~\cite{filteredDiskANN,ung,jag,cgif}, workload-tuned sub-indexes~\cite{sieve}, weighted attribute--vector metrics~\cite{nhq,deg}, or 1D numeric intervals~\cite{serf,window,urng,rnsg,Xu_2024}. However, 1D range graphs rely on a global attribute order under which every interval is contiguous; 2D disks admit no such order even on three objects and induce $\Omega(n^2)$ distinct active sets (Section~\ref{sec:no-per-condition}), whereas spatial partitions~\cite{DBLP:journals/vldb/SongYCYXLC25,https://doi.org/10.48550/arxiv.2604.06616} either split queries across cell graphs or duplicate edges across tree levels.

In this paper, we show that a \emph{single} static graph of fixed out-degree can support arbitrary geographic disks through \emph{Ranked Active-Graph Recovery}. Rather than sorting the dataset along a global axis, we use \emph{row-local active-order invariance}: each vertex $u$ stores its $B$ outgoing neighbors in a fixed-width \emph{row} $\rowu(u)$ pre-sorted by vector distance from $u$. Under any disk $c$, out-of-disk entries in $\rowu(u)$ are ignored while surviving in-disk (\emph{active}) entries remain sorted by vector distance. Retaining all active entries defines the complete active projection $\RecG_{\infty}(c)$, which sets the recall ceiling on $c$; scanning $\rowu(u)$ sequentially and taking only the first $\rk$ active entries reconstructs a degree-$\rk$ \emph{recovery graph} $\RecG_{\rk}(c)\subseteq\RecG_{\infty}(c)$ that caps per-step vector evaluations at $\rk$, with minimum connectivity budget given by the \emph{cut-rank threshold} $\rk_{\mathrm{SC}}(c)$.

To keep $\RecG_{\infty}(c)$ connected across all $\Omega(n^2)$ disks on both POI catalogs with mostly unique coordinates and photo archives where thousands of records share a landmark coordinate~\cite{Antoniou_2010}, we group co-located objects into \emph{coordinate fibers}. If a vertex simply linked to its geographically closest objects, a dense landmark would consume its entire neighbor budget on co-located ties. Because all members of a fiber lie at the same distance from any query center, any disk admits a fiber \emph{either in full or not at all}. This property decomposes active connectivity into query-independent \emph{intra-fiber connectivity} and \emph{inter-coordinate quotient connectivity}, both satisfied at polylogarithmic degree under a marked-site model. Guided by this analysis, \method builds one width-$B$ row per vertex offline from global vector candidates and a fiber-aware geographic stream, and answers queries online through an $O(n)$ spatial seed grid and a two-phase \emph{filter-then-score} kernel. Our contributions are:
\begin{itemize}[leftmargin=1.4em,nosep]
  \item \textbf{Theoretical Foundation:} We prove that 2D disks induce $\Omega(n^2)$ active sets (Theorem~\ref{thm:range-complexity}) and admit no contiguous order (Proposition~\ref{prop:no-order}). We then define $\RecG_{\rk}(c)$ and cut rank $\rk_{\mathrm{SC}}(c)$ (Theorem~\ref{thm:cut-budget}), prove fiber closure and route decomposition (Lemma~\ref{lem:fiber-closure}, Theorem~\ref{thm:factorization}), and establish polylogarithmic degree bounds for strong connectivity and compact-disk $\alpha$-shortcut routing (Theorem~\ref{thm:end-to-end}, Corollary~\ref{cor:navigability}).
  \item \textbf{Graph Index and Query Algorithms:} We develop \method's dual-source row assembly (Algorithm~\ref{alg:row-assembly}), which combines global vector candidates with a 3D kd-tree geographic stream, bounds coordinate ties through intra-fiber $\alpha$-pruning, and reuses vector survivors as zero-cost witnesses (Proposition~\ref{prop:assembly-contract}), paired with a two-phase first-$\rk$-active expansion algorithm (Algorithm~\ref{alg:expand}) and an $O(n)$ grid seed directory.
  \item \textbf{Comprehensive Evaluation:} Across four datasets up to $10^7$ objects and six selectivities ($\bar{\sigma}\in[10^{-5},\allowbreak 1]$), one \method index per dataset achieves $2.49\times$ higher geometric-mean QPS than the best per-setting baseline at $0.95$ Recall@10 on selective queries (fastest in $14/16$ selective cells, $6.33\times$ over CubeGraph, $3.56\times$ over Geo-Partition), builds $3.8\times$--$11.9\times$ faster than CubeGraph, and uses $37\%$--$45\%$ less space on three of the four datasets.
\end{itemize}

%% file: sections/02_preliminaries.tex
\section{Preliminaries}
\label{sec:preliminaries}

\noindent\textbf{Data Model and \problem Queries.}
Let $\dataset=\{o_v=(x_v,s_v):v\in V\}$ be a dataset of $n=|V|$ objects with identifiers $V=\{1,\ldots,n\}$. Each object $v\in V$ has a feature vector $x_v\in\mathcal{X}\subseteq\mathbb{R}^{D}$ measured by Euclidean distance $\dvec(x,y)=\lVert x-y\rVert_2$ and a geographic coordinate $s_v=(\lambda_v,\phi_v)\in\mathcal{S}\subseteq\mathbb{S}^2$ measured by great-circle distance $\dgeo(s,t)$ (notation in \apx{app:notation}\fullversionlink). A spatial filter $c=(g,r)$ with center $g\in\mathcal{S}$ and radius $r>0$ defines the closed disk $\Omega_c=\{s\in\mathcal{S}:\dgeo(s,g)\le r\}$ and the \emph{active set} $\Aset(c)=\{v\in V:\dgeo(s_v,g)\le r\}$ of size $m=|\Aset(c)|$ and selectivity $\sigma(c)=m/n$. Given a \problem query $q=(x_q,c_q,k)$ with $k_q=\min\{k,m\}$, the exact answer $T_k(q)\subseteq\Aset(c_q)$ consists of the $k_q$ active objects closest to $x_q$ under $v \prec_q w \iff (\dvec(x_q,x_v),v) <_{\mathrm{lex}} (\dvec(x_q,x_w),w)$.

\noindent\textbf{Proximity Graphs, Beam Search, and Path Erosion.}
\label{sec:path-erosion}
To avoid scanning all $m$ active vectors in $O(mD)$ time, ANN systems build a directed proximity graph $G=(V,E)$ offline~\cite{hnsw,nsg,diskann}: each vertex $u$ collects vector-proximate candidates and applies $\alpha$-witness diversification (discarding $w$ if an earlier neighbor $v$ satisfies $\alpha\,\dvec(x_v,x_w)\le\dvec(x_u,x_w)$) to cap out-degree at $B$~\cite{DBLP:journals/pvldb/WangXY021}, forming $u$'s \emph{row} $\rowu(u)$. At query time, best-first \emph{beam search} maintains a pool of at most $L$ candidates ordered by $\dvec(x_q,\cdot)$ and expands the closest unvisited vertex until no pool candidate is closer than the $k$-th best result ($\mathrm{Recall}@k=|R\cap T_k(q)|/k_q$). Effective search requires \emph{directed reachability} (which sets the recall ceiling) and \emph{distance-contracting edges} (which govern search speed). Under query $q=(x_q,c_q,k)$, skipping out-of-disk vertices restricts traversal to the \emph{active projection} $G[\Aset(c_q)] = (\Aset(c_q), \{(u,v)\in E : u,v\in\Aset(c_q)\})$. When vectors and locations are independent, each global edge lands in $c_q$ with probability $\sigma(c_q)$, leaving only $B\sigma(c_q)$ active edges per vertex in expectation ($0.032$ edges at $\sigma=10^{-3}, B=32$) and disconnecting $G[\Aset(c_q)]$.

\begin{example}
\label{ex:erosion}
In Figure~\ref{fig:recovery-graph}, disk $c=(g,r)$ admits $\Aset(c)=\{s,v_1,v_2,v_3,v_4,t\}$ (green) and excludes $\{h_1,\ldots,h_7\}$ (grey). In card~(a), storing two vector neighbors leaves $\rowu(s)=[h_1,v_1]$ with only $s\to v_1$ and $\rowu(v_1)=[h_5,h_2]$ with no active exits, which traps traversal at $v_1$. In card~(b), wider sorted rows ($\rowu(s)=[h_1,\allowbreak v_1,\allowbreak h_5,\allowbreak v_2,\ldots]$ and $\rowu(v_1)=[h_5,\allowbreak h_2,\allowbreak v_2,\allowbreak s,\ldots]$) recover the first $\rk=2$ active entries ($[v_1,v_2]$ at $s$ and $[v_2,s]$ at $v_1$), restoring the path $s\to v_2\to v_3\to v_4\to t$. To ensure that active entries exist when $\sigma(c)$ is small, each row must combine global vector neighbors with nearby geographic neighbors in one vector-sorted list.
\end{example}

\noindent\textbf{Why 1D Range and Per-Condition Indexes Fail.}
\label{sec:no-per-condition}
Range-filtered graphs~\cite{serf,window,urng,rnsg,Xu_2024} sort all objects into one global sequence $a_1\le\cdots\le a_n$ so that every query interval is contiguous. Two geometric properties rule out this strategy for 2D disks (proof of Theorem~\ref{thm:range-complexity} in \apx{app:range-complexity}):

\begin{theorem}[Disk Range Complexity]
\label{thm:range-complexity}
The family $\mathcal{A}(\dataset)=\{\Aset(c):c=(g,r),\ g\in\mathbb{S}^2,r\ge 0\}$ of distinct disk active sets satisfies $|\mathcal{A}(\dataset)| \le \sum_{i=0}^{4}\binom{n}{i} = O(n^4)$, and reaches $|\mathcal{A}(\dataset)| = \Omega(n^2)$ on worst-case $n$-object collections.
\end{theorem}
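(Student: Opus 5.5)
The upper bound comes from a lifting argument and the lower bound from an explicit construction.

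\textbf{Upper bound.} Identify $\mathbb{S}^2$ with the unit sphere in $\mathbb{R}^3$. The great-circle distance is a monotone function of the chordal distance, $\dgeo(s,g)=2\arcsin(\lVert s-g\rVert_2/2)$. So $\dgeo(s,g)\le r$ holds iff $\langle s,g\rangle\ge\cos r$ for $r\in[0,\pi]$; radii $r\ge\pi$ give all of $V$. Hence every disk active set is the intersection of the point set $\{s_v\}\subset\mathbb{R}^3$ with a closed affine halfspace $\{y:\langle y,g\rangle\ge t\}$. Co-located objects always fall on the same side, so we may work with the at most $n$ distinct coordinates; this can only reduce the count.

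Closed halfspaces in $\mathbb{R}^3$ form a range space of VC dimension $4$. Indeed, the affine functions $y\mapsto\langle a,y\rangle+b$ span a vector space of dimension $4$, and Dudley's lemma gives VC dimension at most $4$. The Sauer--Shelah lemma then bounds the number of distinct traces on an $n$-point set by $\sum_{i=0}^{4}\binom{n}{i}=O(n^4)$. The spherical caps are a subfamily of these halfspaces, so the same bound holds for $|\mathcal{A}(\dataset)|$. The empty set and $V$ are included in this count, which covers the $r\ge0$ edge cases.

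\textbf{Lower bound.} Place $n$ distinct coordinates on a small great-circle arc, or any small circle of latitude, inside a hemisphere. For instance, take points $p_1,\dots,p_n$ in order along an arc of length much less than $\pi/2$. Every contiguous block $\{p_i,\dots,p_j\}$ with $i\le j$ is then a disk active set. To see this, take $g$ to be the point on the arc's great circle equidistant from $p_i$ and $p_j$, that is, the midpoint of the sub-arc, and take $r=\dgeo(g,p_i)$. Distance from $g$ along the great circle is monotone in arc-length offset for offsets below $\pi$. Therefore exactly the points with arc position between $p_i$ and $p_j$ satisfy $\dgeo(s,g)\le r$. Distinct pairs $(i,j)$ give distinct blocks, which yields $\binom{n}{2}+n=\Omega(n^2)$ distinct active sets.

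The main care point is the monotonicity claim: points off the sub-arc but on the same great circle must be farther from $g$. This holds because the whole configuration lies within an arc of length less than $\pi$. If a 2D-generic configuration is preferred, the points can be perturbed slightly off the great circle while preserving strict inequalities.

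\textbf{Main obstacle.} The delicate step is the VC-dimension and halfspace reduction: the spherical metric must be correctly converted into an affine halfspace condition, including the antipodal and $r\ge\pi$ cases. After that, Sauer--Shelah is routine.
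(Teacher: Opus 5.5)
Your proposal is correct and follows the paper's proof essentially step for step. For the upper bound, you lift spherical caps to closed affine halfspaces in $\mathbb{R}^3$ and combine VC dimension $4$ with Sauer--Shelah; for the $\Omega(n^2)$ lower bound, you realize every contiguous block of points on a short geodesic arc by a disk centered at the sub-arc midpoint. The differences are minor: you derive the VC bound from Dudley's lemma rather than citing it, and you spell out the co-located-object, $r\ge\pi$, and along-arc monotonicity details that the paper leaves implicit.
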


\begin{proposition}[Absence of a Universal Contiguous Order]
\label{prop:no-order}
Place three objects $V=\{1,2,3\}$ at the vertices of a small equilateral spherical triangle on $\mathbb{S}^2$. A disk centered at the midpoint of any pair $\{i,j\}$ with radius $\dgeo(s_i,s_j)/2$ admits only $\{i,j\}$; because the first and last elements in any permutation of $V$ are non-contiguous, no linear order makes $\Aset(c)$ contiguous for all disks $c$.
\end{proposition}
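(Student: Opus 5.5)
The proof has two parts: first verify that each pair disk admits exactly its pair, then show by a combinatorial argument that no linear order can make all three pair sets contiguous.

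\textbf{Step 1: the pair disks.} Let the triangle have common side length $a$, small enough that all points lie in a convex cap. Fix a pair $\{i,j\}$, let $g$ be the midpoint of the geodesic segment $s_is_j$, and set $r=a/2$.

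Membership of $i$ and $j$ is immediate: $\dgeo(s_i,g)=\dgeo(s_j,g)=a/2\le r$, so $i,j\in\Aset(c)$.

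Exclusion of the third vertex $k$ uses the median. The geodesic from $s_k$ to $g$ is the median of the triangle, and we must show its length exceeds $a/2$. In the plane this length is $(\sqrt3/2)\,a>a/2$.

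For a sufficiently small spherical triangle, spherical trigonometry changes distances only by a factor $1+O(a^2)$. Concretely, the spherical law of cosines in the right triangle $s_i g s_k$ gives
\[
\cos\dgeo(s_k,g)=\cos a/\cos(a/2),
\]
which is less than $\cos(a/2)$ whenever $\cos a<\cos^2(a/2)$. That inequality is equivalent to
\[
\cos a<\tfrac{1+\cos a}{2},
\]
which holds for all $a\in(0,\pi)$. Hence $\dgeo(s_k,g)>a/2=r$, so $k\notin\Aset(c)$ and $\Aset(c)=\{i,j\}$ exactly.

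This trigonometric check is the only nonroutine point. An alternative is the triangle-inequality style bound $\dgeo(s_k,g)\ge \dgeo(s_k,s_i)-\dgeo(s_i,g)=a/2$, but that bound is not strict. The right-triangle identity, which uses that the median in an isosceles triangle is perpendicular to the base, supplies the needed strictness.

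\textbf{Step 2: no contiguous order.} Take any permutation $(\pi_1,\pi_2,\pi_3)$ of $V$. Consider the pair $\{\pi_1,\pi_3\}$. By Step 1 it is realized as $\Aset(c)$ for the disk centered at the midpoint of $s_{\pi_1}s_{\pi_3}$. In the order, this set has $\pi_2$ strictly between its two elements, so it is not an interval.

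Since this holds for every one of the $3!$ permutations, no linear order makes $\Aset(c)$ contiguous for every disk $c$, which proves the proposition.
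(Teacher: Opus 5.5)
Your proposal is correct and follows the same argument as the paper, which realizes each pair $\{i,j\}$ by its midpoint disk and observes that the first and last elements of any permutation form a non-contiguous pair. Your spherical right-triangle computation $\cos\dgeo(s_k,g)=\cos a/\cos(a/2)<\cos(a/2)$ adds the verification that the third vertex is strictly excluded, a step the paper states without justification.
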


%% file: sections/03_recovery_graph.tex
\section{Ranked Active-Graph Recovery Theory}
\label{sec:theory}

Because 2D disks induce $\Omega(n^2)$ active sets and admit no contiguous order, we serve all disks from a \emph{single} static graph $G=(V,E)$ of bounded degree $B$ (complete proofs appear in \apx{app:proofs}).

\subsection{Active-Order Invariance and Recovery Graph}
\label{sec:recovery-model}

\noindent\textbf{Active-Order Invariance.}
Our first observation is that \emph{filtering a vector-sorted neighbor row preserves the relative vector-distance order of the surviving entries}. Suppose vertex $u$ stores its outgoing neighbors in a row $\rowu(u)=(v_1,\ldots,v_B)$ ordered strictly by distance $\dvec(x_u,x_v)$ (with ties broken by ID $v$, denoted $v\prec_u w$). Under disk $c=(g,r)$, masking out-of-disk entries leaves the active subsequence $C_u(c) = \rowu(u)\cap\Aset(c) = (v_i\in\rowu(u) : v_i\in\Aset(c))$, which remains strictly sorted under $\prec_u$. Let $\operatorname{rank}_{u,c}(v)\in\{1,\ldots,|C_u(c)|\}$ denote the 1-based ordinal rank of $v\in C_u(c)$ within $C_u(c)$.

\begin{example}
\label{ex:row-invariance}
Suppose $\rowu(u)=[a,b,c,d,e,f]$ is sorted by $\dvec(x_u,\cdot)$. Under disk $c_q$, if $\{a,c,d,e\}$ are active and $\{b,f\}$ are inactive (Figure~\ref{fig:fka-rule}), filtering leaves $C_u(c_q)=[a,c,d,e]$ with active ranks $1,2,3,4$. For budget $\rk=3$, a left-to-right scan skips $b$ and stops at the third active entry $d$ without reading $e$ or $f$, recovering $[a,c,d]$.
\end{example}

\begin{definition}[Complete Active Projection and Recovery Graph]
\label{def:recovery-graph}
Given vector-sorted rows $\{\rowu(u)\}_{u\in V}$, disk $c$, and integer \emph{recovery budget} $\rk\ge 1$, the \emph{recovery graph} $\RecG_{\rk}(c) = (\Aset(c), \{(u,v) : u\in\Aset(c), v\in C_u(c), \operatorname{rank}_{u,c}(v)\le \rk\})$ retains the first $\rk$ active row entries at each $u\in\Aset(c)$, and $\RecG_{\infty}(c)=G[\Aset(c)]$ is the \emph{complete active projection}.
\end{definition}

\begin{figure}[t]
  \centering
  \includegraphics[width=\columnwidth]{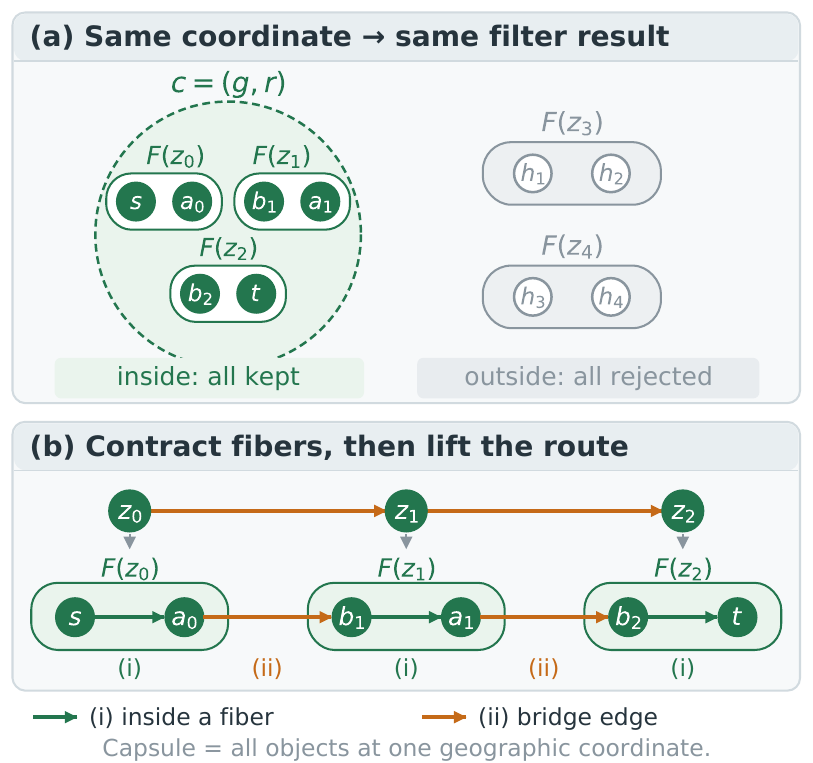}
  \caption{Two-scale route decomposition (Lemma~\ref{lem:fiber-closure}, Theorem~\ref{thm:factorization}). \emph{(a)}~Disk $c=(g,r)$ admits fibers $F(z_0),F(z_1),F(z_2)$ in full while rejecting $F(z_3),F(z_4)$ in full. \emph{(b)}~Contracting active fibers yields quotient route $z_0\to z_1\to z_2$, which lifts to object route $s\to a_0\to b_1\to a_1\to b_2\to t$ alternating intra-fiber paths (green, (i)) and inter-coordinate bridges (orange, (ii); Example~\ref{ex:factorization}).}
  \Description{Three whole fibers survive inside the disk while two whole fibers outside are rejected. Contracting active fibers gives the quotient route z0 to z1 to z2, which lifts to s, a0, b1, a1, b2, t.}
  \label{fig:factorization}
\end{figure}

\begin{figure*}[t]
  \centering
  \includegraphics[width=0.82\textwidth]{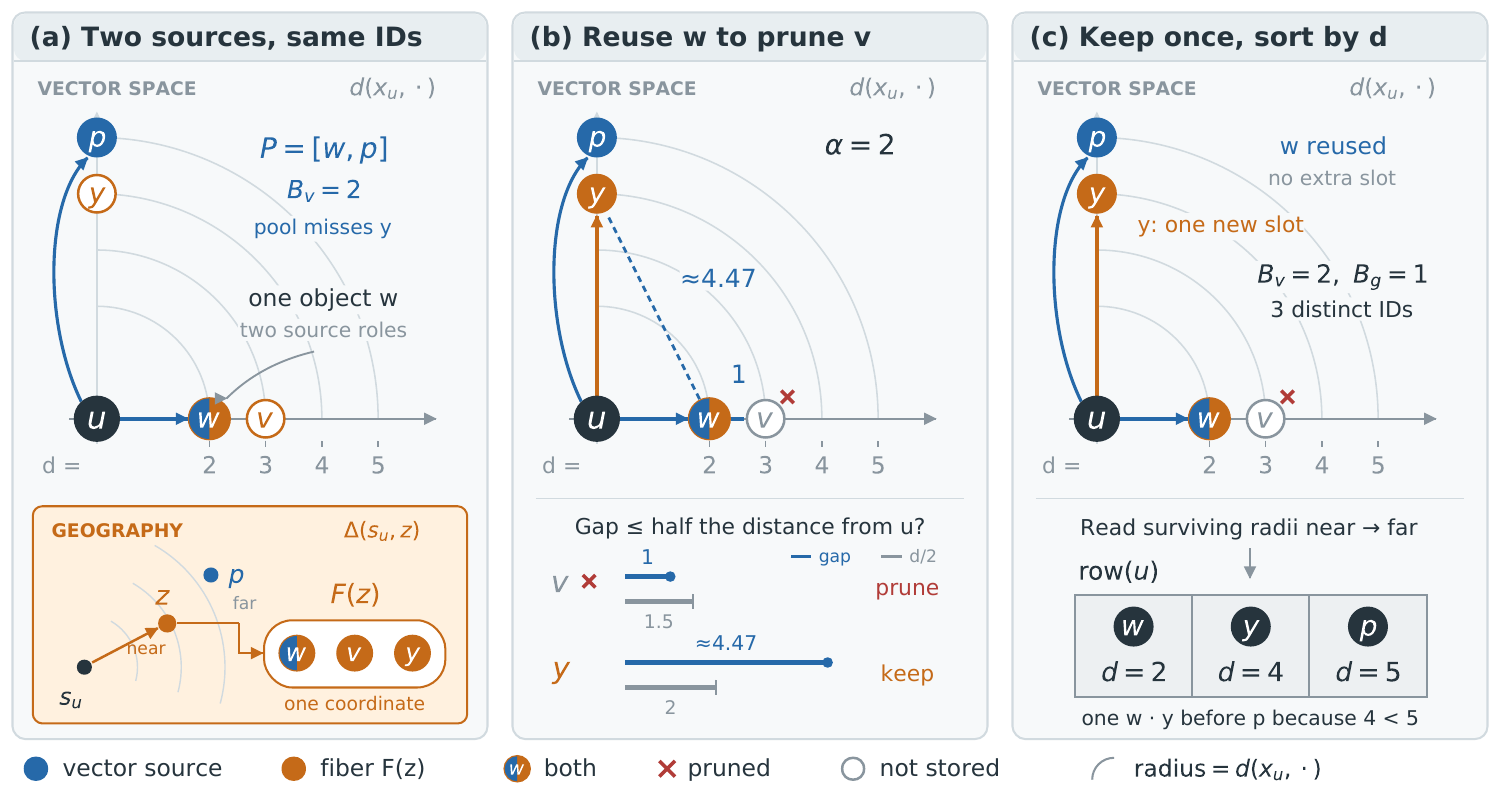}
  \caption{Dual-source row assembly in \method (Algorithm~\ref{alg:row-assembly}) on vector-distance shells $d=\dvec(x_u,\cdot)\in\{2,3,4,5\}$. \emph{(a)~Two sources, same IDs:} Stage~1 selects $P=[w,p]$ ($B_v=2$; pool misses $y$), while Stage~2 reaches nearest fiber $F(z)=\{w,v,y\}$. \emph{(b)~Reuse $w$ to prune $v$:} Inside $F(z)$ ($\alpha=2$), $w$ prunes $v$ ($\text{gap}=1\le 1.5=d/2$) while $y$ survives ($\text{gap}\approx 4.47>2=d/2$). \emph{(c)~Keep once, sort by $d$:} Reusing $w\in P$ takes no extra slot while $y$ takes $B_g=1$; reading surviving radii near-to-far yields $\rowu(u)=[w,y,p]$.}
  \Description{Three cards share concentric vector-distance arcs d = 2, 3, 4, 5 around u. Card (a) shows vector candidates P=[w,p] and nearest coordinate fiber F(z)={w,v,y}. Card (b) compares inter-candidate gap against d/2 at alpha=2, pruning v (1 <= 1.5) and keeping y (4.47 > 2). Card (c) reuses w without an extra slot and reads surviving radii near-to-far to produce row(u)=[w,y,p].}
  \label{fig:row-assembly}
\end{figure*}

\begin{theorem}[Subgraph Monotonicity and Cut-Rank Threshold]
\label{thm:cut-budget}
For any graph $G=(V,E)$ of row width $B$, disk $c$, and budget $\rk\ge 1$:
(a)~$\RecG_{\rk}(c)$ is a spanning subgraph of $\RecG_{\infty}(c)$ with out-degree at most $\min\{\rk,B\}$;
(b)~$\RecG_{\rk}(c)\subseteq\RecG_{\rk'}(c)$ for $\rk\le\rk'$ and $\RecG_B(c)=\RecG_{\infty}(c)$; and
(c)~for $|\Aset(c)|\ge 2$, $\RecG_{\rk}(c)$ is strongly connected if and only if $\rk\ge\rk_{\mathrm{SC}}(c)$, where the \emph{cut rank} is
\begin{equation}
\label{eq:cut-rank}
  \rk_{\mathrm{SC}}(c) =
  \max_{\varnothing\ne S\subsetneq\Aset(c)}\;
  \min_{u\in S,\, v\in C_u(c)\setminus S}
  \operatorname{rank}_{u,c}(v)
\end{equation}
(with inner minimum $\infty$ if no edge of $\RecG_{\infty}(c)$ leaves $S$). In~\eqref{eq:cut-rank}, any subset $S\subsetneq\Aset(c)$ represents a potential isolated component: the inner minimum is the earliest active rank at which some vertex $u\in S$ has an edge leaving $S$, and the outer maximum is the worst-case cut across $\Aset(c)$ (proof in \apx{app:main-proofs}).
\end{theorem}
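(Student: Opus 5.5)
Parts (a) and (b) follow directly from Definition~\ref{def:recovery-graph}, so I would dispatch them first and then prove (c) as a cut characterization of strong connectivity.

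\textbf{Parts (a) and (b).} For (a), every edge $(u,v)$ of $\RecG_{\rk}(c)$ has $u\in\Aset(c)$ and $v\in C_u(c)\subseteq\rowu(u)\cap\Aset(c)$. Hence $(u,v)\in E$ with both endpoints active, so the edge lies in $G[\Aset(c)]=\RecG_{\infty}(c)$. The vertex set is $\Aset(c)$ in both graphs, so the subgraph is spanning. The out-degree of $u$ is $|\{v\in C_u(c):\operatorname{rank}_{u,c}(v)\le\rk\}|\le\min\{\rk,|C_u(c)|\}\le\min\{\rk,B\}$, since $|C_u(c)|\le|\rowu(u)|=B$. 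For (b), the edge condition $\operatorname{rank}_{u,c}(v)\le\rk$ is monotone in $\rk$, which gives nesting. Taking $\rk=B$ makes the condition vacuous because ranks never exceed $|C_u(c)|\le B$, so $\RecG_B(c)$ retains every active row entry, i.e.\ all of $G[\Aset(c)]$. One small point needs care: $G$'s edges are exactly the row entries, so $E=\{(u,v):v\in\rowu(u)\}$, and I would state this convention explicitly.

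\textbf{Part (c): the cut criterion.} I would use the standard fact that a digraph on vertex set $A$ with $|A|\ge2$ is strongly connected if and only if every nonempty proper subset $S\subsetneq A$ has an edge leaving it.
\begin{itemize}
\item If the graph is strongly connected, take $s\in S$ and $t\notin S$. A path from $s$ to $t$ must contain an edge leaving $S$.
\item Conversely, if the graph is not strongly connected, some $s$ fails to reach some $t$. The reachable set $S$ of $s$ is then nonempty, proper, and has no out-edge.
\end{itemize}
Next, write $\mu(S)=\min_{u\in S,\,v\in C_u(c)\setminus S}\operatorname{rank}_{u,c}(v)$. Then $\RecG_{\rk}(c)$ has an edge leaving $S$ if and only if some $u\in S$ has an active entry $v\notin S$ of rank at most $\rk$, which holds if and only if $\mu(S)\le\rk$. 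Combining the two facts, $\RecG_{\rk}(c)$ is strongly connected if and only if $\mu(S)\le\rk$ for all such $S$. This is equivalent to $\max_S\mu(S)=\rk_{\mathrm{SC}}(c)\le\rk$.

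\textbf{Main obstacle.} I expect the only delicate step to be the $\infty$ convention. If some $S$ has no outgoing edge in $\RecG_{\infty}(c)$, then $\rk_{\mathrm{SC}}(c)=\infty$ and no finite $\rk$ works. This matches $\RecG_{\infty}(c)$, and hence every $\RecG_{\rk}(c)$, not being strongly connected, so the equivalence still holds when $\rk$ ranges over positive integers. I would also note that the maximum is over a finite nonempty family, which is why $|\Aset(c)|\ge2$ is assumed, so $\rk_{\mathrm{SC}}(c)$ is well defined.
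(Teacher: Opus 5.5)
Your proposal is correct and follows the paper's own proof: (a) and (b) are read off Definition~\ref{def:recovery-graph} using $|C_u(c)|\le B$, and (c) combines the cut characterization of strong connectivity with the observation that $\RecG_{\rk}(c)$ has an edge leaving $S$ exactly when the inner minimum in~\eqref{eq:cut-rank} is at most $\rk$. You also make explicit two details the paper leaves implicit: the $\infty$ convention, which requires $\rk$ to be finite for the equivalence to hold, and the convention $E=\{(u,v):v\in\rowu(u)\}$.
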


\subsection{Coordinate Fibers and Route Decomposition}
\label{sec:fiber-structure}

\noindent\textbf{Coordinate Fibers and Closure.}
Real geotagged datasets differ widely in coordinate multiplicity. In POI catalogs (\textsc{Overture-1M}), almost all records have distinct coordinates ($M\approx n$); in social-media archives (\textsc{Yfcc-Clip}), $64.6\%$ of coordinates are duplicates and up to $5{,}726$ photos share a single landmark location~\cite{Antoniou_2010}. To model the entire range $1\le M\le n$, we group all objects at each distinct coordinate $z\in\mathcal{S}$ into a \emph{coordinate fiber} $F(z)=\{v\in V:s_v=z\}$. Objects in the same fiber $F(z)$ have $\dgeo=0$ and differ only in vector space $\mathcal{X}$, while the $M$ fibers $\{z_1,\ldots,z_M\}$ form distinct spatial sites on $\mathcal{S}$.

\begin{lemma}[Coordinate-Fiber Closure]
\label{lem:fiber-closure}
For every disk $c=(g,r)$ and coordinate $z$, $F(z)\cap\Aset(c)\ne\varnothing \iff F(z)\subseteq\Aset(c)$. Because all $v\in F(z)$ have the same distance $\dgeo(z,g)$ to $g$, a disk never splits a fiber (Figure~\ref{fig:factorization}(a)), and the induced fiber subgraph $G[F(z)]$ is identical under every disk that admits $z$.
\end{lemma}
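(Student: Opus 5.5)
The lemma follows directly from the definitions of $\Aset(c)$ and $F(z)$. I would prove the biconditional in both directions and then derive the statement about induced subgraphs.

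\emph{Reverse direction.} If $F(z)\subseteq\Aset(c)$, then $F(z)\cap\Aset(c)=F(z)$. This set is nonempty because $F(z)$ is defined only for coordinates $z$ that some object actually occupies. (If one allows an empty fiber, it meets no active set, and the biconditional should be read for occupied coordinates only.)

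\emph{Forward direction.} Take any $v\in F(z)\cap\Aset(c)$. Since $s_v=z$, membership $v\in\Aset(c)$ means $\dgeo(z,g)\le r$. Now let $w\in F(z)$ be arbitrary. Then $s_w=z$, so $\dgeo(s_w,g)=\dgeo(z,g)\le r$, and therefore $w\in\Aset(c)$. The key point is that the membership test $\dgeo(s_v,g)\le r$ depends on $v$ only through its coordinate $s_v$. Hence $\Aset(c)$ is a union of whole fibers, namely $\Aset(c)=\bigcup\{F(z):\dgeo(z,g)\le r\}$. Stating this union form explicitly is also useful for the later route decomposition in Theorem~\ref{thm:factorization}.

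\emph{Induced subgraph.} Let $c$ and $c'$ be two disks that admit $z$. By the equivalence above, $F(z)\subseteq\Aset(c)$ and $F(z)\subseteq\Aset(c')$. The edges of $G[F(z)]$ are $\{(u,v)\in E:u,v\in F(z)\}$, which depends only on $E$ and $F(z)$ and not on the disk. Within the active projection $G[\Aset(c)]$, the subgraph induced on $F(z)$ is therefore exactly $G[F(z)]$, and likewise for $c'$.

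The same conclusion holds for the recovery graph $\RecG_{\rk}(c)$ if one restricts to edges of active rank at most $\rk$, but not verbatim. Active ranks in $\rowu(u)$ depend on which other fibers are admitted, so the first $\rk$ active entries can change from disk to disk. I would therefore make the lemma's invariance claim only for the complete projection $\RecG_{\infty}(c)=G[\Aset(c)]$, and add a remark that for finite $\rk$ the fiber-internal edges of $\RecG_{\rk}(c)$ form a subset of $G[F(z)]$ that may vary with $c$.

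There is no genuine obstacle here. The only care points are the nonemptiness convention for fibers and keeping the invariance claim separate from the disk-dependent rank structure.
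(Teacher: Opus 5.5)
Your proof is correct and follows the paper's own argument: membership in the active set depends on an object only through its coordinate, so $\dgeo(z,g)\le r$ holds for all of $F(z)$ or for none, and $G[F(z)]$ is determined by $E$ and $F(z)$ alone. Your two additions are accurate refinements that the paper's one-line proof leaves implicit: the nonemptiness convention, which the reverse direction needs, and the caveat that for finite $\rk$ the fiber-internal edges of $\RecG_{\rk}(c)$ can vary with $c$.
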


\begin{theorem}[Two-Scale Route Decomposition]
\label{thm:factorization}
For any directed graph $G=(V,E)$ and disk $c$, $G[\Aset(c)]$ is strongly connected if:
(i)~\textbf{(Intra-fiber connectivity)} $G[F(z)]$ is strongly connected for every active fiber $F(z)\subseteq\Aset(c)$; and
(ii)~\textbf{(Coordinate-quotient connectivity)} the quotient graph formed by contracting each active fiber $F(z)\subseteq\Aset(c)$ into a single site vertex $z$ is strongly connected.
\end{theorem}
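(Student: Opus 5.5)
The plan is a direct path-lifting argument. It uses Lemma~\ref{lem:fiber-closure} to ensure the quotient is well defined, and it relies on the standard convention that a quotient arc $z\to z'$ exists exactly when some edge $(a,b)\in E$ has $a\in F(z)$, $b\in F(z')$, and $z\ne z'$. Fix a disk $c$. By Lemma~\ref{lem:fiber-closure}, $\Aset(c)$ is a disjoint union of whole fibers: every fiber either lies inside $\Aset(c)$ or is disjoint from it. So the active fibers $F(z_1),\ldots,F(z_p)$ partition $\Aset(c)$, and the contraction in (ii) is well defined.

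Take any $s,t\in\Aset(c)$, and let $s\in F(z)$ and $t\in F(z')$. If $z=z'$, condition (i) gives a directed $s$--$t$ path inside $G[F(z)]$. Since $F(z)\subseteq\Aset(c)$, every edge of that path lies in $G[\Aset(c)]$. If $z\ne z'$, condition (ii) gives a directed quotient path $z=y_0\to y_1\to\cdots\to y_q=z'$ through active sites only. For each arc $y_{j-1}\to y_j$, I pick a witnessing bridge edge $(a_j,b_j)\in E$ with $a_j\in F(y_{j-1})$ and $b_j\in F(y_j)$. Both endpoints are active, so the bridge lies in $G[\Aset(c)]$.

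I then concatenate these pieces, mirroring Figure~\ref{fig:factorization}(b):
\[
s\leadsto a_1\to b_1\leadsto a_2\to b_2\leadsto\cdots\to b_q\leadsto t .
\]
Each $\leadsto$ segment joins two vertices of one active fiber: $s$ and $a_1$ lie in $F(y_0)$, $b_j$ and $a_{j+1}$ lie in $F(y_j)$, and $b_q$ and $t$ lie in $F(y_q)$. Condition (i) supplies each such segment inside $G[F(y_j)]\subseteq G[\Aset(c)]$. A segment is empty when its two endpoints coincide. The concatenation is a directed walk from $s$ to $t$ in $G[\Aset(c)]$. Since $s$ and $t$ were arbitrary, $G[\Aset(c)]$ is strongly connected.

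I expect no real obstacle. The one point to check is that every intermediate vertex and edge stays active. This is exactly where fiber closure matters: a bridge into a fiber that a disk only partially admitted could otherwise leave $\Aset(c)$. Lemma~\ref{lem:fiber-closure} rules this out, and it also guarantees that the intra-fiber paths in (i) are unaffected by the choice of disk. I will also note that the case $|\Aset(c)|\le 1$ is trivial.
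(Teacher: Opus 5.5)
Your proposal is correct and follows the paper's proof: both lift a quotient path $z_0\to\cdots\to z_t$ to an object path by choosing bridge edges between consecutive active fibers and splicing in the intra-fiber paths that Condition~(i) provides. Your use of Lemma~\ref{lem:fiber-closure} to show that the active fibers partition $\Aset(c)$, and your separate handling of the same-fiber case, spell out points the paper leaves implicit.
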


\begin{example}
\label{ex:factorization}
In Figure~\ref{fig:factorization}(b), the quotient path $z_0\to z_1\to z_2$ across active sites is supported by inter-coordinate bridge edges $a_0\to b_1$ and $a_1\to b_2$ (Condition~(ii)). Condition~(i) connects these bridges through intra-fiber paths $s\to a_0$ in $F(z_0)$, $b_1\to a_1$ in $F(z_1)$, and $b_2\to t$ in $F(z_2)$, forming an end-to-end path in $G[\Aset(c)]$. Neither condition scales with the number of possible query disks: Condition~(i) involves only $M$ static fiber subgraphs fixed at index construction time, and Condition~(ii) depends only on the spatial sites inside $\Omega_c$.
\end{example}

\subsection{Polylogarithmic Guarantees}
\label{sec:obligations}

\noindent\textbf{Coordinate Tie-Breaking and Two-Scale Selection.}
To maintain connectivity on small disks, each vertex $u$ allocates a quota of $B_g$ edges to geographically nearby candidates. If $u$ simply linked to its $B_g$ geographically closest objects, a nearby landmark $z$ with thousands of co-located photos would fill all $B_g$ slots with ties at $z$ and leave no edges to surrounding sites, violating Condition~(ii) of Theorem~\ref{thm:factorization}; conversely, keeping only one arbitrary object per site would disconnect multi-object fibers and violate Condition~(i).

To satisfy both conditions, we model $\dataset$ by a regular marked point process~\cite{DaleyVereJones_2003,Penrose_2003} (\apx{app:fiber-connectivity}): $M$ distinct sites $Z_1,\ldots,Z_M$ are drawn on a compact, connected region $\mathcal{S}\subseteq\mathbb{S}^2$ with bounded density $0<p_{s,-}\le p_s(z)\le p_{s,+}$, and each site $Z_i$ hosts a fiber $F(Z_i)$ of size $f_i\ge 1$ ($\sum_{i=1}^M f_i=n$) whose vectors $X_v\sim\nu_{Z_i,\Theta_i}$ come from a geodesic support of intrinsic dimension $d_{\mathcal{X}}>0$, diameter $D_x$, and doubling constant $C_D$. Each vertex $u$ selects geographic edges using two rules:
(1)~\emph{Intra-fiber $\alpha$-witness rule:} within $F(z)$, $u$ scans up to $K_f$ candidates $v\in F(z)\setminus\{u\}$ in ascending $\dvec(x_u,x_v)$ and retains $v$ unless an earlier retained $w\in F(z)$ satisfies $\dvec(x_v,x_w)\le\dvec(x_u,x_v)/\alpha$ ($\alpha>1$).
(2)~\emph{Nearest-first coordinate stream:} $u$ visits sites $z\ne s_u$ in nondecreasing $\dgeo(s_u,z)$ and applies Rule~(1) inside each $F(z)$ until $B_g$ new edges are selected, with cumulative charge $L_u(\rho)=\sum_{z:\dgeo(s_u,z)\le\rho}\chi_u(z)$ within radius $\rho$.

\noindent\textbf{Bounding Per-Site Quota Consumption.}
Partition $\mathcal{X}\setminus\{x_u\}$ into dyadic shells $[2^{-k}D_x, 2^{-(k-1)}D_x]$ from minimum separation $\varepsilon_\alpha=\Theta((\delta/n^2)^{1/d_{\mathcal{X}}})$ to $D_x$. Because any two $\alpha$-pruned survivors in the same shell are separated by $>1/(2\alpha)$ of the shell radius, doubling packing limits each shell to $\le P_\alpha=C_D^{\lceil\log_2(4\alpha)\rceil}=O(1)$ survivors; across all shells, with probability $\ge 1-\delta/3$, \emph{any site $z$ of arbitrary size $f_z$ consumes at most}
\begin{equation}
\label{eq:alpha-charge}
  a_\alpha = \min\!\left\{K_f,\; P_\alpha\!\left(1+\left\lceil\log_2(D_x/\varepsilon_\alpha)\right\rceil\right)\right\} = O\bigl(\log(nM/\delta)\bigr)
\end{equation}
slots of $B_g$. Meanwhile, RGG theory~\cite{Penrose_1999,Penrose_2003} shows that linking active sites within radius $\rho_c = C_s\sqrt{\lambda_2(\Omega_c)\log(3M/\delta)/(M\pi_c)}$ ($\pi_c=\int_{\Omega_c}p_s\,d\lambda_2$) connects the coordinate quotient across $\Omega_c$, with at most $h_c=\lceil C_h\log(3M/\delta)\rceil$ sites per $\rho_c$-ball (\apx{app:site-backbone}).

\begin{theorem}[Two-Scale and End-to-End Strong Connectivity]
\label{thm:fiber-connectivity}
\label{thm:quotient-connectivity}
\label{thm:end-to-end}
Under the regular marked-site model (\apx{app:fiber-connectivity}), for any disk $c$ with $M\pi_c\ge\log(6/\delta)$, setting quotas
\begin{equation}
\label{eq:quota-scaling}
  B_g \ge a_\alpha(h_c+1)=\Theta\bigl(\log^2(nM/\delta)\bigr),
  \;
  K_f \ge C_f\log(3nM/\delta)
\end{equation}
guarantees that: (a)~with probability $\ge 1-\delta/3$ across all $M$ fibers, every $G[F(z)]$ is strongly connected with $\alpha$-contracting local paths; (b)~with probability $\ge 1-2\delta/3$, the coordinate quotient of $G[\Aset(c)]$ is strongly connected and every $u\in\Aset(c)$ retains a representative from every active site within $\rho_c$; and (c)~consequently, $\RecG_{\infty}(c)=G[\Aset(c)]$ is strongly connected ($\rk_{\mathrm{SC}}(c)\le B$) with probability $\ge 1-\delta$ (and across $Q$ disks upon replacing $\delta$ by $\delta/Q$).
\end{theorem}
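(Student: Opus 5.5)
The plan is to prove (a) and (b) separately, each on its own high-probability event, and then obtain (c) from Theorem~\ref{thm:factorization} by a union bound. Throughout, I condition on the site locations $Z_1,\ldots,Z_M$ and the fiber sizes $f_i$. The vector draws inside different fibers are then independent.

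\textbf{Part (a): intra-fiber connectivity.} Fix a fiber $F(z)$ of size $f_z$. Rule~(1) scans the $K_f$ nearest candidates in ascending $\dvec(x_u,\cdot)$ and keeps $\alpha$-witness survivors. Take any ordered pair $u,t\in F(z)$. I will show that the retained edges contain a greedy path from $u$ to $t$ that contracts $\dvec(\cdot,x_t)$ by a factor $\alpha$ at each step. This is the standard $\alpha$-RNG argument. Suppose $t$ is not retained at $u$. Then either some earlier retained $w$ satisfies $\dvec(x_w,x_t)\le\dvec(x_u,x_t)/\alpha$, which gives the contraction, or $t$ lies beyond the scan limit $K_f$.

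To exclude the second case, I will use the doubling support with intrinsic dimension $d_{\mathcal{X}}$. With $K_f\ge C_f\log(3nM/\delta)$, a Chernoff bound shows that with high probability every ball around $x_u$ of radius comparable to $\dvec(x_u,x_t)/\alpha$ intersects the scanned $K_f$-neighborhood. This yields a witness at every scale above the separation $\varepsilon_\alpha$. Below $\varepsilon_\alpha$, the choice $\varepsilon_\alpha=\Theta((\delta/n^2)^{1/d_{\mathcal{X}}})$ makes all pairwise distances exceed $\varepsilon_\alpha$ with probability $1-\delta/6$. The path therefore terminates after $O(\log(D_x/\varepsilon_\alpha))$ steps. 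A union bound over all $n^2$ ordered pairs and all $M$ fibers gives total failure probability $\delta/3$.

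\textbf{Part (b): quotient connectivity.} I will use the RGG backbone from the cited appendix. Since $M\pi_c\ge\log(6/\delta)$, with probability $\ge1-\delta/3$ the sites in $\Omega_c$ are nonempty, the $\rho_c$-graph on them is connected, and every $\rho_c$-ball contains at most $h_c$ sites.

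The key counting step is as follows. For $u\in\Aset(c)$, the nearest-first stream visits every site within $\rho_c$ of $s_u$ before any site beyond. By~\eqref{eq:alpha-charge}, each visited site charges at most $a_\alpha$ slots, on the per-site charge event of probability $\ge1-\delta/3$. Hence $L_u(\rho_c)\le a_\alpha h_c$. Because $B_g\ge a_\alpha(h_c+1)$, the quota is not exhausted before every site within $\rho_c$ has been scanned, so each such site contributes at least one retained representative.

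Every edge of the $\rho_c$-site graph joins two sites inside $\Omega_c$, so both endpoints are active by Lemma~\ref{lem:fiber-closure}. Each such edge is therefore realized in both directions by active bridge edges, and the quotient of $G[\Aset(c)]$ is strongly connected. The failure probabilities add to $2\delta/3$.

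\textbf{Part (c) and the main obstacle.} Combining (a) and (b) with Theorem~\ref{thm:factorization} shows that $G[\Aset(c)]$ is strongly connected with probability $\ge1-\delta$. Theorem~\ref{thm:cut-budget}(b,c) then gives $\rk_{\mathrm{SC}}(c)\le B$. For $Q$ disks, replace $\delta$ by $\delta/Q$ and take a union bound.

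I expect the main obstacle to be the uniform per-site charge bound $a_\alpha$. It must hold for every $u$ and every site simultaneously, independent of $f_z$. My approach is the shell packing argument: two survivors in the same dyadic shell that are closer than $1/(2\alpha)$ of the shell radius would prune one another. The subtle point is that candidates truncated by $K_f$ interact with the witness order, so I will cap the charge at $\min\{K_f,\cdot\}$ exactly as in~\eqref{eq:alpha-charge}. A secondary subtlety is the case where the disk boundary cuts a $\rho_c$-ball. I will handle it by running the RGG argument on the sites within $\Omega_c$ only. Sites outside the disk can still consume quota, but the count $h_c$ already bounds all sites in the ball, so the counting step is unaffected.
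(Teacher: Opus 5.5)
Your parts (b) and (c) follow the paper's argument: the RGG backbone and crowding bound $h_c$, the per-site charge $a_\alpha$ from shell packing, the nearest-first stream reaching every site within $\rho_c$, then Theorem~\ref{thm:factorization} and a union bound. Part (a), however, has a genuine gap. You claim that for \emph{every} ordered pair $u,t$ in a fiber, $u$ either stores $t$ or stores a witness $w$ with $\dvec(x_w,x_t)\le\dvec(x_u,x_t)/\alpha$. You then plan to exclude the case ``$t$ lies beyond the scan limit'' by a Chernoff bound. That case cannot be excluded once $f_z-1>K_f$, which is the regime that matters (fibers of $5{,}726$ photos against $K_f=256$). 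Rule~(1) only retains members among $u$'s $K_f$ nearest in $F(z)$, so any witness must be one of them. If $R_u$ is the distance to the $K_f$-th of them, the triangle inequality gives $\dvec(x_u,x_t)\le R_u+\dvec(x_u,x_t)/\alpha$, i.e.\ $\dvec(x_u,x_t)\le\frac{\alpha}{\alpha-1}R_u$. Since $R_u$ shrinks like $(K_f/f_z)^{1/d_{\mathcal{X}}}$, farther targets have no $\alpha$-contracting first hop at all. Knowing that balls around $x_u$ meet the scanned set places nothing near $x_t$. Global one-hop contraction toward arbitrary targets needs $K_f\ge f_z-1$, which is exactly why Corollary~\ref{cor:navigability} assumes it.

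The missing idea is the paper's local-to-global decomposition, in three steps.
\begin{enumerate}
\item \emph{Deterministic lifting.} Suppose every $u\in F(z)$ has at most $K_f$ other fiber members within distance $r$. Then every pair at distance $\le r$ is actually scanned. The witness chain stays within radius $r$ because each hop shrinks the distance to the target. Hence all $r$-close pairs are joined by $\alpha$-contracting directed paths.
\item \emph{Upper-tail Chernoff.} Pick $r_f^{d_{\mathcal{X}}}\asymp\log(3nM/\delta)/f_z$ and apply Chernoff in the \emph{upper-tail} direction to show that no $x_u$ has more than $K_f$ fiber marks within $3r_f$.
\item \emph{Covering.} Cover the connected geodesic support by $C_x(r_f/2)^{-d_{\mathcal{X}}}$ balls, each nonempty with high probability by the lower mass bound. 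Then the marks form an $r_f$-net, and the symmetric $3r_f$-radius graph on $F(z)$ is connected. Replacing each of its edges, in both orientations, by the lifted path gives strong connectivity.
\end{enumerate}
Your argument is valid only in the deterministic case $f_z-1\le K_f$, taking $r=D_x$.

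Two smaller points. First, you condition on $Z_1,\ldots,Z_M$ throughout, but part (b) draws its RGG events from exactly that randomness. Condition only for the mark events and average over sites afterwards. Second, your count $L_u(\rho_c)\le a_\alpha h_c$ omits $u$'s own fiber; that charge is what the $+1$ in $B_g\ge a_\alpha(h_c+1)$ absorbs.
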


\begin{corollary}[$\alpha$-Shortcut Reachability on Compact Disks]
\label{cor:navigability}
On the event of Theorem~\ref{thm:end-to-end}, if $\operatorname{diam}(\Omega_c)\le\rho_c$ and $K_f\ge |F(z)|-1$ for $F(z)\subseteq\Aset(c)$, then for every $u,v\in\Aset(c)$ with $\dvec(x_u,x_v)>\varepsilon_\alpha$, $u$ has an active out-neighbor $w\in F(s_v)$ satisfying $\dvec(x_w,x_v)\le\dvec(x_u,x_v)/\alpha$. Greedy routing therefore reaches $v$ in at most $\lceil\log_\alpha(D_x/\varepsilon_\alpha)\rceil$ hops, and any local minimum $u$ for query $x_q$ satisfies $\dvec(x_q,x_u)\le\frac{\alpha+1}{\alpha-1}\min_{v\in\Aset(c)}\dvec(x_q,x_v)$.
\end{corollary}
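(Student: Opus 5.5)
Under the event of Theorem~\ref{thm:end-to-end}, I split the argument into three parts: a one-hop witness claim, a hop count obtained by iterating it, and the local-minimum bound.

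\textbf{Step 1 (one-hop witness).} Fix $u,v\in\Aset(c)$ with $\dvec(x_u,x_v)>\varepsilon_\alpha$, and let $z=s_v$.

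Since $\operatorname{diam}(\Omega_c)\le\rho_c$, we have $\dgeo(s_u,z)\le\rho_c$. Part~(b) of Theorem~\ref{thm:end-to-end} then guarantees that $u$'s geographic stream reached site $z$ before exhausting $B_g$, so Rule~(1) was applied inside $F(z)$. If $z=s_u$, the intra-fiber selection of part~(a) plays the same role.

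Because $K_f\ge|F(z)|-1$, the scan covered all of $F(z)\setminus\{u\}$ in ascending $\dvec(x_u,\cdot)$. I treat $v$ by cases.
\begin{itemize}
\item If $v$ was retained, take $w=v$; then $\dvec(x_w,x_v)=0$.
\item If $v$ was pruned, some earlier retained $w\in F(z)$ satisfies $\dvec(x_w,x_v)\le\dvec(x_u,x_v)/\alpha$, which is the desired witness.
\end{itemize}
In both cases Lemma~\ref{lem:fiber-closure} puts all of $F(z)$ in $\Aset(c)$, so $w$ is an active out-neighbor.

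\textbf{Step 2 (hop count).} Iterate Step~1 with target $v$ fixed. Each hop multiplies the distance to $x_v$ by at most $1/\alpha$, starting from at most $D_x$. After $\lceil\log_\alpha(D_x/\varepsilon_\alpha)\rceil$ hops the distance is at most $\varepsilon_\alpha$. By the minimum-separation convention behind $\varepsilon_\alpha$ (distinct vectors are $>\varepsilon_\alpha$ apart with high probability, a fact already used in the shell packing), the current vertex must then be $v$ itself.

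\textbf{Step 3 (local minimum).} Let $v^\ast$ be the exact active nearest neighbor of $x_q$, let $u$ be a local minimum, and suppose $u\ne v^\ast$. Apply Step~1 to the pair $(u,v^\ast)$ to obtain a witness $w$. Local minimality gives $\dvec(x_q,x_u)\le\dvec(x_q,x_w)$. The triangle inequality gives
\[
\dvec(x_q,x_w)\le\dvec(x_q,x_{v^\ast})+\dvec(x_u,x_{v^\ast})/\alpha,
\qquad
\dvec(x_u,x_{v^\ast})\le\dvec(x_q,x_u)+\dvec(x_q,x_{v^\ast}).
\]
Combining these yields $(1-1/\alpha)\dvec(x_q,x_u)\le(1+1/\alpha)\dvec(x_q,x_{v^\ast})$, which is the claimed factor $\frac{\alpha+1}{\alpha-1}$.

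This uses Step~1, which requires $\dvec(x_u,x_{v^\ast})>\varepsilon_\alpha$. If instead $\dvec(x_u,x_{v^\ast})\le\varepsilon_\alpha$, then $u=v^\ast$ by separation and the bound is trivial.

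\textbf{Main obstacle.} The delicate point is Step~1's claim that $u$ actually scanned $F(z)$ in full. Part~(b) as stated only promises "a representative" per active site within $\rho_c$. I must show that the quota $B_g\ge a_\alpha(h_c+1)$ lets the stream finish Rule~(1) inside every site within $\rho_c$, not just touch it.

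The plan for this is to reopen the charging argument behind Theorem~\ref{thm:end-to-end}:
\begin{itemize}
\item each site consumes at most $a_\alpha$ slots (Eq.~\eqref{eq:alpha-charge});
\item at most $h_c$ sites lie within $\rho_c$, plus the home site.
\end{itemize}
Hence the cumulative charge satisfies $L_u(\rho_c)\le a_\alpha(h_c+1)\le B_g$, so no site within $\rho_c$ is truncated. Survivors reused from vector candidates are counted as zero-cost witnesses and only help.
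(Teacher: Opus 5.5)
Your proposal is correct and follows the paper's proof essentially step for step. The paper likewise invokes the charging event $L_u(\rho_c)\le a_\alpha(h_c+1)\le B_g$ from the proof of Part~(b) to ensure that $u$ examines all of $F(s_v)$; it uses the same retained/pruned case split for the witness, counts hops via $\varepsilon_\alpha$-separation, and obtains the $\frac{\alpha+1}{\alpha-1}$ bound by the same triangle-inequality rearrangement. The only cosmetic difference is that the paper cites Lemma~\ref{lem:fiber-witness} (with $r=D_x$) for the contracting hops inside $F(s_v)$, whereas you re-apply your one-hop claim at each active vertex, which also covers greedy steps that leave $F(s_v)$.
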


Because $B_g$ and $K_f$ grow only polylogarithmically in $n$ and $M$ and do not depend on $\max_i f_i$, a single width ($B=384$) suffices across scales: when $M=n$ (\textsc{Overture-1M}), Condition~(i) is trivial ($\chi_u(z)=1$) and $B_g$ builds a $B_g$-NN spatial graph; when $M\ll n$ (\textsc{Yfcc-Clip}), intra-fiber $\alpha$-pruning bounds per-site quota usage and keeps selective disks in the compact regime $\operatorname{diam}(\Omega_c)\le\rho_c$.

%% file: sections/04_reach_index.tex
\section{The Proposed \method Index}
\label{sec:index}

To realize the guarantees of Section~\ref{sec:theory} in a single fixed-width graph, \method stores one row $\rowu(u)$ of width $B=B_v+B_g$ for each vertex $u\in V$. Each row combines a \emph{global-vector source} of quota $B_v$ (which provides long-range vector shortcuts on broad disks where $\rho_c<\operatorname{diam}(\Omega_c)$) with a \emph{fiber-aware geographic stream} of quota $B_g$ (which provides local connectivity on selective disks under Theorem~\ref{thm:end-to-end}), and sorts all retained neighbors by $\dvec(x_u,\cdot)$.

\subsection{Dual-Source Row Construction}
\label{sec:vecpool}
\label{sec:assembly}

Algorithm~\ref{alg:row-assembly} constructs $\rowu(u)$ for each vertex $u\in V$ in three stages (Figure~\ref{fig:row-assembly}).
In \textbf{Stage 1 (Lines 1--4)}, NN-Descent~\cite{nndescent} builds an approximate $K_{\mathrm{vec}}$-NN candidate pool $C_v(u)$ in $\mathcal{X}$; after deduplicating $C_v(u)$ and sorting it by exact \texttt{float32} distance under $\prec_u$, lines~2--4 apply $\alpha$-pruning and back-fill up to quota $B_v$.
In \textbf{Stage 2 (Lines 5--11)}, each coordinate $s=(\lambda,\phi)$ is mapped to its 3D unit vector $\mathbf{u}_s=(\cos\phi\cos\lambda,\allowbreak \cos\phi\sin\lambda,\sin\phi)$. Because squared chord distance $\|\mathbf{u}_s-\mathbf{u}_t\|_2^2=2-2\langle\mathbf{u}_s,\mathbf{u}_t\rangle$ is strictly monotone in $\dgeo(s,t)$, grouping identical unit vectors forms the fibers $\{F(z_1),\ldots,F(z_M)\}$, and a 3D kd-tree~\cite{Bentley_1975} streams fibers in exact great-circle order from $s_u$. For each fiber $F(z)$, line~6 resets the co-located witness list $W\gets[\,]$, and lines~7--11 scan up to $C_{\mathrm{tie}}$ members in ascending $\dvec(x_u,\cdot)$: if an earlier witness $w\in W$ satisfies $\dvec(x_v,x_w)\le\dvec(x_u,x_v)/\alpha$, candidate $v$ is pruned (line~9); otherwise $v$ enters $W$ (line~10), and if $v\notin P$, it is appended to $P$ and increments $g$ (line~11). Thus a vector survivor $w\in P$ still serves as a co-located witness in $W$ at zero geographic quota cost ($+0$), with $K_f=\min\{B_g,C_{\mathrm{tie}}\}$.
In \textbf{Stage 3 (Lines 12--14)}, lines~12--13 fill remaining slots from $C_v(u)$ or sentinel $\bot=\texttt{0xFFFFFFFF}$, and line~14 sorts $P$ by $(\dvec(x_u,x_v),v)$, removing source labels.

\begin{example}
\label{ex:row-assembly}
Figure~\ref{fig:row-assembly} traces Algorithm~\ref{alg:row-assembly} for $B_v=2$, $B_g=1$, and $\alpha=2$, with $x_u=(0,0)$ and shells $d\in\{2,3,4,5\}$ containing $x_w=(2,0)$, $x_v=(3,0)$, $x_y=(0,4)\in F(z)$ and $x_p=(0,5)\notin F(z)$. In panel~(a), Stage~1 selects $P=[w,p]$ ($C_v(u)$ misses $y$), while Stage~2 visits nearest site $z$ with $F(z)=\{w,v,y\}$. In panel~(b), scanning $F(z)$ in order $d=2,3,4$ places $w$ into $W$ without using a new geographic slot since $w\in P$; next, $v$ is pruned by $w$ ($\dvec(x_v,x_w)=1\le 1.5=d/2$), whereas $y$ survives ($\dvec(x_y,x_w)=\sqrt{20}\approx 4.47>2=d/2$) and takes the single slot $g=1=B_g$. In panel~(c), sorting the retained neighbors $\{w,y,p\}$ from inner to outer shell ($d=2,4,5$) gives $\rowu(u)=[w,y,p]$.
\end{example}

\begin{figure}[t]
  \centering
  \includegraphics[width=\columnwidth]{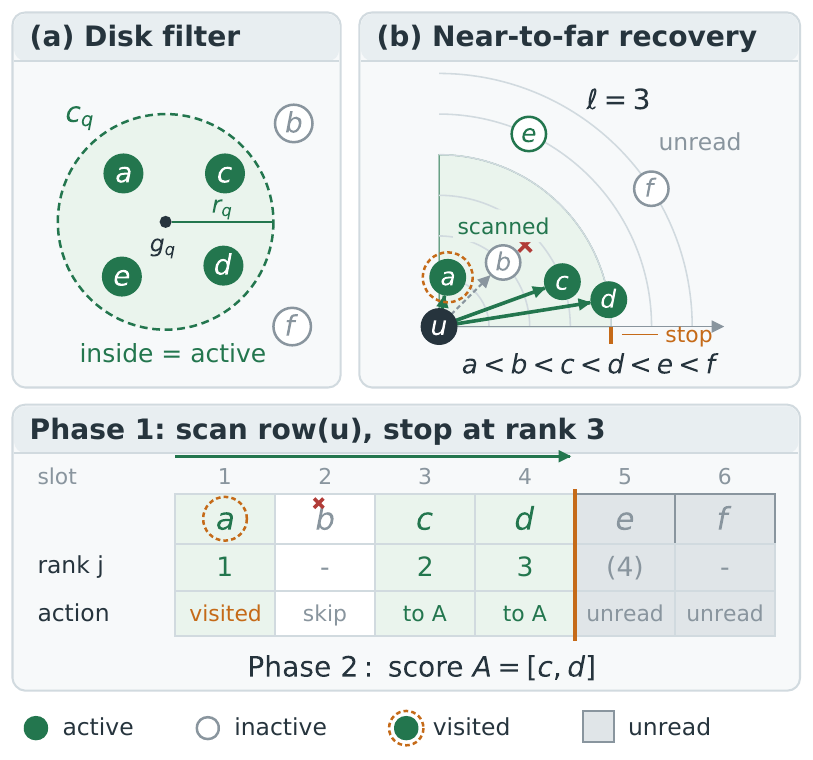}
  \caption{Two-phase first-$\rk$-active expansion ($\rk=3$) at vertex $u$ (Example~\ref{ex:row-invariance}, Algorithm~\ref{alg:expand}). \emph{(a)}~Disk $c_q=(g_q,r_q)$ activates $\{a,c,d,e\}$ and deactivates $\{b,f\}$. \emph{(b)}~Concentric arcs order $u$'s neighbors by vector distance ($a<b<c<d<e<f$): skipping out-of-disk $b$ recovers the first $\rk=3$ active neighbors $[a,c,d]$ while leaving outer arcs $e,f$ unread. \emph{Bottom:}~Phase~1 scans slots $1\ldots 4$ ($a\in\mathcal{V}$ takes rank~1, $b$ is skipped, $c,d$ take ranks~2, 3 into $A=[c,d]$) and halts at $j=3$; Phase~2 scores only $A=[c,d]$.}
  \Description{The geographic disk selects a, c, d, e. Concentric arcs show vector distance order a < b < c < d < e < f. The bottom slot table shows Phase 1 stopping after slot 4 and Phase 2 scoring c and d.}
  \label{fig:fka-rule}
\end{figure}

\begin{algorithm}[t]
\small
\caption{\textsc{AssembleRow}$(u)$: Dual-source row construction}
\label{alg:row-assembly}
\KwIn{pool $C_v(u)$ sorted by $\dvec(x_u,\cdot)$; fibers $F(z)$ streamed by $\dgeo(s_u,z)$}
\KwOut{fixed-width row $\rowu(u)$ of width $B=B_v+B_g$ sorted by $\prec_u$}
$P\gets[\,]$;\quad $g\gets 0$\;
\ForEach{$v\in C_v(u)$ \textbf{while} $|P|<B_v$}{
  \lIf{$\dvec(x_v,x_w)>\dvec(x_u,x_v)/\alpha\;\forall w\in P$}{append $v$ to $P$}
}
back-fill $P$ up to $B_v$ with closest unused $v\in C_v(u)$\;
\ForEach{coordinate $z$ in nondecreasing $\dgeo(s_u,z)$ \textbf{while} $g<B_g$}{
  $W\gets[\,]$;\quad $t\gets 0$\;
  \ForEach{$v\in F(z)\setminus\{u\}$ in vector order \textbf{while} $g<B_g$ \textbf{and} $t<C_{\mathrm{tie}}$}{
    $t\gets t+1$\;
    \lIf{$\exists w\in W:\dvec(x_v,x_w)\le \dvec(x_u,x_v)/\alpha$}{\textbf{continue}}
    append $v$ to $W$\;
    \lIf{$v\notin P$}{append $v$ to $P$;\quad $g\gets g+1$}
  }
}
back-fill $P$ up to $B$ with unused $v\in C_v(u)$, then sentinels $\bot$\;
\Return $P$ sorted by $\bigl(\dvec(x_u,x_v),v\bigr)$, with sentinels $\bot$ last\;
\end{algorithm}

\begin{proposition}[Row Construction Contract]
\label{prop:assembly-contract}
For $B_v+B_g\le B$, Algorithm~\ref{alg:row-assembly} guarantees that: (i)~every candidate $v\in F(z)\setminus\{u\}$ examined in lines~7--11 is either stored in $\rowu(u)$ or witnessed by some $w\in\rowu(u)\cap F(z)$ with $\dvec(x_v,x_w)\le\dvec(x_u,x_v)/\alpha$; (ii)~if $L_u(\rho)\le B_g$, every site $z\ne s_u$ with $\dgeo(s_u,z)\le\rho$ has $\rowu(u)\cap F(z)\ne\varnothing$; and (iii)~$\rowu(u)$ contains at most $B$ valid IDs ordered by $\prec_u$ ($4nB$ bytes in total).
\end{proposition}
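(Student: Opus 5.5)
The plan is to prove the three clauses of Proposition~\ref{prop:assembly-contract} by following the control flow of Algorithm~\ref{alg:row-assembly}. The key fact is that $P$ only grows: no line ever deletes an element of $P$. Line~14 sorts $P$ and appends sentinels, and neither operation removes a valid ID. Hence every ID appended to $P$ at any stage survives into $\rowu(u)$. I also use that $W$ for fiber $F(z)$ contains only members of $F(z)$. Each $w\in W$ is either appended to $P$ at line~11 or already lies in $P$. Therefore $W\subseteq \rowu(u)\cap F(z)$ at the end of the run.

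For (i), fix a candidate $v\in F(z)\setminus\{u\}$ examined in lines~7--11. There are two cases. If the test at line~9 fires, some $w\in W$ satisfies $\dvec(x_v,x_w)\le\dvec(x_u,x_v)/\alpha$. By the invariant above, $w\in\rowu(u)\cap F(z)$, so $v$ is witnessed. Otherwise $v$ enters $W$ at line~10. Then either $v\in P$ already or it is appended at line~11, so $v\in\rowu(u)$ by monotonicity of $P$.

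For (ii), I interpret $\chi_u(z)$ as the number of geographic slots (increments of $g$) that site $z$ consumes in the stream. Sites are visited in nondecreasing $\dgeo(s_u,z)$. The loop at line~5 exits only when $g=B_g$. Suppose some site $z$ with $\dgeo(s_u,z)\le\rho$ were never reached, or were reached with $g$ already at $B_g$. Then the budget would be exhausted by sites at distance at most $\dgeo(s_u,z)\le\rho$, each visited earlier.

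The main obstacle lies here. Exhaustion by earlier sites only gives $L_u(\rho)\ge B_g$, which does not contradict the hypothesis $L_u(\rho)\le B_g$. I would therefore adopt the convention, implicit in the definition of $L_u$, that $\chi_u(z)\ge 1$ counts the slot demand of every site within $\rho$, including $z$ itself. Under that convention the unreached site $z$ contributes positively, so $L_u(\rho)>B_g$, a contradiction. Geodesic ties at distance exactly $\rho$ are handled by the same counting, since $L_u(\rho)$ sums over all sites with $\dgeo(s_u,z)\le\rho$.

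Once $z$ is visited with $g<B_g$, its first member $v\in F(z)\setminus\{u\}$ is scanned with $W=[\,]$, so it cannot be pruned. It therefore enters $W\subseteq\rowu(u)\cap F(z)$. A degenerate case remains: if $F(z)\setminus\{u\}$ were empty, then $F(z)=\{u\}$, which forces $z=s_u$, and that site is excluded. Hence $\rowu(u)\cap F(z)\ne\varnothing$.

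For (iii), Stage~1 stops at $|P|=B_v$. Stage~2 adds at most $B_g$ new IDs, because each append increments $g$ and the loop halts at $g=B_g$. Back-filling at lines~12--13 pads exactly to $B$. Duplicates are excluded throughout by the checks ``unused'' and $v\notin P$. So there are at most $B_v+B_g\le B$ valid distinct IDs, and line~14 orders them by $\prec_u$ with sentinels placed last. With $n$ rows of $B$ four-byte entries, the total storage is $4nB$ bytes.
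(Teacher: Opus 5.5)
Your proof follows the paper's argument step for step. Item (i) goes through the accept/reject split at lines~9--10 with $W\subseteq P\cap F(z)$. Item (ii) goes through the slot counter $g$ and the fact that the first scanned member of a fiber is always accepted because $W$ starts empty. Item (iii) goes through deduplication, $B_v+B_g\le B$, and the final sort. Making the monotonicity of $P$ and the degenerate case $F(z)=\{u\}$ explicit is a welcome addition. You are also right that the paper's one-line argument for (ii) glosses over the boundary case $L_u(\rho)=B_g$. If $\chi_u$ counted realized increments of $g$, the hypothesis would hold vacuously and (ii) could fail.

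Two small repairs are still needed.

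First, imposing $\chi_u(z)\ge 1$ shrinks the hypothesis of (ii). It is also false for exactly the ``$+0$'' sites the algorithm is built around, where every accepted member of $F(z)$ is already a Stage-1 vector survivor. Instead, let $\chi_u(z)$ be the budget-free number of accepted members of $F(z)$ that are not in the Stage-1 output. This is well defined because Stage~2 only appends members of the fiber currently being scanned. A budget cut-off merely truncates the accepted list to a prefix, so realized increments never exceed $\chi_u(z)$. Now split into two cases:
\begin{itemize}
\item If $\chi_u(z)\ge 1$, your counting contradiction goes through verbatim.
\item If $\chi_u(z)=0$, the first member of $F(z)\setminus\{u\}$ in vector order is always accepted, so it already lies in $P$ after Stage~1. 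Hence $\rowu(u)\cap F(z)\ne\varnothing$ whether or not the stream ever reaches $z$.
\end{itemize}

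Second, in (iii) the number of valid IDs is not bounded by $B_v+B_g$. The Stage-3 back-fill inserts valid IDs from $C_v(u)$ whenever Stage~2 underfills or $B>B_v+B_g$. The correct bound is $B$: $|P|\le B_v+B_g\le B$ before back-fill, and back-fill stops at width $B$.
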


\subsection{Online Query Processing and Spatial Entry}
\label{sec:search}
\label{sec:entry}
\label{sec:autotune}

\noindent\textbf{Two-Phase First-$\rk$-Active Expansion (Algorithm~\ref{alg:expand}).}
At query time, \method runs beam search with list size $L$ from active entry seeds and expands each visited vertex $u$ in two phases (Figure~\ref{fig:fka-rule}).
In \emph{Phase 1 (Lines 1--7)}, the algorithm scans $\rowu(u)$ sequentially, testing each neighbor $v$ against disk $c_q$ via a 32-bit bounding box and a 3D dot product $\langle\mathbf{u}_{s_v},\mathbf{u}_{g_q}\rangle\ge\cos(r_q/R_{\oplus})$ (\apx{app:stream-numerics}). Whenever $v\in\Aset(c_q)$, the active-rank counter increments ($j\gets j+1$), unvisited $v\notin\mathcal{V}$ enters buffer $A$, and the scan stops once $j=\rk$.
In \emph{Phase 2 (Lines 8--10)}, $D$-dimensional Euclidean distances $\dvec(x_q,x_v)$ are computed only for $v\in A$ (at most $\min\{\rk,B\}$ vector evaluations vs.\ at most $B$ scalar checks in Phase~1). Because $j$ increments on every active entry (including visited $a\in\mathcal{V}$ in Figure~\ref{fig:fka-rule}), beam search traverses the exact static subgraph $\RecG_{\rk}(c_q)$, where $\rk_{\mathrm{exact}}(c)=\rk_{\mathrm{SC}}(c)$, $\rk_{\mathrm{safe}}=B$, and $\Pr\{\rk_{\mathrm{SC}}(c)\le B\}\ge 1-\delta$.

\begin{algorithm}[t]
\small
\caption{\textsc{Expand}$(u,q,\rk)$: Two-phase first-$\rk$-active expansion}
\label{alg:expand}
\KwIn{$u\in\Aset(c_q)$; query $q=(x_q,c_q,k)$; budget $\rk$; visited set $\mathcal{V}$}
\KwOut{scored active neighbor list $D$ for the search beam}
$A\gets[\,]$;\quad $j\gets 0$;\quad $D\gets[\,]$\;
\ForEach{$v\in\rowu(u)$ in stored vector-distance order}{
  \lIf{$v=\bot$}{\textbf{break}}
  \lIf{$\dgeo(s_v,g_q)>r_q$}{\textbf{continue}}
  $j\gets j+1$\tcp*{active rank $\operatorname{rank}_{u,c_q}(v)$}
  \lIf{$v\notin\mathcal{V}$}{append $v$ to $A$}
  \lIf{$j=\rk$}{\textbf{break}}
}
\ForEach{$v\in A$}{
  append $\bigl(v,\dvec(x_q,x_v)\bigr)$ to $D$;\quad $\mathcal{V}\gets\mathcal{V}\cup\{v\}$\;
}
\Return $D$\;
\end{algorithm}

\begin{figure*}[!t]
  \centering
  \includegraphics[width=\textwidth]{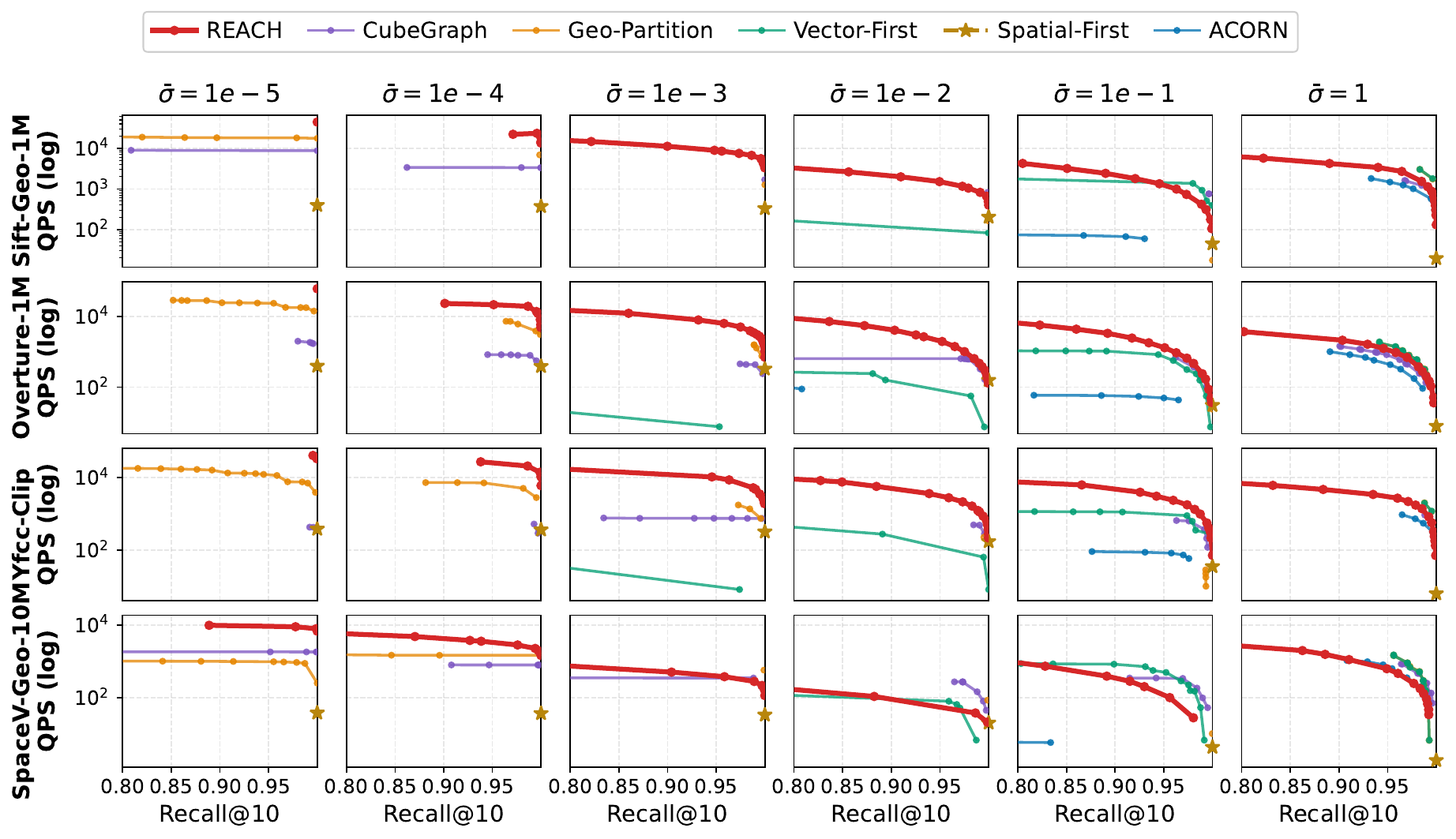}
  \caption{Single-threaded Recall@10--QPS curves across four datasets (rows) and six selectivities $\bar{\sigma}\in[10^{-5},1]$ (columns). All six panels in each row use the exact same frozen \method index.}
  \Description{A four-by-six matrix of recall-versus-QPS plots compares REACH with five baselines across four datasets and six selectivities.}
  \label{fig:main-curves}
\end{figure*}

\noindent\textbf{Practical Selection of $\rk$, Spatial Seeding, and Dynamic Updates.}
Because $\rk_{\mathrm{SC}}(c)$ has a median of $48$ on selective queries (Exp-3), a fixed default $\rk=64$ (or $\rk=32$ for $\bar{\sigma}\le 10^{-3}$) stays within $0.0032$ of the full-row ($\rk=\infty$) recall ceiling across all $20$ filtered cells in our benchmark; alternatively, using the empirical active fraction $\hat{\sigma}(c_q)$ observed during seeding, the rule $\rk(\hat{\sigma})=\min\{B,\lceil\rk_0+B_v\hat{\sigma}\rceil\}$ ($\rk_0\in[32,64]$) sets $\rk$ automatically. To seed the beam, a $0.5^\circ\times 0.5^\circ$ grid directory~\cite{Gaede_1998,Samet_1984} scans Chebyshev rings around $g_q$ to collect $\tau=16$ active seeds in $1.2\,\mu\mathrm{s}$ using $<0.4\%$ of graph memory (\apx{app:cell-index}; graph-descent seeding is compared in Exp-5). For dynamic updates, inserting $u_{\mathrm{new}}$ updates its grid cell and 3D tree in $O(\log M)$ time, builds $\rowu(u_{\mathrm{new}})$ through one beam search and a local coordinate stream, and inserts $u_{\mathrm{new}}$ into reverse neighbors' sorted rows; deletions set a tombstone bit skipped in Phase~1.

%% file: sections/05_experiments.tex
\section{Experiments}
\label{sec:experiments}

\begin{figure*}[!t]
  \centering
  \includegraphics[width=0.76\textwidth]{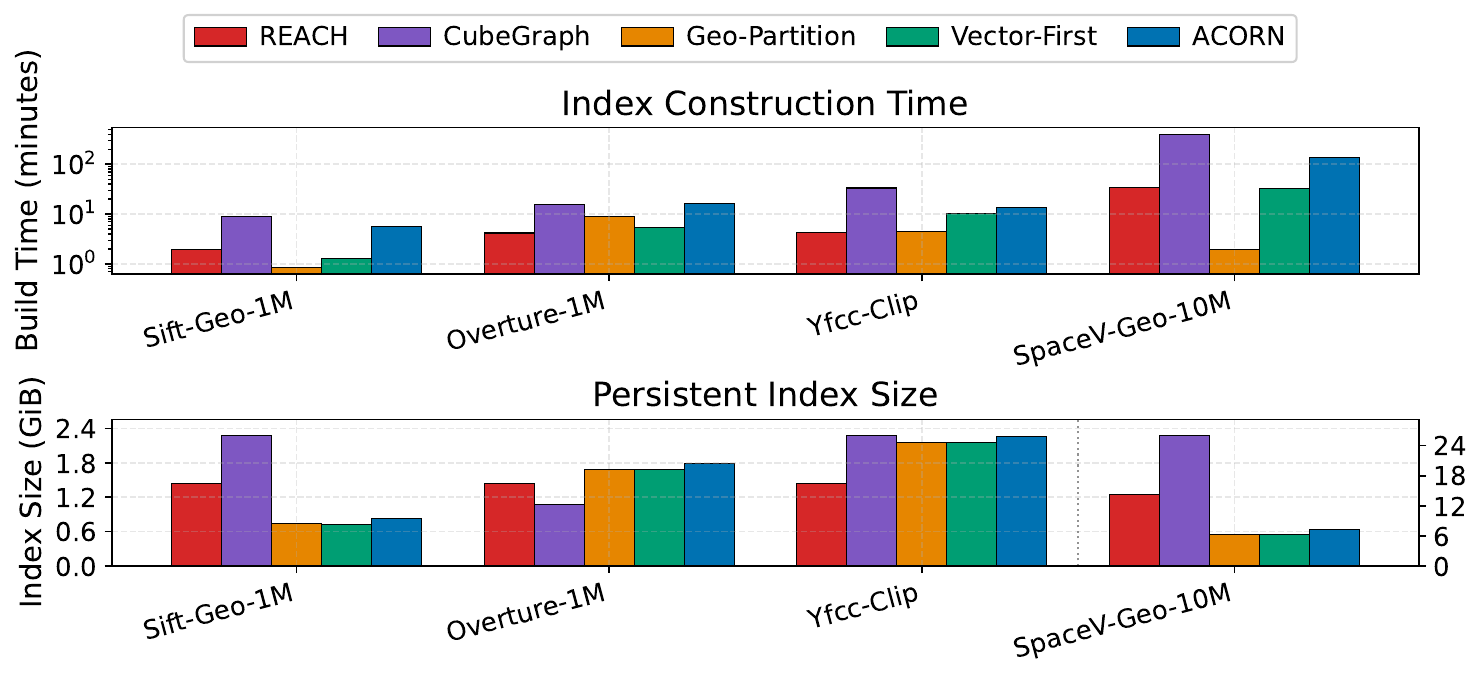}
  \caption{Index construction time (top, log scale) and persistent index size on disk (bottom; right axis for \textsc{SpaceV-Geo-10M}).}
  \Description{Grouped bars compare construction time and persistent index size for REACH and the baselines on four datasets.}
  \label{fig:build-resources}
\end{figure*}

\begin{figure*}[!t]
  \centering
  \includegraphics[width=\textwidth]{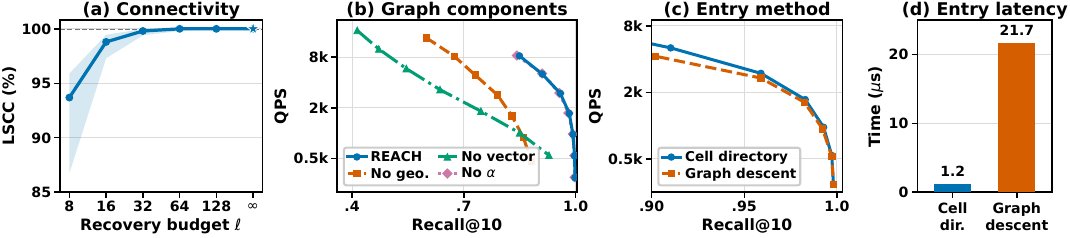}
  \caption{Structural connectivity check (\emph{a}, $\bar{\sigma}=10^{-3}$) and source/seeding ablations (\emph{b--d}, $\bar{\sigma}=10^{-2}, \rk=32$) on \textsc{Yfcc-Clip}.}
  \Description{Panel (a) shows largest-SCC coverage vs recovery budget; panels (b)-(d) compare row-source ablations and entry seeding.}
  \label{fig:core-entry-ablation}
\end{figure*}

\subsection{Experimental Setup}
\label{sec:setup}

\begin{table}[t]
  \centering
  \caption{Benchmark datasets ($1{,}000$ base-disjoint queries per cell).}
  \label{tab:datasets}
  \footnotesize
  \setlength{\tabcolsep}{1.5pt}
  \begin{tabular}{@{}lrrll@{}}
    \toprule
    \textbf{Dataset} & $n$ & $D$ & \textbf{Embeddings (\texttt{f32}, L2)} & \textbf{Coordinates}\\
    \midrule
    \textsc{Overture-1M} & $10^6$ & $384$ & POI \texttt{MiniLM-L6-v2} & real; $98.9\%$ distinct\\
    \textsc{Yfcc-Clip} & $9.98\times 10^5$ & $512$ & CLIP ViT-B/32~\cite{yfcc100m,clip} & real; $35.4\%$ distinct\\
    \textsc{Sift-Geo-1M} & $10^6$ & $128$ & official SIFT1M~\cite{sift1m} & synth.; $100\%$ distinct\\
    \textsc{SpaceV-Geo-10M} & $10^7$ & $100$ & official SpaceV~\cite{pmlr-v176-simhadri22a} & synth.; $100\%$ distinct\\
    \bottomrule
  \end{tabular}
\end{table}

\noindent\textbf{Datasets, Workloads, and Baselines.}
Table~\ref{tab:datasets} lists the four datasets: \textsc{Overture-1M} (US POIs from Overture Maps~2026-06-17.0, max fiber $41$), \textsc{Yfcc-Clip} (Web photos with $64.6\%$ repeated coordinates, max fiber $5{,}726$), and \textsc{Sift-Geo-1M} and \textsc{SpaceV-Geo-10M} (paired with uniform US coordinates independent of the vectors). We test six selectivities $\bar{\sigma}\in\{10^{-5},\allowbreak 10^{-4},\allowbreak 10^{-3},\allowbreak 10^{-2},\allowbreak 10^{-1},1\}$: $\bar{\sigma}\in[10^{-5},10^{-3}]$ ($10$--$10^4$ objects) represents street- or walking-scale radii, $\bar{\sigma}\in[10^{-2},10^{-1}]$ ($10^4$--$10^6$ objects) city- or metropolitan-scale radii, and $\bar{\sigma}=1$ unfiltered search, setting $r_q$ to the $m_{\mathrm{tar}}=\lceil\bar{\sigma}n\rceil$-th spatial-neighbor distance from $g_q$. We report $\mathrm{Recall}@10$ and single-threaded $\mathrm{QPS}$ over $1{,}000$ queries per cell against five baselines: (1)~\emph{Spatial-First} (exact spatial filter followed by vector scan); (2)~\emph{Vector-First} (global HNSW~\cite{hnsw} with post-filtering); (3)~\emph{Geo-Partition} (grid of cell-local HNSW indexes, tuning widths $\{1^\circ,0.5^\circ,0.25^\circ\}$ per dataset); (4)~\emph{CubeGraph}~\cite{https://doi.org/10.48550/arxiv.2604.06616} (hierarchical spatial-tree graph, tuning profiles $P_0,P_1,P_2$ per dataset); and (5)~\emph{ACORN}~\cite{acorn} ($M=32,\gamma=12,M_\beta=32$; \emph{Spatial-First} and \emph{Geo-Partition} also instantiate the primitives of Mesh~\cite{DBLP:journals/vldb/SongYCYXLC25}). Across all datasets, \method uses \emph{one fixed profile}: $B=384, B_v=128, B_g=256, K_{\mathrm{vec}}=192, \alpha=1.2, C_{\mathrm{tie}}=4096, \tau=16$, sweeping $\rk\in[8,\infty]$ and $L\in[16,1024]$ ($4096$ at $10^{-2}$ and $2048$ at $10^{-1}$ on $10^7$; \apx{app:repro}) on a 64-core AMD 3990X ($220$~GiB RAM, GCC~11.2 \texttt{-O3 -march=native}; \url{https://anonymous.4open.science/r/GeoSearch-new-1730/}).

\subsection{Experimental Results}
\label{sec:results}

\begin{figure}[!t]
  \centering
  \includegraphics[width=\columnwidth]{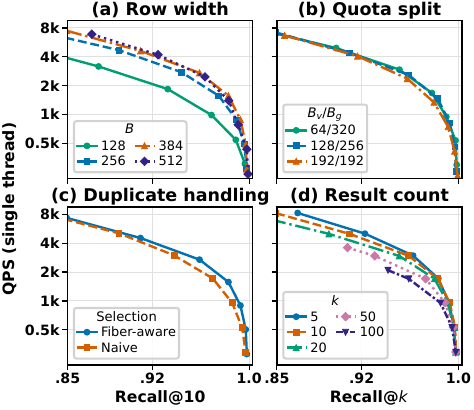}
  \caption{Sensitivity and robustness on \textsc{Yfcc-Clip} ($\bar{\sigma}=10^{-2}, \rk=32$): row width $B$, quota split $B_v/B_g$, fiber-aware $\alpha$-pruning vs.\ naive truncation, and result count $k\in[5,100]$.}
  \Description{Four recall-QPS plots vary row width, quota split, duplicate-coordinate selection, and requested result cardinality.}
  \label{fig:robustness}
\end{figure}

\noindent\textbf{Exp-1: End-to-End Query Throughput (Figure~\ref{fig:main-curves}).}
Across the $16$ selective cells ($\bar{\sigma}\in\{10^{-5},10^{-4},10^{-3},10^{-2}\}$), one \method index per dataset ranks fastest in $14$ of $16$ cells at $0.95$ Recall@10, achieving $2.49\times$ higher geometric-mean QPS than the best baseline in each cell ($6.33\times$ over CubeGraph, $3.56\times$ over Geo-Partition across selective cells where both reach $0.95$ recall).
\emph{(1)~Performance on selective disks:} At $\bar{\sigma}=10^{-5}$ ($0.95$ recall), \method reaches $60{,}760$~QPS (\textsc{Overture-1M}), $44{,}255$~QPS (\textsc{Sift-Geo-1M}), $41{,}191$~QPS (\textsc{Yfcc-Clip}), and $8{,}901$~QPS (\textsc{SpaceV-Geo-10M}), beating the best baseline per panel (Geo-Partition at $23{,}627, 18{,}029, 11{,}552$~QPS on $10^6$, and CubeGraph at $1{,}842$ vs.\ $988$~QPS on $10^7$) by $2.57\times, 2.45\times, 3.57\times, 4.83\times$ ($30.5\times, 95.2\times$ over CubeGraph on \textsc{Overture-1M}, \textsc{Yfcc-Clip}). At $\bar{\sigma}=10^{-4}, 10^{-3}$, \method reaches $21{,}495, 6{,}336$~QPS on \textsc{Overture-1M} ($2.94\times, 4.04\times$ over Geo-Partition; $26.0\times, 14.0\times$ over CubeGraph) and $21{,}046, 8{,}707$~QPS on \textsc{Yfcc-Clip} ($3.77\times, 4.92\times$ over Geo-Partition; $39.9\times, 11.5\times$ over CubeGraph).
\emph{(2)~Baseline limitations:} \emph{Vector-First} is fast at $\bar{\sigma}=1$ ($1{,}411$--$3{,}020$~QPS) but misses $0.95$ recall in all $12$ cells with $\bar{\sigma}\le 10^{-3}$ due to path erosion; \emph{Spatial-First} reaches $384$--$396$~QPS at $10^{-5}$ ($105\times$--$154\times$ slower) and $2.1$--$19.5$~QPS at $\bar{\sigma}=1$; \emph{ACORN} misses $0.95$ recall in $18/20$ filtered cells; and \emph{Geo-Partition} drops to $17.6$--$28.7$~QPS at $\bar{\sigma}=10^{-1}$ on $10^6$ ($10.5$~QPS on $10^7$), slower than \emph{Spatial-First} ($30.5$--$44.9$~QPS).
\emph{(3)~Clustered vs.\ uniform coordinates:} On both real datasets (\textsc{Overture-1M}, \textsc{Yfcc-Clip}), \method leads across \emph{all} filtered selectivities $\bar{\sigma}\in[10^{-5},10^{-1}]$ ($2{,}804, 2{,}384$~QPS at $10^{-2}, 10^{-1}$ on \textsc{Yfcc-Clip}, $5.60\times, 3.61\times$ over CubeGraph) and at $\bar{\sigma}=1$ on \textsc{Yfcc-Clip} ($2{,}810$~QPS). On synthetic \textsc{SpaceV-Geo-10M}, \method leads at $10^{-5},10^{-4}$, beats CubeGraph ($381$ vs.\ $351$~QPS) while trailing Geo-Partition ($586$~QPS) at $10^{-3}$, and trails at $10^{-2},10^{-1}$: by Corollary~\ref{cor:navigability}, with $M=10^7$ uniform sites ($10\times$ \textsc{Sift-Geo-1M}), a disk at $\bar{\sigma}\in[10^{-3},10^{-1}]$ covers $10^4$--$10^6$ sites, exceeding $\rho_c$ of $B_g=256$ while leaving $\bar{\sigma}B_v\approx 0.13$--$12.8$ active vector edges per row, whereas real datasets are spatially clustered.

\noindent\textbf{Exp-2: Build Time and Index Size (Figure~\ref{fig:build-resources}).}
\method builds in $117$~s (\textsc{Sift-Geo-1M}), $250$~s (\textsc{Overture-1M}), $252$~s (\textsc{Yfcc-Clip}), and $2{,}007$~s (\textsc{SpaceV-Geo-10M}), which is $4.6\times, 3.8\times, 7.9\times$, and $11.9\times$ faster than CubeGraph ($545$--$23{,}858$~s), $3.0\times$--$4.0\times$ faster than ACORN, and faster than Vector-First ($319, 613$~s) and Geo-Partition ($539, 274$~s) at $D\in\{384,512\}$. Its $4nB$-byte row table plus $O(n)$ grid ($<0.4\%$) uses $1.43$--$1.44$~GiB at $10^6$ and $14.34$~GiB at $10^7$, which is $37\%$--$45\%$ smaller than CubeGraph on \textsc{Sift-Geo-1M}, \textsc{Yfcc-Clip} ($2.28$~GiB), and \textsc{SpaceV-Geo-10M} ($26.08$~GiB; CubeGraph is $1.07$~GiB on \textsc{Overture-1M}), while FAISS stores the $4nD$-byte vectors ($1.68$--$2.26$~GiB).

\noindent\textbf{Exp-3: Direct Connectivity Verification (Figure~\ref{fig:core-entry-ablation}(a)).}
Over $1{,}000$ queries on \textsc{Yfcc-Clip} at $\bar{\sigma}=10^{-3}$, $\RecG_{\infty}(c)$ achieves $\mathrm{LSCC}=100\%$ on $98.4\%$ of queries ($>99\%$ on the rest; Theorem~\ref{thm:end-to-end}): active fibers are strongly connected for $99.998\%$ of co-located objects and the quotient has $100\%$ median LSCC. Cut rank $\rk_{\mathrm{SC}}(c)$ (Theorem~\ref{thm:cut-budget}) has median $48$ and p95 $123$ (LSCC reaches $100\%$ at $\rk=64$ median, $128$ p10).

\noindent\textbf{Exp-4: Recovery Budget $\rk$ and Query Work.}
With $L=16$ on \textsc{Overture-1M} ($\bar{\sigma}=10^{-1}$; \apx{app:extra-exp}), Recall@10 rises from $0.3730$ at $\rk=8$ ($16{,}864$~QPS) to $0.8576$ at $\rk=64$ ($3{,}243$~QPS) and $0.8691$ at $\rk=96$ ($2{,}257$~QPS, $1{,}330$ evaluations), nearly matching $\rk=\infty$ ($0.8738$, $4{,}157$ evaluations, $797$~QPS): from $\rk=8$ to $\infty$, vector evaluations grow by $37\times$ vs.\ $11\times$ for scans. Across \emph{all $20$ filtered cells}, the max-recall gap between $\rk=64$ and $\infty$ is $\le 0.0032$ ($\le 0.0008$ for $\bar{\sigma}\le 10^{-3}$), and the active fraction scored for $0.99$ recall drops from $1.0$ at $10^{-5}$ to $0.004$--$0.02$ at $\bar{\sigma}=1$.

\noindent\textbf{Exp-5--Exp-8: Ablations, Sensitivity, Scale, and Fibers.}
\emph{Exp-5 (Figure~\ref{fig:core-entry-ablation}(b)--(d)):} On \textsc{Yfcc-Clip} ($\bar{\sigma}=10^{-2}, \rk=32$), removing geographic ($B_g=0$) or vector edges ($B_v=0$) caps recall at $0.8811$ or $0.9285$ vs.\ $0.9981$ for both; grid seeding matches graph descent while reducing seed time by $18\times$ ($1.2$ vs.\ $21.7\,\mu\mathrm{s}$).
\emph{Exp-6 (Figure~\ref{fig:robustness}(a),(b),(d)):} $B=384$ matches $B=512$ at lower scan cost; $B_v/B_g=128/256$ balances scans ($2{,}611$ vs.\ $3{,}524$ for $192/192$ at $L=16$) and recall ($0.9590$ vs.\ $0.9513$ for $64/320$ at $L=64$); and $k\in[5,100]$ exceed $0.997$ recall.
\emph{Exp-7 ($10^6\to 10^7$):} From \textsc{Sift-Geo-1M} to \textsc{SpaceV-Geo-10M}, size scales by $10.0\times$ ($1.44\to 14.34$~GiB) and build time by $17.1\times$ ($117\to 2{,}007$~s, vs.\ $43.8\times$ for CubeGraph).
\emph{Exp-8 (Figure~\ref{fig:robustness}(c)):} On \textsc{Yfcc-Clip} (fibers up to $5{,}726$), fiber $\alpha$-pruning beats ID truncation ($0.9928$ vs.\ $0.9860$ recall at $L=256$; $0.9981$ at $L=1024$).

%% file: sections/06_related_work.tex
\section{Related Work}
\label{sec:related}

\noindent\textbf{Filtered and Hybrid ANN Search.}
Proximity graphs searched with beam search (HNSW~\cite{hnsw}, NSG~\cite{nsg}, DiskANN~\cite{diskann}) lead unfiltered ANN~\cite{DBLP:journals/pvldb/WangXY021,DBLP:journals/pacmmod/AziziEP25,DBLP:journals/is/AumullerBF20,Pan_2024}. For attribute filters~\cite{Chronis_2025}, Filtered-DiskANN~\cite{filteredDiskANN}, UNG~\cite{ung}, and JAG~\cite{jag} embed discrete labels into edges; CGIF~\cite{cgif} pairs HNSW with inverted files; SIEVE~\cite{sieve} builds workload-tuned sub-indexes; VBASE~\cite{vbase} uses relaxed monotonicity; NHQ~\cite{nhq} and DEG~\cite{deg} define composite attribute--vector metrics; and ACORN~\cite{acorn} filters two-hop neighborhoods. However, 2D disks cannot be indexed by finite inverted lists and act as hard boundaries ($\dgeo(s_v,g_q)\le r_q$). Range graphs (SeRF~\cite{serf}, uRNG~\cite{urng}, RNSG~\cite{rnsg}, WinFilter~\cite{window}, iRangeGraph~\cite{Xu_2024}, UNIFY~\cite{Liang_2024}, DSG~\cite{Peng_2025}, DIGRA~\cite{Jiang_2025}) rely on a 1D attribute order absent for 2D disks.

\noindent\textbf{Spatial--Vector and Fiber-Navigable Retrieval.}
Spatial~\cite{Guttman_1984,Samet_1984,Gaede_1998} and geo-textual indexes~\cite{chen2021survey} prune 2D space before secondary filtering (\emph{Spatial-First}, \emph{Geo-Partition}); \method uses a flat grid only for $\tau$ entry seeds. Mesh~\cite{DBLP:journals/vldb/SongYCYXLC25} tunes cell graphs for a known workload and switches between spatial scan and cell-graph search by selectivity, whereas CubeGraph~\cite{https://doi.org/10.48550/arxiv.2604.06616} builds graphs across tree levels. Unlike FNS~\cite{fiberann}, which calls any predicate-induced subgraph a ``fiber,'' our \emph{coordinate fibers} $F(z)$ are co-located object classes admitted in full or not at all by any disk (Lemma~\ref{lem:fiber-closure}), which connects intra-fiber $\alpha$-pruning with RGG theory~\cite{Penrose_1999,Penrose_2003,Balister_2005,PenroseYang_2025,DaleyVereJones_2003}.

%% file: sections/07_conclusion.tex
\section{Conclusion}
\label{sec:conclusion}

We proposed \method for \problem search via \emph{Ranked Active-Graph Recovery}: (1)~filtering a vector-sorted row preserves survivor order, recovering $\RecG_{\rk}(c)$ governed by cut rank $\rk_{\mathrm{SC}}(c)$; and (2)~coordinate fibers pass or fail any disk filter as a unit, yielding polylogarithmic-degree connectivity and $2.49\times$ higher selective QPS.

%% file: appendix/theory_proofs.tex
\section{Complete Proofs of Theoretical Results}
\label{app:proofs}

This appendix provides the complete formal proofs of all lemmas, propositions, theorems, and corollaries stated in Sections~\ref{sec:preliminaries}, \ref{sec:theory}, and~\ref{sec:index}.

\subsection{Online Realization Invariance and Omitted Main-Text Proofs}
\label{app:main-proofs}

\begin{proof}[Proof of Theorem~\ref{thm:cut-budget}]
Part~(a) holds because Definition~\ref{def:recovery-graph} keeps the vertex set $\Aset(c)$ while selecting at most $\min\{\rk,|C_u(c)|\}\le\min\{\rk,B\}$ outgoing edges per vertex. Part~(b) follows because relaxing $\operatorname{rank}_{u,c}(v)\le\rk$ only adds edges, and $|C_u(c)|\le B$. For Part~(c), a finite directed graph is strongly connected iff every nonempty proper subset $S\subsetneq\Aset(c)$ has at least one outgoing edge to $\Aset(c)\setminus S$. Under Definition~\ref{def:recovery-graph}, the lowest-rank active edge leaving $S$ in $\RecG_{\infty}(c)$ has rank $\min_{u\in S,v\in C_u(c)\setminus S}\operatorname{rank}_{u,c}(v)$, which is retained in $\RecG_{\rk}(c)$ across every cut $S$ iff $\rk\ge\rk_{\mathrm{SC}}(c)$.
\end{proof}

\begin{proof}[Proof of Lemma~\ref{lem:fiber-closure} and Theorem~\ref{thm:factorization}]
For Lemma~\ref{lem:fiber-closure}, all $v\in F(z)$ share $s_v=z$ and distance $\dgeo(z,g)$ to $g$, so $\dgeo(z,g)\le r$ holds either for all of $F(z)$ or for none. For Theorem~\ref{thm:factorization}, given $u\in F(z_0)$ and $v\in F(z_t)$ in $\Aset(c)$, Condition~(ii) supplies a quotient path $z_0\to\cdots\to z_t$ witnessed by edges $(a_i,b_{i+1})\in E$ with $a_i\in F(z_i),b_{i+1}\in F(z_{i+1})$, and Condition~(i) supplies intra-fiber paths $u\to a_0$ in $G[F(z_0)]$, $b_i\to a_i$ in $G[F(z_i)]$, and $b_t\to v$ in $G[F(z_t)]$.
\end{proof}

\begin{remark}[Visitation-Order Invariance of Algorithm~\ref{alg:expand}]
When Algorithm~\ref{alg:expand} expands an active vertex $u\in\Aset(c_q)$, Phase~1 scans $\rowu(u)$ in stored $\prec_u$ order, tests each entry against $\Aset(c_q)$, and increments the active ordinal counter $j$ on line~5 for \emph{every} active entry $v\in C_u(c_q)$ before checking whether $v\in\mathcal{V}$ on line~6. The scan halts on line~7 immediately after processing the entry with $\operatorname{rank}_{u,c_q}(v)=\rk$ (or at the end of the row when $|C_u(c_q)|<\rk$). Consequently, the set of active neighbors inspected at $u$ is $\{v\in C_u(c_q):\operatorname{rank}_{u,c_q}(v)\le\rk\}$, which coincides with the out-neighborhood of $u$ in $\RecG_{\rk}(c_q)$ and depends solely on $\rowu(u)$, $c_q$, and $\rk$, independent of the runtime visited set $\mathcal{V}$.
\end{remark}

\begin{proof}[Proof of Theorem~\ref{thm:end-to-end}]
The second inequality in~\eqref{eq:quota-scaling} satisfies the premise of Theorem~\ref{thm:fiber-connectivity}, which establishes Condition~(i) of Theorem~\ref{thm:factorization} for all active fibers with failure probability at most $\delta/3$. The first inequality in~\eqref{eq:quota-scaling} satisfies the premise of Theorem~\ref{thm:quotient-connectivity}, which establishes Condition~(ii) of Theorem~\ref{thm:factorization} with failure probability at most $2\delta/3$. By the union bound, both conditions hold simultaneously with probability at least $1-\delta$, and Theorem~\ref{thm:factorization} implies that $G[\Aset(c)]$ is strongly connected. Because all conditional mark bounds hold uniformly across realizations of the site locations and contexts, taking expectations over the site and context distributions preserves the $1-\delta$ bound under arbitrary location--vector dependence permitted by Assumption~\ref{asm:marked-sites}.
\end{proof}

\begin{proof}[Proof of Corollary~\ref{cor:navigability}]
By Part~(b) of Theorem~\ref{thm:quotient-connectivity}, on the event $L_u(\rho_c)\le a_\alpha(h_c+1)\le B_g$, the geographic stream of any active vertex $u\in\Aset(c)$ visits every active coordinate within great-circle distance $\rho_c$ without exhausting its geographic quota $B_g$. Under the compact-disk condition $\operatorname{diam}(\Omega_c)\le\rho_c$, every active target $v\in\Aset(c)$ satisfies $\dgeo(s_u,s_v)\le\operatorname{diam}(\Omega_c)\le\rho_c$; therefore, $u$ visits $F(s_v)$ and, because $K_f\ge |F(s_v)|-1$ and $B_g$ does not run out, examines $v$ under the intra-fiber $\alpha$-rule. Thus, either $u$ retains the direct edge $(u,v)$ (giving $w=v$ with $0 \le \dvec(x_u,x_v)/\alpha$), or $v$ is pruned in favor of an already retained witness $w\in F(s_v)$ satisfying $\dvec(x_w,x_v)\le\dvec(x_u,x_v)/\alpha$. Once inside $F(s_v)$, subsequent hops remain in $F(s_v)$ and contract the remaining distance to $v$ by at least $\alpha$ per hop by Lemma~\ref{lem:fiber-witness} (with $r=D_x$). Because distinct objects are separated by at least $\varepsilon_\alpha$ on the charge event, greedy routing reaches $v$ in at most $\lceil\log_\alpha(D_x/\varepsilon_\alpha)\rceil$ hops. Finally, for an arbitrary query vector $x_q$, let $u\in\Aset(c)$ be a local minimum of greedy search (so $\dvec(x_q,x_u)\le\dvec(x_q,x_w)$ for all active out-neighbors $w$ of $u$) and let $v^\star = \arg\min_{v\in\Aset(c)}\dvec(x_q,x_v)$. Writing $A=\dvec(x_q,x_u)$ and $B^\star=\dvec(x_q,x_{v^\star})$, the witness $w$ for $v^\star$ satisfies $\dvec(x_w,x_{v^\star})\le\dvec(x_u,x_{v^\star})/\alpha \le (A+B^\star)/\alpha$. By the triangle inequality, $A \le \dvec(x_q,x_w) \le B^\star + \dvec(x_w,x_{v^\star}) \le B^\star + (A+B^\star)/\alpha$, which rearranges to $A \le B^\star(\alpha+1)/(\alpha-1)$.
\end{proof}

\begin{corollary}[Simultaneous Guarantee Across a Query Workload]
\label{cor:workload}
Let $\mathcal{C}$ be any workload of $Q$ query disks satisfying $M\pi_c \ge \log(6Q/\delta)$ for all $c\in\mathcal{C}$. Replacing $\delta$ by $\delta/Q$ in Equations~\eqref{eq:alpha-charge}--\eqref{eq:quota-scaling} ensures that the single constructed graph $G$ satisfies Theorem~\ref{thm:end-to-end} for all $Q$ disks simultaneously with probability at least $1-\delta$.
\end{corollary}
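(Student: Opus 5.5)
The plan is a union bound over the $Q$ disks on top of Theorem~\ref{thm:end-to-end} applied at confidence $\delta/Q$. The graph $G$ is built once, by Algorithm~\ref{alg:row-assembly}, with quotas chosen from Equations~\eqref{eq:alpha-charge}--\eqref{eq:quota-scaling} after substituting $\delta\mapsto\delta/Q$. Concretely, $a_\alpha$, $h_c$ and $K_f$ become $O(\log(nMQ/\delta))$ and $B_g=\Theta(\log^2(nMQ/\delta))$. Here $B_g$ is taken as the maximum of $a_\alpha(h_c+1)$ over $c\in\mathcal{C}$, so that the premise of Equation~\eqref{eq:quota-scaling} holds for every $c$ at once. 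The hypothesis $M\pi_c\ge\log(6Q/\delta)$ is exactly the premise $M\pi_c\ge\log(6/\delta')$ of Theorem~\ref{thm:end-to-end} with $\delta'=\delta/Q$. Hence for each fixed $c\in\mathcal{C}$, the event $\mathcal{E}_c$ that conclusions (a)--(c) hold for $c$ fails with probability at most $\delta/Q$.

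I would split the failure events into two kinds to avoid wasteful counting.
\begin{itemize}
\item \emph{Disk-independent events} (the last two items in the statement of Theorem~\ref{thm:end-to-end}): intra-fiber strong connectivity for all $M$ fibers, and the $\alpha$-charge bound~\eqref{eq:alpha-charge} on per-site consumption $a_\alpha$. These are statements about the single graph $G$ and do not depend on $c$. They hold simultaneously for all disks with failure probability at most $\delta/(3Q)\le\delta/3$.
\item \emph{Disk-dependent events}: the RGG backbone event for $\Omega_c$, namely quotient connectivity within radius $\rho_c$ together with at most $h_c$ sites per $\rho_c$-ball. Each fails with probability at most $2\delta/(3Q)$.
\end{itemize}
A union bound over the $Q$ disks gives total failure at most $\delta/(3Q)+Q\cdot 2\delta/(3Q)\le\delta$. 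Even the crude bound $Q\cdot\delta/Q=\delta$, obtained by union-bounding the full per-disk events, already suffices.

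On the intersection of these events, I would then rerun the deterministic argument of Theorem~\ref{thm:end-to-end} for each $c\in\mathcal{C}$:
\begin{enumerate}
\item The cumulative charge satisfies $L_u(\rho_c)\le a_\alpha(h_c+1)\le B_g$ for every active $u$.
\item By Proposition~\ref{prop:assembly-contract}(ii), $u$ keeps a representative in every active site within $\rho_c$.
\item Combined with RGG connectivity, this gives Condition~(ii) of Theorem~\ref{thm:factorization}.
\item Intra-fiber connectivity, together with Lemma~\ref{lem:fiber-closure}, gives Condition~(i).
\item Therefore $G[\Aset(c)]$ is strongly connected.
\end{enumerate}

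The main obstacle is making sure the single $G$ is not implicitly disk-dependent. Theorem~\ref{thm:end-to-end}'s quota $B_g\ge a_\alpha(h_c+1)$ involves $h_c$, which I would make uniform via $h=\lceil C_h\log(3MQ/\delta)\rceil$, since $h_c$ depends on $c$ only through $\delta$. A related point is that the randomness in the theorem is over the same point process for every disk, so the events $\mathcal{E}_c$ live on one probability space; this is exactly what permits the union bound. No independence between disks is needed.
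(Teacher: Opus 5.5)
Your proposal is correct and matches the paper's proof, which applies Theorem~\ref{thm:end-to-end} at $\delta'=\delta/Q$ to each disk, union-bounds over the $Q$ per-disk failure events, and notes that $\delta$ enters the quotas only inside logarithms, so a single static graph suffices. Your finer disk-independent versus disk-dependent split is an optional refinement with slightly off bookkeeping: the $\alpha$-charge event's $\delta/(3Q)$ actually sits inside the $2\delta/3$ of part~(b), so the disk-independent events fail with at most $2\delta/(3Q)$ and each disk's RGG events with at most $\delta/(3Q)$; the total still stays below $\delta$, and your crude bound $Q\cdot\delta/Q$ is exactly what the paper uses.
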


\begin{proof}[Proof of Corollary~\ref{cor:workload}]
Apply Theorem~\ref{thm:end-to-end} with target failure probability $\delta' = \delta/Q$ to each of the $Q$ disks in $\mathcal{C}$, and take a union bound over the $Q$ events. Because $\delta'$ enters the degree bounds~\eqref{eq:alpha-charge}--\eqref{eq:quota-scaling} only inside logarithms, the required quotas grow only additively by $O(\log Q)$ while using a single static graph.
\end{proof}

\begin{proof}[Proof of Proposition~\ref{prop:assembly-contract}]
For Item~(i), every candidate $v$ accepted at line~10 is placed in $P$ (if not already present), and every candidate rejected at line~9 is witnessed by an earlier member $w\in W\subseteq P\cap F(z)$. For Item~(ii), the counter $g$ increments by exactly $\chi_u(z)$ for each visited coordinate $z$; thus $L_u(\rho)\le B_g$ ensures that the outer loop reaches every coordinate within distance $\rho$ before $g$ reaches $B_g$, and the first candidate examined in each nonempty fiber $F(z)$ ($z\ne s_u$) is always accepted into $W\subseteq P$ because $W$ is initially empty. Item~(iii) follows directly from deduplication, $B_v+B_g\le B$, and the final sort at line~14.
\end{proof}

\subsection{Exact Active-Neighbor Special Case}
\label{app:realization}

Although strong connectivity and navigability do not require $\RecG_{\rk}(c)$ to match an exact $j$-nearest-neighbor graph edge-for-edge, our recovery operator reproduces the exact active $j$-NN neighborhood whenever the row contains those neighbors. Specifically, let $N_j^\star(u,c)$ denote the exact top-$j$ active vector neighbors of $u$ in $\Aset(c)\setminus\{u\}$ ordered by $\prec_u$. Then
\begin{equation}
\label{eq:exact-special-case}
  N_j^\star(u,c)\subseteq\rowu(u)
  \quad\Longleftrightarrow\quad
  N^+_{\RecG_j(c)}(u) = N_j^\star(u,c).
\end{equation}
To verify~\eqref{eq:exact-special-case}, observe that both $\rowu(u)$ and $N_j^\star(u,c)$ are ordered by the same strict total order $v \prec_u w \iff (\dvec(x_u,x_v),v) <_{\mathrm{lex}} (\dvec(x_u,x_w),w)$. If $N_j^\star(u,c)\subseteq\rowu(u)$, then removing inactive entries from $\rowu(u)$ leaves $C_u(c)$ as a subsequence of $\prec_u$ that contains the globally smallest $\min\{j,m-1\}$ elements of $\Aset(c)\setminus\{u\}$; hence its prefix of length $\min\{j,m-1\}$ is precisely $N_j^\star(u,c)$. The converse holds because every edge of $\RecG_j(c)$ at $u$ is drawn from $\rowu(u)$.

\subsection{Proof of Disk Range Complexity}
\label{app:range-complexity}

\begin{proof}[Proof of Theorem~\ref{thm:range-complexity}]
Map each point $s\in\mathbb{S}^2$ to its unit Cartesian vector $\mathbf{u}_s\in\mathbb{R}^3$. On a sphere of radius $R_{\oplus}$, a closed spherical disk $c=(g,r)$ with central angle $\theta = \min\{r/R_{\oplus},\pi\}\in[0,\pi]$ is given by
\[
  \{s\in\mathbb{S}^2 : \langle\mathbf{u}_s, \mathbf{u}_g\rangle \ge \cos\theta\},
\]
which is the intersection of $\mathbb{S}^2$ with a closed affine halfspace in $\mathbb{R}^3$. Because the class of affine halfspaces in $\mathbb{R}^3$ has VC dimension $4$~\cite{Vapnik_1971,Matousek_2002}, restricting the domain to $\mathbb{S}^2\subset\mathbb{R}^3$ cannot increase the VC dimension. By the Sauer--Shelah lemma~\cite{Sauer_1972}, the number of distinct subsets induced on $n$ points is at most $\sum_{i=0}^{4}\binom{n}{i} = O(n^4)$.

For the $\Omega(n^2)$ lower bound, place $n$ points $s_1,\ldots,s_n$ along a short geodesic arc of $\mathbb{S}^2$ at equal great-circle spacing $\eta < \pi R_{\oplus}/(2n)$. For any index pair $1\le i\le j\le n$, let $g_{i,j}$ be the geodesic midpoint between $s_i$ and $s_j$ along the arc, and set the disk radius to $r_{i,j} = (j-i)\eta/2$. Then $\Aset(g_{i,j},r_{i,j}) = \{s_i, s_{i+1}, \ldots, s_j\}$. These $\binom{n}{2}+n$ nonempty contiguous intervals (together with $\varnothing$ at $r=0$ away from the arc) are all distinct, yielding $n(n+1)/2 + 1 = \Omega(n^2)$ active sets.
\end{proof}

\subsection{Proof of Intra-Fiber Connectivity (Theorem~\ref{thm:fiber-connectivity})}
\label{app:fiber-connectivity}

We first state the formal regularity conditions of the hierarchical marked-site model introduced in Section~\ref{sec:obligations} and prove the deterministic witness-lifting lemma.

\begin{assumption}[Regular Binomial Marked Sites]
\label{asm:marked-sites}
\emph{(a)~Spatial regularity:} $0<p_{s,-}\le p_s(z)\le p_{s,+}<\infty$, and every connected disk region $\Omega_c=\mathcal{S}\cap\mathcal{B}(g,r)$ satisfies $\lambda_2(\Omega_c\cap\mathcal{B}(z,t))\ge c_s t^2$ for all $z\in\Omega_c$ and $0<t\le\operatorname{diam}(\Omega_c)$.
\emph{(b)~Vector regularity:} $(\mathcal{X},\dvec)$ has diameter $D_x$ and doubling constant $C_D$. Each support $\mathcal{X}_{z,\theta}$ is compact, connected, and geodesic of intrinsic dimension $d_{\mathcal{X}}>0$ with atomless distances, and for constants $r_x,q_{x,-},q_{x,+},C_x>0$ and all $0<r\le r_x$, $x\in\mathcal{X}_{z,\theta}$, $y\in\mathcal{X}$: $q_{x,-}r^{d_{\mathcal{X}}}\le\nu_{z,\theta}(\mathcal{B}(x,r))$, $\nu_{z,\theta}(\mathcal{B}(y,r))\le q_{x,+}r^{d_{\mathcal{X}}}$, and covering number $\mathcal{N}(\mathcal{X}_{z,\theta},r)\le C_x r^{-d_{\mathcal{X}}}$, with minimum separation scale $\varepsilon_\alpha=\min\{r_x,D_x/2,(\delta/(3n^2 q_{x,+}))^{1/d_{\mathcal{X}}}\}$.
\end{assumption}

\begin{lemma}[Intra-Fiber Witness Lifting]
\label{lem:fiber-witness}
If every vertex $u\in F(z)$ has at most $K_f$ other fiber members within vector distance $r$, then every $r$-radius pair $(u,v)$ in $F(z)$ is connected in $G[F(z)]$ by a directed path of at most $\lceil\log_\alpha(\dvec(x_u,x_v)/\varepsilon_z)\rceil$ hops (where $\varepsilon_z=\min_{u'\ne v'\in F(z)}\dvec(x_{u'},x_{v'})>0$) that contracts vector distance to $x_v$ by at least $\alpha$ per hop.
\end{lemma}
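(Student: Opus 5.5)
The plan is a deterministic greedy-descent argument driven by the intra-fiber $\alpha$-witness rule (Rule~(1) of Section~\ref{sec:obligations}, realized in lines~7--11 of Algorithm~\ref{alg:row-assembly} and certified by Proposition~\ref{prop:assembly-contract}(i)). Fix an $r$-radius pair $(u,v)$ in $F(z)$, meaning $\dvec(x_u,x_v)\le r$. By the hypothesis, $u$ has at most $K_f$ other fiber members within distance $r$. The members of $F(z)\setminus\{u\}$ scanned in ascending $\dvec(x_u,\cdot)$ order therefore reach $v$ within the first $K_f$ candidates, since every member preceding $v$ also lies within distance $r$. So $v$ is examined by the rule. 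I will assume the geographic quota is not exhausted before $F(z)$ is processed, as on the charge event of \eqref{eq:alpha-charge}; for $u$'s own fiber, this amounts to treating $F(s_u)$ as scanned first.

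\textbf{One-hop step.} When $v$ is examined, it is either retained, so that $(u,v)\in E$, or pruned by an earlier retained witness $w\in F(z)$. In the pruned case, $w$ is stored in $\rowu(u)$ and satisfies $\dvec(x_w,x_v)\le\dvec(x_u,x_v)/\alpha$. In either case $u$ has an out-neighbor $w_1\in F(z)$ with $\dvec(x_{w_1},x_v)\le\dvec(x_u,x_v)/\alpha$, where $w_1=v$ is allowed. The edge $(u,w_1)$ lies in $G[F(z)]$ because both endpoints are in $F(z)$.

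\textbf{Iteration.} If $w_1\ne v$, then $\dvec(x_{w_1},x_v)<\dvec(x_u,x_v)\le r$, so $(w_1,v)$ is again an $r$-radius pair and the one-hop step applies at $w_1$. This produces $w_2,w_3,\ldots$ with
\[
\dvec(x_{w_i},x_v)\le\alpha^{-i}\dvec(x_u,x_v).
\]
This is the claimed contraction by $\alpha$ per hop.

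\textbf{Termination and hop count.} Every distinct pair of fiber members is at distance at least $\varepsilon_z>0$. Hence as soon as $\alpha^{-i}\dvec(x_u,x_v)<\varepsilon_z$, we must have $w_i=v$. Note $\varepsilon_z>0$ holds because distances are atomless, so the fiber vectors are almost surely distinct. This gives at most $\lceil\log_\alpha(\dvec(x_u,x_v)/\varepsilon_z)\rceil$ hops, with a possible $+1$ for the boundary case $\alpha^{-i}\dvec(x_u,x_v)=\varepsilon_z$. I would absorb that boundary case by using $\le$ in the witness test together with the strictness that atomlessness supplies.

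\textbf{Main obstacle.} The contraction itself is immediate. The delicate step is checking that $v$ really falls within the first $K_f$ scanned candidates and is not cut off by the quota or by $C_{\mathrm{tie}}$. Ties in $\dvec(x_u,\cdot)$ at exactly radius $r$ are broken by ID under $\prec_u$, which could push $v$ past the cutoff. Atomless distances remove these ties almost surely. The condition $K_f=\min\{B_g,C_{\mathrm{tie}}\}$ then ensures neither the geographic budget nor the tie cap stops the scan before $v$ is reached.
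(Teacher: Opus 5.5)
Your argument follows the paper's own proof step for step. The scan-order count places $v$ among the first $K_f\le\min\{B_g,C_{\mathrm{tie}}\}$ candidates of $u$'s own fiber, the accept-or-witness step yields $w_1$ with $\dvec(x_{w_1},x_v)\le\dvec(x_u,x_v)/\alpha$, and iterating on the strictly shorter $r$-radius pair $(w_1,v)$ gives the contraction. Two of your hedges are unnecessary. Ties at radius $r$ cannot push $v$ past the cutoff, because every $w\prec_u v$ has $\dvec(x_u,x_w)\le\dvec(x_u,x_v)\le r$ and is already counted by the hypothesis; your own first paragraph shows this. The quota argument is also deterministic: $F(s_u)$ is streamed first with $g=0$, so you need no appeal to the charge event, and the paper argues exactly this way.

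The step that fails is absorbing the boundary case via atomlessness. The case $\alpha^{-i}\dvec(x_u,x_v)=\varepsilon_z$ with $i=0$ is precisely the closest pair of $F(z)$. That pair exists deterministically and is an $r$-radius pair whenever any $r$-radius pair exists. For it the stated bound is $\lceil\log_\alpha 1\rceil=0$ hops, yet $u\ne v$ needs one hop. No choice between $\le$ and $<$ in the witness test changes this.

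What your termination argument actually proves is $t\le 1+\lfloor\log_\alpha(\dvec(x_u,x_v)/\varepsilon_z)\rfloor$. This coincides with $\lceil\log_\alpha(\dvec(x_u,x_v)/\varepsilon_z)\rceil$ only when the logarithm is not an integer. The paper's proof has the same off-by-one: it correctly derives $\varepsilon_z\le\alpha^{-(t-1)}\dvec(x_u,x_v)$ for the penultimate vertex, then writes the ceiling. State the bound as $1+\lfloor\cdot\rfloor$ (or $1+\lceil\cdot\rceil$) rather than claim the ceiling. The correction costs a single additive hop.
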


\begin{proof}[Proof of Lemma~\ref{lem:fiber-witness}]
Fix any $u,v\in F(z)$ with $0 < \dvec(x_u,x_v) \le r$. During construction for owner $u$, $F(z)=F(s_u)$ is the first fiber visited by the geographic stream. By assumption, at most $K_f \le \min\{B_g, C_{\mathrm{tie}}\}$ other members of $F(z)$ have vector distance at most $r$ from $x_u$. Because candidates in $F(z)\setminus\{u\}$ are examined in ascending order of $\dvec(x_u,\cdot)$, candidate $v$ is guaranteed to be examined before either the scan cap $C_{\mathrm{tie}}$ or the quota $B_g$ is reached.

When $v$ is examined, if $v$ is accepted, then $(u,v)\in E(G[F(z)])$ forms a 1-hop path with distance $0 \le \dvec(x_u,x_v)/\alpha$. Otherwise, $v$ is rejected at line~12 of Algorithm~\ref{alg:row-assembly} because some already accepted survivor $w_1\in F(z)$ satisfies $\dvec(x_{w_1},x_v) \le \dvec(x_u,x_v)/\alpha < \dvec(x_u,x_v) \le r$. Thus $(w_1,v)$ is itself a strictly shorter $r$-radius pair inside $F(z)$. Repeating this argument inductively from $w_1$ generates a sequence $u=w_0 \to w_1 \to \cdots \to w_t = v$ in $G[F(z)]$ along which $\dvec(x_{w_j},x_v) \le \alpha^{-j}\dvec(x_u,x_v)$. Since any two distinct vertices in $F(z)$ have distance at least $\varepsilon_z$, the penultimate vertex satisfies $\varepsilon_z \le \dvec(x_{w_{t-1}},x_v) \le \alpha^{-(t-1)}\dvec(x_u,x_v)$, which implies $t \le \lceil\log_\alpha(\dvec(x_u,x_v)/\varepsilon_z)\rceil$. Finally, if the symmetric $r$-radius graph on $F(z)$ is connected, replacing both orientations of every $r$-radius edge by its lifted directed path makes $G[F(z)]$ strongly connected.
\end{proof}

\begin{proof}[Proof of Theorem~\ref{thm:fiber-connectivity}(a)]
Fix a fiber $F(z)$ of multiplicity $f = |F(z)|$. If $f - 1 \le K_f$, then every vertex $u\in F(z)$ examines all $f-1$ other members of $F(z)$, so Lemma~\ref{lem:fiber-witness} applies with $r = D_x$ deterministically (with zero failure probability). Suppose now that $f - 1 > K_f$. Conditioned on the site $Z_i=z$ and context $\Theta_i=\theta$, choose a scale $r_f > 0$ with $3r_f \le r_x$ and define $\mu_f = (f-1)q_{x,+}(3r_f)^{d_{\mathcal{X}}}$. We claim that $G[F(z)]$ satisfies Parts~(a) and~(b) except with probability at most
\begin{equation}
\label{eq:delta-fiber}
  \delta_{\mathrm{fib}}(z) =
  C_x(r_f/2)^{-d_{\mathcal{X}}} \exp\!\bigl(-f q_{x,-}(r_f/2)^{d_{\mathcal{X}}}\bigr)
  + f\left(\frac{e\mu_f}{K_f+1}\right)^{K_f+1},
\end{equation}
whenever $K_f + 1 \ge \mu_f$. To see this, cover the geodesic support $\mathcal{X}_{z,\theta}$ by at most $C_x(r_f/2)^{-d_{\mathcal{X}}}$ balls of radius $r_f/2$. By Assumption~\ref{asm:marked-sites}, each ball has $\nu_{z,\theta}$-probability at least $q_{x,-}(r_f/2)^{d_{\mathcal{X}}}$, so the probability that any cover ball receives zero marks among the $f$ i.i.d.\ draws is bounded by the first term of~\eqref{eq:delta-fiber}. When every cover ball is nonempty, the fiber marks form an $r_f$-net of the connected geodesic space $\mathcal{X}_{z,\theta}$, which implies that the symmetric $3r_f$-radius graph on $F(z)$ is connected. Next, conditioned on $x_u$, the number of other fiber marks falling inside $\mathcal{B}(x_u, 3r_f)$ is stochastically dominated by $\operatorname{Bin}(f-1, q_{x,+}(3r_f)^{d_{\mathcal{X}}})$ with mean $\mu_f$. By the multiplicative Chernoff bound for $K_f+1 \ge \mu_f$ and a union bound over the $f$ vertices of $F(z)$, the probability that any vertex $u\in F(z)$ has more than $K_f$ fiber neighbors within radius $3r_f$ is at most the second term of~\eqref{eq:delta-fiber}. On the intersection of both events, Lemma~\ref{lem:fiber-witness} (applied at radius $r = 3r_f$) establishes both (a) and (b).

To obtain the uniform logarithmic bound, set $L_0 = \log(3nM/\delta)$ and choose $r_f^{d_{\mathcal{X}}} = A L_0 / f$, where $A = 2^{d_{\mathcal{X}}+1}/q_{x,-}$. Because $f > K_f \ge C_f L_0$, choosing
\[
  C_f \ge \max\!\left\{A(3/r_x)^{d_{\mathcal{X}}},\; \frac{C_x q_{x,-}}{3},\; e^2 q_{x,+} 3^{d_{\mathcal{X}}} A,\; 2\right\}
\]
ensures first that $3r_f < 3(A/C_f)^{1/d_{\mathcal{X}}} \le r_x$. Second, substituting $r_f^{d_{\mathcal{X}}} = A L_0 / f$ into~\eqref{eq:delta-fiber} bounds the first term by $\frac{C_x q_{x,-} f}{2 L_0}(\frac{\delta}{3nM})^2 \le \frac{\delta}{6M}$ (since $f\le n$ and $f > C_f L_0 \ge C_x q_{x,-}/3$). Third, because $K_f + 1 > C_f L_0 \ge e^2 \mu_f$, the second term of~\eqref{eq:delta-fiber} is at most $f e^{-(K_f+1)} \le n(\frac{\delta}{3nM})^2 \le \frac{\delta}{6M}$. Thus $\delta_{\mathrm{fib}}(z) \le \delta/(3M)$ for each fiber, and a union bound over all $M$ fibers completes the proof.
\end{proof}

\paragraph{Derivation of the Per-Coordinate $\alpha$-Charge Bound~\eqref{eq:alpha-charge}.}
Let $W=(w_1,\ldots,w_t)$ be the survivors accepted by owner $u$ from a single coordinate fiber $F(z)$ in ascending order of $\dvec(x_u,\cdot)$. For any $1\le i < j\le t$, the acceptance of $w_j$ implies $\dvec(x_{w_i},x_{w_j}) > \dvec(x_u,x_{w_j})/\alpha$. Partition $\mathcal{X}\setminus\{x_u\}$ into dyadic annuli $\mathcal{A}_k = \{x : 2^{-k}D_x < \dvec(x_u,x) \le 2^{-(k-1)}D_x\}$. Any two survivors $w_i,w_j\in\mathcal{A}_k$ lie inside $\mathcal{B}(x_u,R)$ with $R = 2^{-(k-1)}D_x$ and are separated by more than $R/(2\alpha)$. Iterating the doubling property of $\mathcal{X}$ $m=\lceil\log_2(4\alpha)\rceil$ times covers $\mathcal{B}(x_u,R)$ with at most $P_\alpha = C_D^m$ balls of radius $2^{-m}R \le R/(4\alpha)$ (diameter at most $R/(2\alpha)$), so each ball contains at most one survivor and $|\mathcal{A}_k\cap W|\le P_\alpha$. Whenever $\min_{v\in F(z)\setminus\{u\}}\dvec(x_u,x_v) \ge \varepsilon_\alpha$, only annuli with $2^{-(k-1)}D_x \ge \varepsilon_\alpha$ can be nonempty (at most $1+\lceil\log_2(D_x/\varepsilon_\alpha)\rceil$ annuli), capping the total number of survivors from $F(z)$ at $a_\alpha$ in~\eqref{eq:alpha-charge}. Finally, conditioned on $X_u=x_u$ and the site contexts, $\Pr\{\min_{v\in F(z)\setminus\{u\}}\dvec(x_u,X_v) < \varepsilon_\alpha\} \le f_z q_{x,+}\varepsilon_\alpha^{d_{\mathcal{X}}}$; summing over all $u\in V$ and $z$ (using $\sum_z f_z = n$) shows that the minimum separation $\varepsilon_\alpha$ in~\eqref{eq:alpha-charge} holds globally with probability at least $1-\delta/3$.

\subsection{Proof of Coordinate-Quotient Connectivity (Theorem~\ref{thm:quotient-connectivity})}
\label{app:site-backbone}

\begin{proof}[Proof of Theorem~\ref{thm:quotient-connectivity}(b)]
We first show that the symmetric $\rho_c$-radius graph over the active sites $\{Z_i\in\Omega_c\}$ is connected (and nonempty) except with probability at most $\delta/6$:
\begin{itemize}[leftmargin=1.5em,nosep]
  \item \emph{Case 1 ($\rho_c \ge \operatorname{diam}(\Omega_c)$):} Every pair of active sites in $\Omega_c$ lies at distance at most $\operatorname{diam}(\Omega_c)\le\rho_c$, so the $\rho_c$-radius graph on $\{Z_i\in\Omega_c\}$ is a complete graph. Moreover, because $M\pi_c \ge \log(6/\delta)$, the probability that no site falls in $\Omega_c$ is $(1-\pi_c)^M \le e^{-M\pi_c} \le \delta/6$.
  \item \emph{Case 2 ($\rho_c < \operatorname{diam}(\Omega_c)$):} By the interior-area regularity in Assumption~\ref{asm:marked-sites} (applied at scale $t \le \rho_c/3 < \operatorname{diam}(\Omega_c)$), we can tile the connected region $\Omega_c$ into at most $N_{\mathrm{cell}} \le C_0 \lambda_2(\Omega_c)/\rho_c^2$ connected cells of diameter at most $\rho_c/3$ such that any two points in adjacent cells are within great-circle distance $\rho_c$, and each cell $\kappa$ has surface area at least $c_0 \rho_c^2$ (hence site mass at least $c_0 p_{s,-}\rho_c^2$). Under $M$ independent draws from $p_s$, a given cell $\kappa$ is empty with probability at most $(1 - c_0 p_{s,-}\rho_c^2)^M \le \exp(-M c_0 p_{s,-}\rho_c^2)$. Because $\pi_c \le p_{s,+}\lambda_2(\Omega_c)$, substituting $\rho_c$ from~\eqref{eq:alpha-charge} with $C_s^2 \ge \max\{\frac{2 p_{s,+}}{c_0 p_{s,-}},\,\frac{2C_0}{3}\}$ and taking a union bound over all $N_{\mathrm{cell}}$ cells ensures that every cell contains at least one site with failure probability at most $N_{\mathrm{cell}}(\delta/(3M))^2 \le \delta/6$. When every cell is occupied, the symmetric $\rho_c$-radius graph over $\{Z_i\in\Omega_c\}$ is connected~\cite{Penrose_1999,Penrose_2003}.
\end{itemize}

Next, for any fixed site $z\in\Omega_c$, the spherical ball $\mathcal{B}(z,\rho_c)\cap\mathcal{S}$ has surface area at most $c_1 \rho_c^2$, and hence $p_s$-mass at most $c_1 p_{s,+}\rho_c^2$. Since $\pi_c \ge p_{s,-}\lambda_2(\Omega_c)$, the expected number of other sites within distance $\rho_c$ of $z$ is at most $\mu_s = (M-1)c_1 p_{s,+}\rho_c^2 \le \frac{c_1 p_{s,+} C_s^2}{p_{s,-}}\log(3M/\delta)$. Applying the multiplicative Chernoff bound and a union bound over all $M$ sites shows that, for $C_h \ge \max\{e^2 c_1 (p_{s,+}/p_{s,-}) C_s^2,\, 2\}$, every site has at most $h_c = \lceil C_h\log(3M/\delta)\rceil$ other sites within distance $\rho_c$ except with probability at most $M e^{-(h_c+1)} \le \delta/6$.

Combining the $\alpha$-charge event (failure probability $\le \delta/3$), the cell-occupancy event ($\le \delta/6$), and the site-crowding event ($\le \delta/6$) gives an intersection event of probability at least $1 - 2\delta/3$. On this intersection, every active vertex $u\in\Aset(c)$ encounters at most $h_c$ other coordinates within distance $\rho_c$ and retains at most $a_\alpha$ edges from each of them (and from its own fiber), so $L_u(\rho_c) \le a_\alpha(h_c+1) \le B_g$. Therefore, $u$'s nearest-first geographic stream reaches every active coordinate within distance $\rho_c$ and retains at least one representative from each before exhausting $B_g$, proving Part~(b). Because this holds in both directions for every $\rho_c$-close pair of active coordinates, the coordinate quotient contains the connected symmetric $\rho_c$-radius site graph and is therefore strongly connected, proving Part~(a).
\end{proof}

\subsection{Explicit Finite-Sample Connectivity Bound}
\label{app:finite-bound}

Theorem~\ref{thm:end-to-end} sets the annulus scale $\varepsilon_\alpha$, the site linking radius $\rho_c$, and the crowding bound $h_c$ to their asymptotic logarithmic forms. Leaving these three scales as free parameters yields a non-asymptotic finite-sample failure bound in terms of the problem parameters.

Let $C_0, c_0, c_1 > 0$ be the spherical domain constants defined in Appendix~\ref{app:site-backbone}. For any $0 < \varepsilon \le \min\{r_x, D_x/2\}$, $0 < \rho \le \operatorname{diam}(\Omega_c)$, and integer $h_\rho \ge 0$, define
\[
\begin{aligned}
  a(\varepsilon) &= \min\!\left\{K_f,\; P_\alpha\left(1+\left\lceil\log_2(D_x/\varepsilon)\right\rceil\right)\right\},\\
  \mu_s(\rho) &= (M-1)c_1 p_{s,+}\rho^2,
\end{aligned}
\]
and the four-term failure sum
\begin{align}
\label{eq:finite-failure}
  \Pi(c) ={}& n^2 q_{x,+}\varepsilon^{d_{\mathcal{X}}}
  + C_0\frac{\lambda_2(\Omega_c)}{\rho^2}\exp\!\bigl(-M c_0 p_{s,-}\rho^2\bigr) \notag\\
  &+ M\left(\frac{e\mu_s(\rho)}{h_\rho+1}\right)^{h_\rho+1}
  + \sum_{i=1}^{M}\delta_{\mathrm{fib},i},
\end{align}
where $\delta_{\mathrm{fib},i}$ is given by~\eqref{eq:delta-fiber} at scale $r_i$ for non-singleton fibers and $\delta_{\mathrm{fib},i}=0$ for singleton fibers ($f_i=1$). The four terms in~\eqref{eq:finite-failure} correspond one-to-one to the four failure modes: (1)~violation of the per-coordinate charge bound $a(\varepsilon)$, (2)~an empty cell in the geographic $\rho$-tiling of $\Omega_c$, (3)~more than $h_\rho$ sites within radius $\rho$ of some site, and (4)~loss of internal connectivity in some multi-object fiber.

\begin{theorem}[Explicit Finite-Sample Connectivity Bound]
\label{thm:finite-bound}
Conditioned on any multiplicity vector $(f_1,\ldots,f_M)$, suppose that $0 < \rho \le \operatorname{diam}(\Omega_c)$, $3r_i \le r_x$ for all $i$, $h_\rho + 1 \ge \mu_s(\rho)$, and the quotas satisfy
\[
  B_g \ge a(\varepsilon)(h_\rho + 1),
  \qquad
  K_f + 1 \ge \max_{1\le i\le M}\bigl\{(f_i-1)q_{x,+}(3r_i)^{d_{\mathcal{X}}}\bigr\}.
\]
Then the complete active projection is strongly connected with probability at least
\[
  \Pr\bigl\{G[\Aset(c)] \text{ is strongly connected}\bigr\} \ge 1 - \Pi(c).
\]
\end{theorem}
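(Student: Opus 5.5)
The plan is to rerun the proof of Theorem~\ref{thm:end-to-end} with the three scales $\varepsilon$, $\rho$, $h_\rho$ left free and never replaced by their logarithmic forms. We define four bad events and bound each by one term of $\Pi(c)$:
\begin{itemize}[leftmargin=1.4em,nosep]
  \item $E_1$: some owner $u$ has a fiber member closer than $\varepsilon$ in vector space;
  \item $E_2$: some cell of a $\rho$-tiling of $\Omega_c$ is empty;
  \item $E_3$: some site has more than $h_\rho$ other sites within distance $\rho$;
  \item $E_4^{(i)}$: non-singleton fiber $i$ fails the cover or crowding event in the proof of Theorem~\ref{thm:fiber-connectivity}(a) at scale $r_i$.
\end{itemize}
A union bound then gives probability at least $1-\Pi(c)$ for the good event. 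On that event we argue deterministically that Conditions~(i) and~(ii) of Theorem~\ref{thm:factorization} hold, which yields strong connectivity of $G[\Aset(c)]$. Throughout we condition on the multiplicity vector $(f_1,\ldots,f_M)$. All bounds used below are per-fiber or per-site with the $f_i$ treated as fixed, so the conditioning is harmless.

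\textbf{The four probability terms.} For $E_1$ we reuse the derivation of~\eqref{eq:alpha-charge}. Conditioned on $x_u$, the probability that fiber $z$ has a member within $\varepsilon$ is at most $f_z q_{x,+}\varepsilon^{d_{\mathcal{X}}}$. Summing over $u$ and $z$ gives $\le n^2 q_{x,+}\varepsilon^{d_{\mathcal{X}}}$. Off $E_1$, the dyadic-annulus packing caps survivors per site at $a(\varepsilon)$.

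For $E_2$ we reuse Case~2 of the proof of Theorem~\ref{thm:quotient-connectivity}(b) with $\rho$ in place of $\rho_c$. The hypothesis $\rho\le\operatorname{diam}(\Omega_c)$ places us in the tiling regime, giving at most $C_0\lambda_2(\Omega_c)/\rho^2$ cells, each empty with probability $\le\exp(-Mc_0p_{s,-}\rho^2)$. This is the second term.

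For $E_3$ the number of other sites in $\mathcal{B}(z,\rho)$ is dominated by a binomial with mean $\le\mu_s(\rho)$. The multiplicative Chernoff bound $\Pr\{X\ge h+1\}\le (e\mu/(h+1))^{h+1}$ is valid because $h_\rho+1\ge\mu_s(\rho)$. A union over the $M$ sites gives the third term.

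For $E_4^{(i)}$ we invoke~\eqref{eq:delta-fiber} verbatim at scale $r_i$. Its premises are $3r_i\le r_x$ and $K_f+1\ge\mu_{f_i}$, and the stated $K_f$ hypothesis supplies the latter. Singleton fibers contribute $0$, and so does any fiber with $f_i-1\le K_f$, by the deterministic branch of that proof.

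\textbf{Deterministic consequences on the good event.} Off $E_4$, Lemma~\ref{lem:fiber-witness} applied at radius $3r_i$ makes every $G[F(Z_i)]$ strongly connected; Lemma~\ref{lem:fiber-closure} shows this holds for every active fiber, giving Condition~(i). Off $E_1\cup E_3$, every active $u$ meets at most $h_\rho$ other coordinates within distance $\rho$, and by~\eqref{eq:alpha-charge} each, including $u$'s own fiber, costs at most $a(\varepsilon)$ slots. Hence $L_u(\rho)\le a(\varepsilon)(h_\rho+1)\le B_g$. Proposition~\ref{prop:assembly-contract}(ii) then gives $u$ an edge into every site within $\rho$. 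Off $E_2$ the symmetric $\rho$-radius site graph on $\Omega_c$ is connected, and edges exist in both directions for every $\rho$-close pair of active sites. The coordinate quotient therefore contains a connected symmetric graph and is strongly connected, giving Condition~(ii).

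\textbf{Main obstacle.} The delicate step is the $E_2$ tiling at a free radius $\rho$. The earlier proof chose $\rho_c$ to make the cell count and the emptiness probability cooperate. Here we must check that the geometric tiling facts use only $\rho\le\operatorname{diam}(\Omega_c)$ together with Assumption~\ref{asm:marked-sites}(a) at scale $t=\rho/3$, and nothing else. Specifically, these facts are: the bound on the cell count, the minimum cell area $c_0\rho^2$, and the property that adjacent cells are $\rho$-close. I would verify this first and then assemble the union bound.
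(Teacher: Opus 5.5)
Your proposal is correct and takes essentially the same approach as the paper's proof. Both define the same four bad events matched one-to-one with the summands of $\Pi(c)$, take a union bound, and then check on the good event that fiber strong connectivity together with $L_u(\rho)\le a(\varepsilon)(h_\rho+1)\le B_g$ yields Conditions~(i) and~(ii) of Theorem~\ref{thm:factorization}. Two of your remarks are refinements the paper leaves implicit: that fibers with $f_i-1\le K_f$ contribute zero, and that the Case~2 tiling uses only $\rho\le\operatorname{diam}(\Omega_c)$ and Assumption~\ref{asm:marked-sites}(a).
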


\begin{proof}
By the derivations in Appendices~\ref{app:fiber-connectivity} and~\ref{app:site-backbone}, the four bad events fail with probabilities bounded by the four respective summands of $\Pi(c)$ in~\eqref{eq:finite-failure}. On the complement of their union, every coordinate charges at most $a(\varepsilon)$ edges, every site has at most $h_\rho$ neighbors within distance $\rho$, the $\rho$-radius graph on active sites is connected, and every active fiber is internally strongly connected. Because $B_g \ge a(\varepsilon)(h_\rho+1)$, both conditions of Theorem~\ref{thm:factorization} hold, and the union bound yields $1-\Pi(c)$. Substituting $\varepsilon=\varepsilon_\alpha$, $\rho=\rho_c$, and $h_\rho=h_c$ recovers Theorem~\ref{thm:end-to-end}.
\end{proof}

\subsection{Route Diameter on Broader Disks}
\label{app:route-length}

When a query disk is broader than a single geographic budget radius ($\operatorname{diam}(\Omega_c) > \rho_c$), the active projection $G[\Aset(c)]$ still possesses a bounded directed diameter: a path can first walk across the geographic site backbone in $O(\dgeo(s_u,s_v)/\rho_c)$ hops to reach the target's fiber $F(s_v)$, and then contract in vector space inside $F(s_v)$ in at most $1+\lceil\log_\alpha(D_x/\varepsilon_\alpha)\rceil$ hops.

\begin{proposition}[Two-Phase Route Length Bound]
\label{prop:descent}
\label{cor:route-length}
On the intersection event of Theorem~\ref{thm:end-to-end} (with $K_f\ge |F(s_v)|-1$), for every pair of active vertices $u,v\in\Aset(c)$ separated by geodesic distance $D$ within $\Omega_c$, the active projection $G[\Aset(c)]$ contains a directed path from $u$ to $v$ of total hop length at most
\[
  O\!\left(\frac{D}{\rho_c}\right) + 1 + \left\lceil\log_\alpha\!\left(\frac{D_x}{\varepsilon_\alpha}\right)\right\rceil.
\]
\end{proposition}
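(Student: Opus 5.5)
The plan is to build the path in two phases, mirroring the informal description before the statement. Phase one walks along the geographic site backbone from $s_u$ until it reaches a vertex in the target fiber $F(s_v)$. Phase two contracts toward $v$ in vector space inside $F(s_v)$. We work throughout on the intersection event of Theorem~\ref{thm:end-to-end}. On that event three facts hold. First, every active cell of a $\rho_c/3$-diameter tiling of $\Omega_c$ is occupied, by the proof of Theorem~\ref{thm:quotient-connectivity}(b). Second, every active vertex $w$ has $L_w(\rho_c)\le B_g$. Third, by part~(b) of that theorem, every active vertex retains a representative from every active site within distance $\rho_c$.

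\textbf{Phase one.} Fix the geodesic from $s_u$ to $s_v$, of length $D$. Because $\Omega_c$ is a spherical cap it is geodesically convex, so the geodesic stays inside $\Omega_c$. Mark points along it at spacing $\rho_c/3$. Each marked point lies in some occupied tiling cell of diameter at most $\rho_c/3$, so we can pick an active site $z_i$ within distance $\rho_c/3$ of the $i$-th marked point, with $z_0=s_u$ and $z_T=s_v$. Then consecutive sites satisfy $\dgeo(z_i,z_{i+1})\le\rho_c/3+\rho_c/3+\rho_c/3=\rho_c$, and $T\le\lceil 3D/\rho_c\rceil=O(D/\rho_c)$. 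We now build the vertex sequence inductively. Start at $w_0=u\in F(z_0)$. Given the current active vertex $w_i\in F(z_i)$, part~(b) guarantees that $\rowu(w_i)\cap F(z_{i+1})\neq\varnothing$. Take $w_{i+1}$ to be any such representative; it is active by Lemma~\ref{lem:fiber-closure}, so the edge $(w_i,w_{i+1})$ lies in $G[\Aset(c)]$. After $T$ hops we reach some $w_T\in F(s_v)$. If $s_u=s_v$, phase one is empty. This uses the representative property directly instead of the quotient-connectivity statement, which gives no hop bound.

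\textbf{Phase two.} With $w_T\in F(s_v)$ and $K_f\ge|F(s_v)|-1$, we apply Lemma~\ref{lem:fiber-witness} with $r=D_x$. Its hypothesis holds trivially, since every member has at most $|F(s_v)|-1\le K_f$ other members at any distance. The lemma gives a directed path inside $G[F(s_v)]\subseteq G[\Aset(c)]$ from $w_T$ to $v$ that contracts the distance to $x_v$ by a factor $\alpha$ per hop. By the minimum-separation part of the charge event, which gives $\varepsilon_z\ge\varepsilon_\alpha$ for distinct fiber members, this path has at most $\lceil\log_\alpha(D_x/\varepsilon_\alpha)\rceil$ hops. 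We add one extra hop to cover the case $w_T=v$ handled trivially, or the final step when the contraction argument's penultimate-vertex counting is off by one. This explains the ``$+1$'' in the bound. Concatenating the two phases gives the claimed total.

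\textbf{Main obstacle.} The delicate step is the chaining in phase one. The intermediate sites $z_i$ must be active, and each $\rho_c$-step must be certified by the representative property at every intermediate vertex, not just at $u$. The first point needs the occupied-cell tiling to cover neighborhoods of the geodesic inside $\Omega_c$, which is where the interior-area regularity of Assumption~\ref{asm:marked-sites}(a) and the convexity of the cap enter. If convexity is unavailable on the restricted domain $\mathcal{S}$, the fallback is a shortest path in the connected $\rho_c/3$-cell adjacency graph. Bounding its length by $O(D/\rho_c)$ would then need an extra regularity constant on $\Omega_c$, which we would absorb into the $O(\cdot)$.
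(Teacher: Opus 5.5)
Your proof is correct and follows the paper's two-phase skeleton. Phase one is a cell-by-cell walk along a curve in $\Omega_c$, using the retained-representative property of Theorem~\ref{thm:quotient-connectivity}(b) at every intermediate vertex. Phase two is intra-fiber contraction via Lemma~\ref{lem:fiber-witness} with $r=D_x$.

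The one real difference is how you enter $F(s_v)$. The paper stops the walk at a vertex $u'$ with $\dgeo(s_{u'},s_v)\le\rho_c$. It spends the $+1$ on the cross-fiber $\alpha$-witness hop from the proof of Corollary~\ref{cor:navigability}. So its entry point $w\in F(s_v)$ already satisfies $\dvec(x_w,x_v)\le\dvec(x_{u'},x_v)/\alpha$, and every hop from there on contracts toward $x_v$.

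You instead walk all the way to $s_v$, enter through an arbitrary retained representative, and restart the contraction there. The entry hop is then already counted in your $O(D/\rho_c)$ term. Your $+1$ is slack that absorbs the $1+\lfloor\log_\alpha(D_x/\varepsilon_\alpha)\rfloor$ count that the lemma's penultimate-vertex argument actually yields. Your second justification for the $+1$ is the right one; the case $w_T=v$ costs zero hops, not one.

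What your variant buys is a tighter fit to the hypothesis. It only needs owners inside $F(s_v)$ to scan their own fiber, i.e.\ $|F(s_v)|-1$ candidates, which is exactly what $K_f\ge|F(s_v)|-1$ provides. The paper's witness hop instead needs the foreign owner $u'\notin F(s_v)$ to reach $v$ while scanning all $|F(s_v)|$ members of that fiber.

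Two small remarks:
\begin{itemize}
\item Your convexity claim is false in general, since $\Omega_c=\mathcal{S}\cap\mathcal{B}(g,r)$ need not be geodesically convex. It is also unnecessary: the statement measures $D$ within $\Omega_c$, so you can mark points along a length-$D$ curve inside $\Omega_c$, as the paper does, with no extra regularity constant.
\item The occupied $\rho_c/3$-tiling is only established when $\rho_c<\operatorname{diam}(\Omega_c)$. In the other case $\dgeo(s_u,s_v)\le\rho_c$, and a single representative hop replaces phase one.
\end{itemize}
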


\begin{proof}
Because every cell of diameter at most $\rho_c/3$ in the tiling of $\Omega_c$ contains at least one active site and every active vertex retains a representative of every active coordinate within distance $\rho_c$ (Theorem~\ref{thm:quotient-connectivity}(b)), walking along the cells covering a geodesic from $s_u$ to $s_v$ in $\Omega_c$ produces a chain of $O(D/\rho_c)$ inter-coordinate hops from $u$ to an active vertex $u'\in\Aset(c)$ with $\dgeo(s_{u'}, s_v) \le \rho_c$. Because $\dgeo(s_{u'}, s_v) \le \rho_c$, the local witness argument in the proof of Corollary~\ref{cor:navigability} supplies one hop from $u'$ to a witness $w\in F(s_v)$ followed by at most $\lceil\log_\alpha(D_x/\varepsilon_\alpha)\rceil$ contracting hops inside $F(s_v)$ to reach $v$.
\end{proof}

%% file: appendix/additional_details.tex
\section{Additional System and Experimental Details}
\label{app:additional}

\subsection{Summary of Notation and System Overview}
\label{app:notation}
\label{app:overview}

Table~\ref{tab:notation} summarizes the notation used throughout the paper, and Figure~\ref{fig:overview} illustrates the six-module offline and online workflow of \method. Offline construction produces two persistent data structures: the $4nB$-byte table of vector-sorted rows and the $O(n)$ spatial grid seed directory. Online, the query disk $c_q$ defines the activity predicate evaluated during sequential row scanning, while the grid directory supplies $\tau$ initial active seeds inside $\Aset(c_q)$.

\begin{table}[t]
  \centering
  \caption{Summary of frequently used notation.}
  \label{tab:notation}
  \footnotesize
  \setlength{\tabcolsep}{2pt}
  \begin{tabular}{@{}lp{0.67\columnwidth}@{}}
    \toprule
    \textbf{Symbol} & \textbf{Description}\\
    \midrule
    \multicolumn{2}{l}{\emph{Data and query model} (Section~\ref{sec:preliminaries})}\\
    $\dataset$, $V$, $n$ & object collection, vertex set $\{1,\ldots,n\}$, and cardinality\\
    $\mathcal{X}$, $\mathcal{S}$ & $D$-dimensional vector space and spherical geographic domain\\
    $x_v$, $s_v$ & feature vector and geographic position of object $v$\\
    $\dvec(\cdot,\cdot)$, $\dgeo(\cdot,\cdot)$ & Euclidean vector distance and spherical great-circle distance\\
    $c=(g,r)$, $\Omega_c$ & disk condition (center $g$, radius $r$) and its spherical region\\
    $\Aset(c)$, $m$, $\sigma(c)$ & active object set, active size $|\Aset(c)|$, and selectivity $m/n$\\
    $q=(x_q,c_q,k)$, $T_k(q)$ & \problem query and its exact top-$k$ active ground-truth set\\
    \midrule
    \multicolumn{2}{l}{\emph{Recovery graph and two-scale theory} (Section~\ref{sec:theory})}\\
    $G=(V,E)$, $\rowu(u)$ & directed physical graph and vector-sorted neighbor row of $u$\\
    $C_u(c)$ & active entries in $\rowu(u)$ under disk $c$, i.e., $\rowu(u)\cap\Aset(c)$\\
    $\RecG_\infty(c)$ & complete active projection $G[\Aset(c)]$ of the stored rows\\
    $\RecG_{\rk}(c)$, $\rk$ & recovery graph (first $\rk$ active row entries) and recovery budget\\
    $\rk_{\mathrm{SC}}(c)$ & exact cut-rank budget threshold for strong connectivity\\
    $F(z)$, $M$, $\delta$ & coordinate fiber $\{v:s_v=z\}$, distinct site count, failure budget\\
    $K_f$, $d_{\mathcal{X}}$ & per-fiber candidate cap ($\min\{B_g,C_{\mathrm{tie}}\}$) and intrinsic dimension\\
    $\chi_u(z)$, $L_u(\rho)$ & new geographic edges from $F(z)$ and cumulative quota used within $\rho$\\
    $\varepsilon_\alpha$, $a_\alpha$, $h_c$, $\rho_c$ & vector separation scale, per-fiber quota cap, crowding bound, and linking radius\\
    \midrule
    \multicolumn{2}{l}{\emph{\method index construction and search} (Section~\ref{sec:index})}\\
    $B$, $B_v$, $B_g$ & fixed row width ($B=B_v+B_g$), vector quota, and geographic quota\\
    $K_{\mathrm{vec}}$, $\alpha$, $C_{\mathrm{tie}}$ & vector pool size, diversity parameter ($\alpha>1$), and fiber scan cap\\
    $L$, $\tau$ & beam search list size ($\mathit{ef}_{\mathrm{search}}$) and spatial entry seed count\\
    \bottomrule
  \end{tabular}
\end{table}

\begin{figure*}[t]
  \centering
  \includegraphics[width=0.92\textwidth]{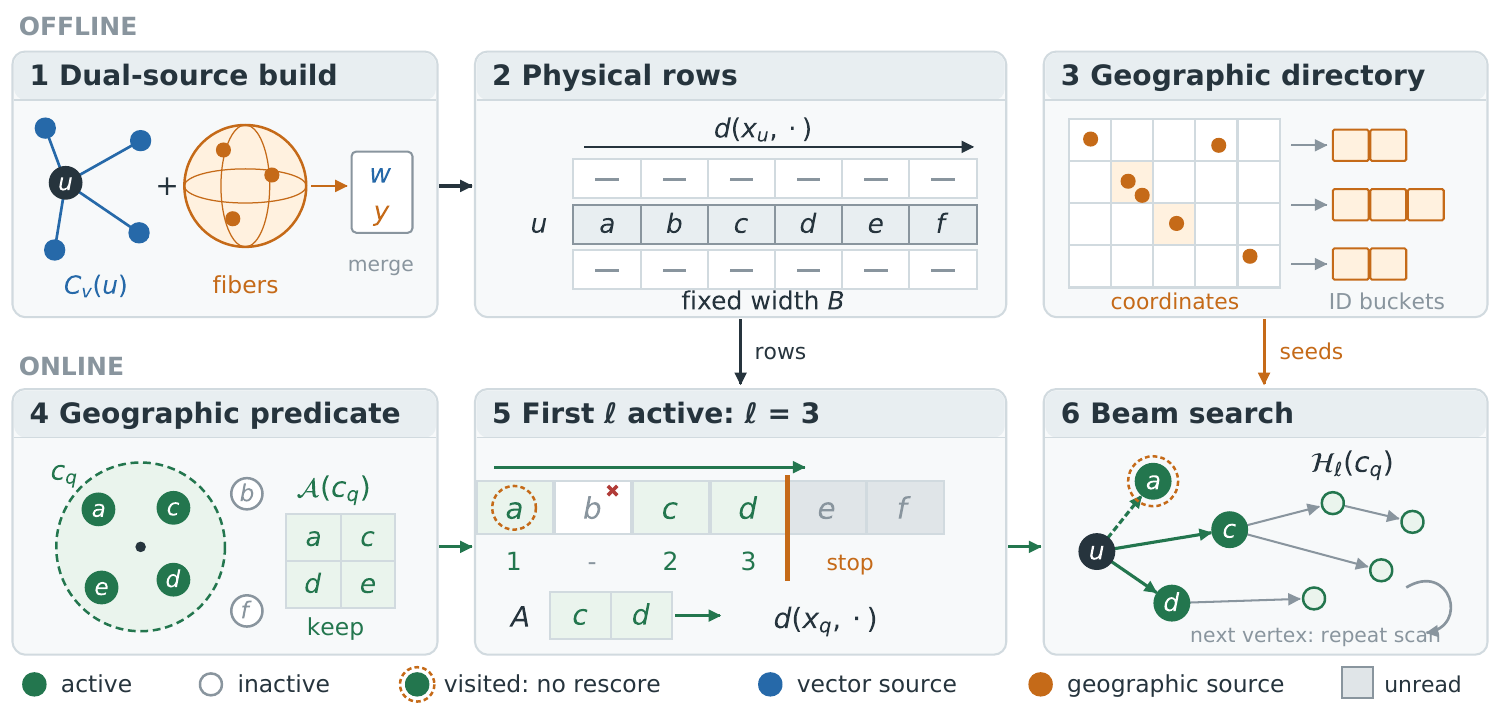}
  \caption{Six-module overview of \method. Offline, dual-source assembly produces fixed-width vector-sorted rows, while geographic coordinates populate a separate cell directory. Online, the disk predicate and stored row feed first-$\rk$-active scanning; the directory supplies entry seeds directly to beam search. The shared $\rk=3$ example counts visited $a$, skips inactive $b$, and stops at $d$, sending only $A=[c,d]$ to vector scoring. The beam traverses $\RecG_3(c_q)$, including edge $u\to a$, and repeats the scan at its next expanded vertex.}
  \Description{Six bounded modules use short local connectors. The upper modules show a vector graph plus spherical fibers, a fixed-width row matrix, and a geographic cell directory. The lower modules show a disk predicate, a compact first-three-active scan, and a recovery subgraph.}
  \label{fig:overview}
\end{figure*}

\subsection{Grid Directory and Entry Seeding}
\label{app:cell-index}
\label{app:entry-detail}

\begin{algorithm}[t]
\caption{\textsc{Seed}$(c_q)$: Active entry seed lookup from the grid directory}
\label{alg:seed}
\KwIn{disk condition $c_q=(g_q,r_q)$; grid directory $\mathcal{K}$; target seed count $\tau$}
\KwOut{active seed set $I(q)\subseteq\Aset(c_q)$ with $|I(q)|\le\tau$}
$I\gets\emptyset$;\quad $\rho\gets 0$\;
\While{$|I|<\tau$}{
  $\mathcal{K}_\rho\gets$ grid cells at Chebyshev ring distance $\rho$ from the cell of $g_q$\;
  \lIf{$\mathcal{K}_\rho=\emptyset$ \textbf{or} $\min_{\kappa\in\mathcal{K}_\rho}\mathrm{mindist}(\kappa,g_q)>r_q$}{\textbf{break}}
  \ForEach{$\kappa\in\mathcal{K}_\rho$ \textbf{with} $\mathrm{mindist}(\kappa,g_q)\le r_q$}{
    \ForEach{$v\in\mathcal{K}[\kappa]$}{
      \lIf{$\dgeo(s_v,g_q)\le r_q$}{$I\gets I\cup\{v\}$}
      \lIf{$|I|=\tau$}{\textbf{break}}
    }
    \lIf{$|I|=\tau$}{\textbf{break}}
  }
  $\rho\gets\rho+1$\;
}
\Return $I$\;
\end{algorithm}

Algorithm~\ref{alg:seed} presents the spatial entry seeding procedure. The grid directory partitions latitude $[-90^\circ,90^\circ]$ and longitude $(-180^\circ,180^\circ]$ into cells of resolution $w_\phi = w_\lambda = 0.5^\circ$ ($360\times 720 = 259{,}200$ cells) and stores vertex identifiers in a contiguous array indexed by a prefix-sum offset table~\cite{Gaede_1998,Samet_1984}. Construction requires two linear counting passes ($<0.6$~s at $n=10^7$). At query time, Algorithm~\ref{alg:seed} expands Chebyshev rings $\rho=0,1,\ldots$ around the cell containing $g_q$, where $\mathrm{mindist}(\kappa,g_q)$ is the exact minimum great-circle distance from $g_q$ to cell $\kappa$, halting once $\tau=16$ active seeds are collected or $\min_{\kappa\in\mathcal{K}_\rho}\mathrm{mindist}(\kappa,g_q) > r_q$. The zero-auxiliary-storage seeding alternative evaluated in Exp-5 replaces the grid directory with $N_a=256$ anchor vertices selected via farthest-point sampling on the 3D unit sphere and descends the physical \method graph using $\dgeo(\cdot,g_q)$ until $\tau=16$ active vertices inside $\Omega_{c_q}$ are reached.

\subsection{Spherical Numerics, Hyperparameters, and Work Scaling}
\label{app:stream-numerics}
\label{app:geopool-detail}
\label{app:row-detail}
\label{app:repro}
\label{app:extra-exp}

\begin{table}[t]
  \centering
  \caption{\method construction hyperparameters, held fixed across all datasets and selectivities.}
  \label{tab:construction-params}
  \footnotesize
  \setlength{\tabcolsep}{3.5pt}
  \begin{tabular}{@{}lrl@{}}
    \toprule
    \textbf{Parameter} & \textbf{Value} & \textbf{Role in construction / search}\\
    \midrule
    $K_{\mathrm{vec}}$ & $192$ & NN-Descent vector candidate pool size\\
    $B_v$ & $128$ & global-vector source quota\\
    $B_g$ & $256$ & nearest-first geographic stream quota\\
    $\alpha$ & $1.2$ & intra-source and intra-fiber diversity parameter\\
    $C_{\mathrm{tie}}$ & $4096$ & per-fiber scan cap ($K_f=\min\{B_g,C_{\mathrm{tie}}\}$)\\
    $B$ & $384$ & fixed row width ($B=B_v+B_g$)\\
    $\tau$ & $16$ & active entry seeds fetched per query\\
    \bottomrule
  \end{tabular}
\end{table}

\begin{table}[t]
  \centering
  \caption{Impact of recovery budget $\rk$ on \textsc{Overture-1M} at $\bar{\sigma}=10^{-1}$ and fixed $L=16$ ($\rk=\infty$ scans all $B=384$ entries).}
  \label{tab:budget}
  \footnotesize
  \setlength{\tabcolsep}{4pt}
  \begin{tabular}{rrrrr}
    \toprule
    $\rk$ & Recall@10 & Vector calcs / query & Row scans / query & QPS\\
    \midrule
    $8$      & $0.3730$ & $113$   & $657$   & $16{,}864$\\
    $16$     & $0.5896$ & $207$   & $1{,}380$ & $9{,}667$\\
    $32$     & $0.7788$ & $405$   & $2{,}264$ & $5{,}441$\\
    $64$     & $0.8576$ & $845$   & $3{,}047$ & $3{,}243$\\
    $96$     & $0.8691$ & $1{,}330$ & $3{,}649$ & $2{,}257$\\
    $128$    & $0.8695$ & $1{,}813$ & $4{,}229$ & $1{,}734$\\
    $160$    & $0.8707$ & $2{,}294$ & $4{,}823$ & $1{,}405$\\
    $192$    & $0.8725$ & $2{,}775$ & $5{,}433$ & $1{,}179$\\
    $\infty$ & $0.8738$ & $4{,}157$ & $7{,}249$ & $797$\\
    \bottomrule
  \end{tabular}
\end{table}

\begin{table}[t]
  \centering
  \caption{Minimum mean vector distance calculations required to reach $\mathrm{Recall}@10 \ge 0.99$; values in parentheses show the ratio to $m_{\mathrm{tar}}=\lceil\bar{\sigma}n\rceil$ (``---'' indicates $0.99$ recall is not reached within the tested $L\le 2048$ cap).}
  \label{tab:work}
  \footnotesize
  \setlength{\tabcolsep}{2pt}
  \newcommand{\wk}[2]{#1\,\ensuremath{{\scriptscriptstyle(#2)}}}
  \begin{tabular}{lcccccc}
    \toprule
    \textbf{Dataset} & $10^{-5}$ & $10^{-4}$ & $10^{-3}$ & $10^{-2}$ & $10^{-1}$ & $1$\\
    \midrule
    \textsc{Overture-1M}    & \wk{10}{1.00} & \wk{95}{.95} & \wk{851}{.85}
                            & \wk{4746}{.47} & \wk{11194}{.11} & \wk{21329}{.02}\\
    \textsc{Yfcc-Clip}      & \wk{18}{1.76} & \wk{73}{.73} & \wk{484}{.48}
                            & \wk{1977}{.20} & \wk{2673}{.03} & \wk{3543}{.004}\\
    \textsc{Sift-Geo-1M}    & \wk{10}{1.00} & \wk{51}{.51} & \wk{321}{.32}
                            & \wk{2292}{.23} & \wk{5470}{.05} & \wk{3685}{.004}\\
    \textsc{SpaceV-Geo-10M} & \wk{90}{.90} & \wk{822}{.82} & \wk{7731}{.77}
                            & \wk{74975}{.75} & --- & \wk{38734}{.004}\\
    \bottomrule
  \end{tabular}
\end{table}

Each coordinate $s=(\lambda,\phi)$ is pre-converted into its 3D unit vector $\mathbf{u}_s = (\cos\phi\cos\lambda,\allowbreak \cos\phi\sin\lambda,\sin\phi)$. At query time, the disk predicate $\dgeo(s_v,g_q)\le r_q$ in Algorithm~\ref{alg:expand} is evaluated through a 32-bit latitude--longitude bounding box (inflated outward by $1$~ULP via \texttt{nextafter}) followed by one 3D dot product $\langle\mathbf{u}_{s_v},\mathbf{u}_{g_q}\rangle \ge \cos(r_q/R_{\oplus})$.

\noindent\textbf{Dataset Preparation and Baseline Tuning Protocol.}
On \textsc{Overture-1M}, text embeddings are computed with \texttt{MiniLM-L6-v2} over concatenated name, category, and address fields from Overture Maps release~2026-06-17.0 and unit-normalized in \texttt{float32}; on \textsc{Yfcc-Clip}, 512-dimensional OpenCLIP ViT-B/32 features~\cite{clip} are used directly without post-normalization. On \textsc{Sift-Geo-1M} and \textsc{SpaceV-Geo-10M}, synthetic coordinates are generated independently of the vectors with longitude uniform on $[-125^\circ,-66^\circ]$ and $\sin(\phi)$ uniform on $[\sin 24^\circ,\sin 50^\circ]$ (uniform surface area across the contiguous US), using NumPy PCG64 seeds \texttt{2026072203} (SIFT) and \texttt{2026072204} (SpaceV). Table~\ref{tab:construction-params} lists the fixed construction parameters of \method. For CubeGraph~\cite{https://doi.org/10.48550/arxiv.2604.06616}, we evaluate three profiles $(\text{layers},\allowbreak M,\allowbreak \mathit{efC},\allowbreak \text{cross-degree})$ with minimum leaf size $50$: $P_0=(6,16,200,2)$, $P_1=(6,32,400,4)$, and $P_2=(8,32,400,4)$, selecting $P_0$ on \textsc{Overture-1M} and $P_2$ on the other three datasets. For Geo-Partition, we sweep grid cell widths $\{1^\circ,0.5^\circ,0.25^\circ\}$ with cell-local HNSW parameters $M=32,\mathit{efC}=200$, selecting $0.25^\circ$ on \textsc{Overture-1M} and \textsc{Sift-Geo-1M} and $0.5^\circ$ on \textsc{Yfcc-Clip} and \textsc{SpaceV-Geo-10M}. Vector-First uses HNSW with $M=32,\mathit{efC}=200$, and ACORN~\cite{acorn} uses $M=32,\gamma=12,M_\beta=32$. At query time, \method sweeps $\rk\in\{8,16,32,\allowbreak 64,96,128,\allowbreak 160,192,\infty\}$ and $12$ beam sizes $L\in[16,1024]$ (extended to $4096$ at $\bar{\sigma}=10^{-2}$ and $2048$ at $\bar{\sigma}=10^{-1}$ on \textsc{SpaceV-Geo-10M}); CubeGraph, Geo-Partition, and Vector-First sweep $\mathit{EF}=64\cdot 2^i$ ($0\le i\le 4$) and over-fetch $10\cdot 4^j$ ($0\le j\le 5$) with intermediate points near the Pareto knee, while ACORN sweeps $\mathit{EF}$.

\noindent\textbf{Detailed Recovery-Budget and Query-Work Tables (Exp-4).}
Table~\ref{tab:budget} reports the full recovery-budget sweep on \textsc{Overture-1M} at $\bar{\sigma}=10^{-1}$ and fixed $L=16$, and Table~\ref{tab:work} reports the minimum mean vector distance calculations required to reach $\mathrm{Recall}@10\ge 0.99$ across all $24$ workload cells (complementing Exp-4 in Section~\ref{sec:experiments}). In Table~\ref{tab:budget}, raising $\rk$ from $96$ to $\infty$ changes Recall@10 by only $0.0047$ while tripling vector-distance evaluations and reducing QPS by $2.8\times$; across all $24$ cells, only the two unfiltered ($\bar{\sigma}=1$) cells on \textsc{SpaceV-Geo-10M} ($0.105$) and \textsc{Overture-1M} ($0.016$) exhibit a maximum-recall gap above $0.0032$ between $\rk=64$ and $\rk=\infty$. In Table~\ref{tab:work}, on all three $10^6$ datasets, as $\bar{\sigma}$ grows from $10^{-5}$ to $1$, the evaluated fraction of $\Aset(c_q)$ drops monotonically from $\sim 1.0$ (with \textsc{Yfcc-Clip}'s $1.76$ ratio at $10^{-5}$ caused by boundary coordinate ties making $m_q > m_{\mathrm{tar}}$) to $0.004$--$0.02$. On \textsc{SpaceV-Geo-10M}, consistent with the uniform-site transition analyzed in Exp-1(3), reaching $0.99$ recall evaluates $75\%$--$77\%$ of the active set at $\bar{\sigma}\in\{10^{-3},10^{-2}\}$ (up to $L=4096$), and at $\bar{\sigma}=10^{-1}$ ($m_{\mathrm{tar}}=10^6$) search reaches $0.9804$ recall at $\rk=64, L=2048$ ($99{,}366$ evaluations, or $0.10\,m_{\mathrm{tar}}$; $0.9823$ at $\rk=\infty$) and $0.9561$ recall in $23{,}766$ evaluations ($0.024\,m_{\mathrm{tar}}$), before global vector edges take over at $\bar{\sigma}=1$ ($38{,}734$ evaluations, or $0.004\,m_{\mathrm{tar}}$).

%% file: references.bib
@article{hnsw,
  author  = {Yu. A. Malkov and D. A. Yashunin},
  title   = {Efficient and Robust Approximate Nearest Neighbor Search Using Hierarchical Navigable Small World Graphs},
  journal = {IEEE Transactions on Pattern Analysis and Machine Intelligence},
  volume  = {42},
  number  = {4},
  pages   = {824--836},
  year    = {2020},
  doi     = {10.1109/TPAMI.2018.2889473}
}

@article{nsg,
  author  = {Cong Fu and Chao Xiang and Changxu Wang and Deng Cai},
  title   = {Fast Approximate Nearest Neighbor Search with the Navigating Spreading-out Graph},
  journal = {Proceedings of the VLDB Endowment},
  volume  = {12},
  number  = {5},
  pages   = {461--474},
  year    = {2019},
  doi     = {10.14778/3303753.3303754}
}

@article{diskann,
  numpages = {11},

  title={Diskann: Fast accurate billion-point nearest neighbor search on a single node},
  author={Jayaram Subramanya, Suhas and Devvrit, Fnu and Simhadri, Harsha Vardhan and Krishnawamy, Ravishankar and Kadekodi, Rohan},
  journal={Advances in neural information processing Systems},
  volume={32},
  year={2019}
}

@inproceedings{filteredDiskANN,
  author    = {Siddharth Gollapudi and Neel Karia and Varun Sivashankar and Ravishankar Krishnaswamy and Nikit Begwani and Swapnil Raz and Yiyong Lin and Yin Zhang and Neelam Mahapatro and Premkumar Srinivasan and Amit Singh and Harsha Vardhan Simhadri},
  title     = {Filtered-{DiskANN}: Graph Algorithms for Approximate Nearest Neighbor Search with Filters},
  booktitle = {Proceedings of the ACM Web Conference 2023},
  pages     = {3406--3416},
  publisher = {ACM},
  address   = {New York, NY, USA},
  year      = {2023},
  doi       = {10.1145/3543507.3583552}
}

@article{acorn,
  author    = {Liana Patel and Peter Kraft and Carlos Guestrin and Matei Zaharia},
  title     = {{ACORN}: Performant and Predicate-Agnostic Search Over Vector Embeddings and Structured Data},
  journal   = {Proceedings of the ACM on Management of Data},
  volume    = {2},
  number    = {3},
  articleno = {120},
  numpages  = {27},
  year      = {2024},
  doi       = {10.1145/3654923}
}

@article{serf,
  author    = {Chaoji Zuo and Miao Qiao and Wenchao Zhou and Feifei Li and Dong Deng},
  title     = {{SeRF}: Segment Graph for Range-Filtering Approximate Nearest Neighbor Search},
  journal   = {Proceedings of the ACM on Management of Data},
  volume    = {2},
  number    = {1},
  articleno = {69},
  numpages  = {26},
  year      = {2024},
  doi       = {10.1145/3639324}
}

@article{ung,
  author    = {Yuzheng Cai and Jiayang Shi and Yizhuo Chen and Weiguo Zheng},
  title     = {Navigating Labels and Vectors: A Unified Approach to Filtered Approximate Nearest Neighbor Search},
  journal   = {Proceedings of the ACM on Management of Data},
  volume    = {2},
  number    = {6},
  numpages  = {27},
  year      = {2024},
  doi       = {10.1145/3698822}
}

@inproceedings{window,
  author    = {Joshua Engels and Ben Landrum and Shangdi Yu and Laxman Dhulipala and Julian Shun},
  title     = {Approximate Nearest Neighbor Search with Window Filters},
  booktitle = {Proceedings of the 41st International Conference on Machine Learning},
  series    = {Proceedings of Machine Learning Research},
  volume    = {235},
  pages     = {12469--12490},
  publisher = {PMLR},
  address   = {Vienna, Austria},
  year      = {2024}
}

@misc{rnsg,
  author       = {Zhiqiu Zou and Ziqi Yin and Rong-Hua Li and Hongchao Qin and Qiangqiang Dai and Guoren Wang},
  title        = {{RNSG}: A Range-Aware Graph Index for Efficient Range-Filtered Approximate Nearest Neighbor Search},
  howpublished = {arXiv preprint arXiv:2603.12913},
  year         = {2026},
  doi          = {10.48550/arXiv.2603.12913}
}

@misc{jag,
  author       = {Haike Xu and Guy Blelloch and Laxman Dhulipala and Lars Gottesb{\"u}ren and Rajesh Jayaram and Jakub {\L}{\k{a}}cki},
  title        = {{JAG}: Joint Attribute Graphs for Filtered Nearest Neighbor Search},
  howpublished = {arXiv preprint arXiv:2602.10258},
  year         = {2026},
  doi          = {10.48550/arXiv.2602.10258}
}

@misc{urng,
  author       = {Siyuan Liang and Ziqi Yin and Qi Zhang and Ronghua Li and Guoren Wang and Kaiwen Xue and Daiyin Wang and Xubin Li},
  title        = {Efficient Graph Indexing for Interval-Aware Vector Search},
  howpublished = {arXiv preprint arXiv:2606.11789},
  year         = {2026},
  doi          = {10.48550/arXiv.2606.11789}
}

@misc{fiberann,
  author       = {Thuong Dang},
  title        = {Fiber-Navigable Search: A Geometric Approach to Filtered {ANN}},
  howpublished = {arXiv preprint arXiv:2604.00102},
  year         = {2026},
  doi          = {10.48550/arXiv.2604.00102}
}

@inproceedings{nndescent,
  publisher = {ACM},
  address = {New York, NY, USA},
  doi = {10.1145/1963405.1963487},

  author       = {Wei Dong and
                  Moses Charikar and
                  Kai Li},
  title        = {Efficient k-nearest neighbor graph construction for generic similarity
                  measures},
  booktitle    = {WWW},
  pages        = {577--586},
  year         = {2011}
}

@article{sift1m,
  author  = {Herv{\'e} J{\'e}gou and Matthijs Douze and Cordelia Schmid},
  title   = {Product Quantization for Nearest Neighbor Search},
  journal = {IEEE Transactions on Pattern Analysis and Machine Intelligence},
  volume  = {33},
  number  = {1},
  pages   = {117--128},
  year    = {2011},
  doi     = {10.1109/TPAMI.2010.57}
}

@InProceedings{clip,
  title = {Learning Transferable Visual Models From Natural Language Supervision},
  author = {Radford, Alec and Kim, Jong Wook and Hallacy, Chris and Ramesh, Aditya and Goh, Gabriel and Agarwal, Sandhini and Sastry, Girish and Askell, Amanda and Mishkin, Pamela and Clark, Jack and Krueger, Gretchen and Sutskever, Ilya},
  booktitle = {Proceedings of the 38th International Conference on Machine Learning},
  pages = {8748--8763},
  year = {2021},
  editor = {Meila, Marina and Zhang, Tong},
  volume = {139},
  series = {Proceedings of Machine Learning Research},
  month = {18--24 Jul},
  publisher = {PMLR},
  address = {Virtual},
  url = {https://proceedings.mlr.press/v139/radford21a.html}
}

@article{yfcc100m,
  author  = {Bart Thomee and David A. Shamma and Gerald Friedland and Benjamin Elizalde
             and Karl Ni and Douglas Poland and Damian Borth and Li-Jia Li},
  title   = {{YFCC100M}: The New Data in Multimedia Research},
  journal = {Communications of the ACM},
  volume  = {59},
  number  = {2},
  pages   = {64--73},
  year    = {2016},
  doi     = {10.1145/2812802}
}

@InProceedings{pmlr-v176-simhadri22a,
  title = {Results of the NeurIPS{’}21 Challenge on Billion-Scale Approximate Nearest Neighbor Search},
  author = {Simhadri, Harsha Vardhan and Williams, George and Aum\"uller, Martin and Douze, Matthijs and Babenko, Artem and Baranchuk, Dmitry and Chen, Qi and Hosseini, Lucas and Krishnaswamny, Ravishankar and Srinivasa, Gopal and Subramanya, Suhas Jayaram and Wang, Jingdong},
  booktitle = {Proceedings of the NeurIPS 2021 Competitions and Demonstrations Track},
  pages = {177--189},
  year = {2022},
  editor = {Kiela, Douwe and Ciccone, Marco and Caputo, Barbara},
  volume = {176},
  series = {Proceedings of Machine Learning Research},
  month = {06--14 Dec},
  publisher = {PMLR},
  address = {Virtual},
  url = {https://proceedings.mlr.press/v176/simhadri22a.html}
}

@article{Pan_2024, title={Survey of vector database management systems}, volume={33}, ISSN={0949-877X}, url={http://dx.doi.org/10.1007/s00778-024-00864-x}, DOI={10.1007/s00778-024-00864-x}, number={5}, journal={The VLDB Journal}, publisher={Springer Science and Business Media LLC}, author={Pan, James Jie and Wang, Jianguo and Li, Guoliang}, year={2024}, month={July}, pages={1591–1615} }

@misc{https://doi.org/10.48550/arxiv.2005.11401,
  doi = {10.48550/ARXIV.2005.11401},
  url = {https://arxiv.org/abs/2005.11401},
  author = {Lewis, Patrick and Perez, Ethan and Piktus, Aleksandra and Petroni, Fabio and Karpukhin, Vladimir and Goyal, Naman and Küttler, Heinrich and Lewis, Mike and Yih, Wen-tau and Rocktäschel, Tim and Riedel, Sebastian and Kiela, Douwe},
  title = {Retrieval-Augmented Generation for Knowledge-Intensive NLP Tasks},
  publisher = {arXiv},
  year = {2020},
  copyright = {arXiv.org perpetual, non-exclusive license}
}

@article{DBLP:journals/pvldb/WangXY021,
  author       = {Mengzhao Wang and
                  Xiaoliang Xu and
                  Qiang Yue and
                  Yuxiang Wang},
  title        = {A Comprehensive Survey and Experimental Comparison of Graph-Based
                  Approximate Nearest Neighbor Search},
  journal      = {Proc. {VLDB} Endow.},
  volume       = {14},
  number       = {11},
  pages        = {1964--1978},
  year         = {2021},
  url          = {http://www.vldb.org/pvldb/vol14/p1964-wang.pdf},
  doi          = {10.14778/3476249.3476255},
  bibsource    = {dblp computer science bibliography, https://dblp.org}
}

@article{DBLP:journals/pacmmod/AziziEP25,
  author       = {Ilias Azizi and
                  Karima Echihabi and
                  Themis Palpanas},
  title        = {Graph-Based Vector Search: An Experimental Evaluation of the State-of-the-Art},
  journal      = {Proc. {ACM} Manag. Data},
  volume       = {3},
  number       = {1},
  pages        = {43:1--43:31},
  year         = {2025},
  url          = {https://doi.org/10.1145/3709693},
  doi          = {10.1145/3709693},
  bibsource    = {dblp computer science bibliography, https://dblp.org}
}

@article{DBLP:journals/is/AumullerBF20,
  author       = {Martin Aum{\"{u}}ller and
                  Erik Bernhardsson and
                  Alexander John Faithfull},
  title        = {ANN-Benchmarks: {A} benchmarking tool for approximate nearest neighbor
                  algorithms},
  journal      = {Inf. Syst.},
  volume       = {87},
  pages        = {101374},
  year         = {2020},
  url          = {https://doi.org/10.1016/j.is.2019.02.006},
  doi          = {10.1016/J.IS.2019.02.006},
  bibsource    = {dblp computer science bibliography, https://dblp.org}
}

@article{Chronis_2025, title={Filtered Vector Search: State-of-the-Art and Research Opportunities}, volume={18}, ISSN={2150-8097}, url={http://dx.doi.org/10.14778/3750601.3750700}, DOI={10.14778/3750601.3750700}, number={12}, journal={Proceedings of the VLDB Endowment}, publisher={Association for Computing Machinery (ACM)}, author={Chronis, Yannis and Caminal, Helena and Papakonstantinou, Yannis and Özcan, Fatma and Ailamaki, Anastasia}, year={2025}, month=Aug, pages={5488–5492} }

@article{DBLP:journals/vldb/SongYCYXLC25,
  author       = {Yitong Song and
                  Bin Yao and
                  Zhida Chen and
                  Xin Yang and
                  Jiong Xie and
                  Feifei Li and
                  Mengshi Chen},
  title        = {Efficient top-k spatial-range-constrained approximate nearest neighbor
                  search on geo-tagged high-dimensional vectors},
  journal      = {{VLDB} J.},
  volume       = {34},
  number       = {1},
  pages        = {14},
  year         = {2025},
  url          = {https://doi.org/10.1007/s00778-024-00894-5},
  doi          = {10.1007/S00778-024-00894-5},
  bibsource    = {dblp computer science bibliography, https://dblp.org}
}

@misc{https://doi.org/10.48550/arxiv.2604.06616,
  doi = {10.48550/ARXIV.2604.06616},
  url = {https://arxiv.org/abs/2604.06616},
  author = {Yang, Mingyu and Li, Wentao and Wang, Wei},
  title = {CubeGraph: Efficient Retrieval-Augmented Generation for Spatial and Temporal Data},
  publisher = {arXiv},
  year = {2026},
  copyright = {Creative Commons Attribution Share Alike 4.0 International}
}

@article{Xu_2024, title={iRangeGraph: Improvising Range-dedicated Graphs for Range-filtering Nearest Neighbor Search}, volume={2}, ISSN={2836-6573}, url={http://dx.doi.org/10.1145/3698814}, DOI={10.1145/3698814}, number={6}, journal={Proceedings of the ACM on Management of Data}, publisher={Association for Computing Machinery (ACM)}, author={Xu, Yuexuan and Gao, Jianyang and Gou, Yutong and Long, Cheng and Jensen, Christian S.}, year={2024}, month=Dec, pages={1–26} }

@article{Liang_2024, title={UNIFY: Unified Index for Range Filtered Approximate Nearest Neighbors Search}, volume={18}, ISSN={2150-8097}, url={http://dx.doi.org/10.14778/3717755.3717770}, DOI={10.14778/3717755.3717770}, number={4}, journal={Proceedings of the VLDB Endowment}, publisher={Association for Computing Machinery (ACM)}, author={Liang, Anqi and Zhang, Pengcheng and Yao, Bin and Chen, Zhongpu and Song, Yitong and Cheng, Guangxu}, year={2024}, month=Dec, pages={1118–1130} }

@article{Peng_2025, title={Dynamic Range-Filtering Approximate Nearest Neighbor Search}, volume={18}, ISSN={2150-8097}, url={http://dx.doi.org/10.14778/3748191.3748193}, DOI={10.14778/3748191.3748193}, number={10}, journal={Proceedings of the VLDB Endowment}, publisher={Association for Computing Machinery (ACM)}, author={Peng, Zhencan and Qiao, Miao and Zhou, Wenchao and Li, Feifei and Deng, Dong}, year={2025}, month={June}, pages={3256–3268} }

@article{Jiang_2025, title={DIGRA: A Dynamic Graph Indexing for Approximate Nearest Neighbor Search with Range Filter}, volume={3}, ISSN={2836-6573}, url={http://dx.doi.org/10.1145/3725399}, DOI={10.1145/3725399}, number={3}, journal={Proceedings of the ACM on Management of Data}, publisher={Association for Computing Machinery (ACM)}, author={Jiang, Mengxu and Yang, Zhi and Zhang, Fangyuan and Hou, Guanhao and Shi, Jieming and Zhou, Wenchao and Li, Feifei and Wang, Sibo}, year={2025}, month={June}, pages={1–26} }

@inproceedings{Guttman_1984, series={SIGMOD ’84}, title={R-trees: a dynamic index structure for spatial searching}, url={http://dx.doi.org/10.1145/602259.602266}, DOI={10.1145/602259.602266}, booktitle={Proceedings of the 1984 ACM SIGMOD international conference on Management of data  - SIGMOD ’84}, publisher={ACM Press}, address={Boston, Massachusetts, USA}, author={Guttman, Antonin}, year={1984}, pages={47--57}, collection={SIGMOD ’84} }

@article{Bentley_1975, title={Multidimensional binary search trees used for associative searching}, volume={18}, ISSN={1557-7317}, url={http://dx.doi.org/10.1145/361002.361007}, DOI={10.1145/361002.361007}, number={9}, journal={Communications of the ACM}, publisher={Association for Computing Machinery (ACM)}, author={Bentley, Jon Louis}, year={1975}, month={Sept}, pages={509–517} }

@article{Samet_1984, title={The Quadtree and Related Hierarchical Data Structures}, volume={16}, ISSN={1557-7341}, url={http://dx.doi.org/10.1145/356924.356930}, DOI={10.1145/356924.356930}, number={2}, journal={ACM Computing Surveys}, publisher={Association for Computing Machinery (ACM)}, author={Samet, Hanan}, year={1984}, month={June}, pages={187–260} }

@article{Gaede_1998, title={Multidimensional access methods}, volume={30}, ISSN={1557-7341}, url={http://dx.doi.org/10.1145/280277.280279}, DOI={10.1145/280277.280279}, number={2}, journal={ACM Computing Surveys}, publisher={Association for Computing Machinery (ACM)}, author={Gaede, Volker and Günther, Oliver}, year={1998}, month={June}, pages={170–231} }

@article{Vapnik_1971,
  author  = {Vapnik, V. N. and Chervonenkis, A. Ya.},
  title   = {On the Uniform Convergence of Relative Frequencies of Events to Their Probabilities},
  journal = {Theory of Probability and Its Applications},
  volume  = {16},
  number  = {2},
  pages   = {264--280},
  year    = {1971},
  doi     = {10.1137/1116025}
}

@article{Sauer_1972,
  author  = {Sauer, N.},
  title   = {On the density of families of sets},
  journal = {Journal of Combinatorial Theory, Series A},
  volume  = {13},
  number  = {1},
  pages   = {145--147},
  year    = {1972},
  doi     = {10.1016/0097-3165(72)90019-2}
}

@book{Matousek_2002,
  address = {New York, NY, USA},

  author    = {Matou{\v{s}}ek, Ji{\v{r}}{\'\i}},
  title     = {Lectures on Discrete Geometry},
  series    = {Graduate Texts in Mathematics},
  volume    = {212},
  publisher = {Springer},
  year      = {2002},
  doi       = {10.1007/978-1-4613-0039-7}
}

@book{Penrose_2003,
  address = {Oxford, UK},

  author    = {Mathew D. Penrose},
  title     = {Random Geometric Graphs},
  series    = {Oxford Studies in Probability},
  volume    = {5},
  publisher = {Oxford University Press},
  year      = {2003},
  doi       = {10.1093/acprof:oso/9780198506263.001.0001}
}

@article{Penrose_1999,
  author  = {Penrose, Mathew D.},
  title   = {On {$k$}-Connectivity for a Geometric Random Graph},
  journal = {Random Structures \& Algorithms},
  volume  = {15},
  number  = {2},
  pages   = {145--164},
  year    = {1999},
  doi     = {10.1002/(SICI)1098-2418(199909)15:2<145::AID-RSA2>3.0.CO;2-G}
}

@article{Balister_2005,
  author  = {Balister, Paul and Bollob{\'a}s, B{\'e}la and Sarkar, Amites and Walters, Mark},
  title   = {Connectivity of Random {$k$}-Nearest-Neighbour Graphs},
  journal = {Advances in Applied Probability},
  volume  = {37},
  number  = {1},
  pages   = {1--24},
  year    = {2005},
  doi     = {10.1239/aap/1113402397}
}

@article{PenroseYang_2025,
  author  = {Penrose, Mathew D. and Yang, Xiaochuan},
  title   = {Fluctuations of the Connectivity Threshold and Largest Nearest-Neighbour Link},
  journal = {The Annals of Applied Probability},
  volume  = {35},
  number  = {6},
  pages   = {3906--3941},
  year    = {2025},
  doi     = {10.1214/25-AAP2210}
}

@book{DaleyVereJones_2003,
  author    = {Daryl J. Daley and David Vere-Jones},
  title     = {An Introduction to the Theory of Point Processes},
  edition   = {2},
  volume    = {1},
  series    = {Probability and Its Applications},
  publisher = {Springer},
  address   = {New York},
  year      = {2003},
  doi       = {10.1007/b97277}
}

@article{Antoniou_2010,
  author  = {Vyron Antoniou and Jeremy Morley and Muki Haklay},
  title   = {Web 2.0 Geotagged Photos: Assessing the Spatial Dimension of the Phenomenon},
  journal = {Geomatica},
  volume  = {64},
  number  = {1},
  pages   = {99--110},
  year    = {2010},
  url     = {https://www.sciencedirect.com/science/article/pii/S1195103624006360}
}

@article{sieve,
  author  = {Zhaoheng Li and Silu Huang and Wei Ding and Yongjoo Park and Jianjun Chen},
  title   = {{SIEVE}: Effective Filtered Vector Search with Collection of Indexes},
  journal = {Proceedings of the VLDB Endowment},
  volume  = {18},
  number  = {11},
  pages   = {4723--4736},
  year    = {2025},
  doi     = {10.14778/3749646.3749725}
}

@article{cgif,
  author  = {Jiarui Luo and Chaoji Zuo and Dong Deng},
  title   = {{CGIF}: Combining Proximity Graphs and Inverted Files for Efficient Filtered Vector Search over Arbitrary Predicates},
  journal = {Proceedings of the VLDB Endowment},
  volume  = {19},
  number  = {11},
  pages   = {3663--3675},
  year    = {2026},
  doi     = {10.14778/3836663.3836716}
}

@inproceedings{nhq,
  author    = {Mengzhao Wang and Lingwei Lv and Xiaoliang Xu and Yuxiang Wang and Qiang Yue and Jiongkang Ni},
  title     = {An Efficient and Robust Framework for Approximate Nearest Neighbor Search with Attribute Constraint},
  booktitle = {Advances in Neural Information Processing Systems},
  volume    = {36},
  pages     = {15738--15751},
  year      = {2023}
}

@article{deg,
  author  = {Ziqi Yin and Jianyang Gao and Pasquale Balsebre and Gao Cong and Cheng Long},
  title   = {{DEG}: Efficient Hybrid Vector Search Using the Dynamic Edge Navigation Graph},
  journal = {Proceedings of the ACM on Management of Data},
  volume  = {3},
  number  = {1},
  pages   = {1--28},
  year    = {2025}
}

@inproceedings{vbase,
  author    = {Qianxi Zhang and Shuotao Xu and Qi Chen and Guoxin Sui and Jiadong Xie and Zhizhen Cai and Yaoqi Chen and Yinxuan He and Yuqing Yang and Fan Yang and Mao Yang and Lidong Zhou},
  title     = {{VBASE}: Unifying Online Vector Similarity Search and Relational Queries via Relaxed Monotonicity},
  booktitle = {17th USENIX Symposium on Operating Systems Design and Implementation (OSDI 23)},
  pages     = {377--395},
  year      = {2023}
}

@article{chen2021survey,
  author  = {Zhida Chen and Lisi Chen and Gao Cong and Christian S. Jensen},
  title   = {Location- and Keyword-Based Querying of Geo-Textual Data: A Survey},
  journal = {The VLDB Journal},
  volume  = {30},
  number  = {4},
  pages   = {603--640},
  year    = {2021},
  doi     = {10.1007/s00778-021-00661-w}
}
